\begin{document}
\newtheorem{theorem}{Theorem}
\newtheorem{corollary}{Corollary}
\newtheorem{conjecture}{Conjecture}
\newtheorem{definition}{Definition}
\newtheorem{lemma}{Lemma}

\newcommand{\define}{\stackrel{\triangle}{=}}

\pagestyle{empty}

\def\QED{\mbox{\rule[0pt]{1.5ex}{1.5ex}}}
\def\proof{\noindent{\it Proof: }}

\date{}
\title{On the Degrees of Freedom of Finite State Compound Wireless Networks - Settling a Conjecture by Weingarten et. al.}
\author{\normalsize  Tiangao Gou, Syed A. Jafar, Chenwei Wang\footnote{The ordering of authors is alphabetical.} \thanks{This work was supported by DARPA ITMANET under grant UTA06-793, by NSF
under grants 0546860 and CCF-0830809 and by ONR YIP under grant
N00014-08-1-0872.}\\
      {\small \it E-mail~:~\{tgou,syed,chenweiw\}@uci.edu} \\
       }
\maketitle \thispagestyle{empty}

\begin{abstract}
We explore the degrees of freedom (DoF) of three classes of finite
state compound wireless networks in this paper. First,  we study the
multiple-input single-output (MISO) finite state compound broadcast
channel (BC) with arbitrary number of users and antennas at the
transmitter. In prior work, Weingarten et. al. have found inner and
outer bounds on the DoF with 2 users. The bounds have a different
character. While the inner bound collapses to unity as the number of
states increases, the outer bound does not diminish with the
increasing number of states beyond a threshold value. It has been
conjectured that the outer bound is loose and the inner bound
represents the actual DoF. In the complex setting (all signals,
noise, and channel coefficients are complex variables) we solve a
few cases to find that the outer bound -- and not the inner bound --
of Weingarten et. al. is tight. For the real setting (all signals,
noise and channel coefficients are real variables) we completely
characterize the DoF, once again proving that the outer bound of
Weingarten et. al. is tight. We also extend the results to arbitrary
number of users. Second, we characterize the DoF of finite state
scalar (single antenna nodes) compound $X$ networks with arbitrary
number of users in the real setting. Third, we characterize the DoF
of finite state scalar compound interference networks with arbitrary
number of users in both the real and complex setting. The key
finding is that scalar interference networks and (real) $X$ networks
do not lose any DoF due to channel uncertainty at the transmitter in
the finite state compound setting. The finite state compound MISO BC
does lose DoF relative to the perfect CSIT scenario. However, what
is lost is only the DoF benefit of joint processing at transmit
antennas, without which the MISO BC reduces to an $X$ network.
\end{abstract}

\newpage
\section{Introduction}
Recent advances in network information theory -- such as the idea of
interference alignment -- have greatly widened the gap between the
theoretical capacity predictions of wireless networks and the
achievable rates with currently known practical schemes. The large
gap is reminiscent of the early days of information theory when
Shannon showed the theoretical capacity of a point to point channel
to be far beyond what was then thought achievable. The gap between
theory and practice, both then and now, can be viewed as a debate
between structured and random coding approaches. Remarkably, both
theory and practice have switched sides on the issue of random
versus structured codes. When Shannon theory advocated random
coding, practical schemes were exclusively focused on structured
codes that could be decoded with a reasonable complexity. That
debate was essentially won by theory as practical schemes like turbo
codes were found to mimic random coding schemes and thereby approach
theoretical limits with reasonable complexity. In the new debate,
the situation is reversed. Recent theoretical advances advocate
highly sophisticated structured codes while the achievable rates
considered practical draw largely on basic random coding arguments.
In the new debate it is not at all clear if theory will emerge as
the winner. The issue separating theory and practice is no longer
merely a matter of complexity at the receivers. Rather, it is not
known whether theoretical results that support structured codes
based on idealized assumptions such as perfect -- and sometimes
global -- channel knowledge at the \emph{transmitters}, will be
robust to channel uncertainty, at least to the extent that it is
fundamentally unavoidable in wireless networks. Since many of these
recent theoretical insights emerge out of the degrees of freedom
(DoF) perspective, a natural question is to explore the robustness
of the DoF results to channel uncertainty at the transmitters.

It is well known that the MIMO point to point channel and the MIMO
multiple access channel do not lose any DoF due to the lack of
channel state information at the transmitters (CSIT). Evidently,
this is because the combination of joint processing of all received
signals and perfect channel state information at the receiver (CSIR)
is able to compensate for the lack of CSIT. However, for most other
MIMO networks the DoF are not believed to be robust to channel
uncertainty at the transmiter. Consider, for example, the MIMO
broadcast channel with $M$ antennas at the transmitter and $N_1,
N_2$ antennas at the two receivers. With perfect channel knowledge
this channel has a total of $\min(M,N_1+N_2)$ DoF
\cite{Jafar_Fakhereddin}, which is the same as with perfect
cooperation at the receivers. However, in the ergodic time-varying
i.i.d. Rayleigh fading case for example, it is known that with no
CSIT, the MIMO broadcast channel loses DoF to the extent that
time-division between users is optimal
\cite{Huang_Jafar_Shamai_Vishwanath} for all points in the DoF
region. For the two user MIMO interference channel with $M_1, M_2$
antennas at the two transmitters and $N_1, N_2$ antennas at their
corresponding receivers, the DoF with perfect CSIT are characterized
in \cite{Jafar_Fakhereddin} as $\min(M_1+M_2, N_1+N_2,
\max(M_1,N_2),\max(M_2,N_1))$. With no CSIT the loss of DoF is
characterized in \cite{Huang_Jafar_Shamai_Vishwanath}.
Interestingly, the loss of DoF is shown to depend on the relative
number of antennas at the transmitters and receivers. For example,
if the transmitters have at least as many antennas as their
\emph{desired} receivers, $M_1\geq N_1, M_2\geq N_2$, then DoF are
lost to the extent that simple time-division between the two users
achieves all points in the DoF region. On the other hand, if the
receivers have at least as many antennas as their \emph{interfering}
transmitters, i.e. $N_1\geq M_2, N_2\geq M_1$ then there is no loss
of DoF due to the absence of CSIT. As in the multiple access
channel, concentration of antennas at the receivers allows the
benefits of joint signal processing under perfect channel knowledge,
which is sufficient to offset the limitations of no CSIT for the
entire DoF region. However, for most networks where the antennas are
not disproportionately located on the receivers -- such as networks
of single antenna nodes -- the DoF penalty due to the total lack of
CSIT can be quite severe. For example, under the i.i.d. fading
assumption (independent identically distributed across all
dimensions), any distributed network of single antenna nodes has
only 1 DoF in the absence of CSIT. This is because all received
signals are statistically equivalent and therefore any receiver can
decode all the messages. Since a receiver with only 1 antenna can
decode all messages, the sum DoF cannot be more than 1. The DoF loss
due to lack of CSIT is very significant for larger networks because
with full CSIT these networks have been shown to be capable of much
higher DoF. For example, an interference network with $K$
transmitter-receiver pairs is shown to have $K/2$ DoF in
\cite{Cadambe_Jafar_int}, and an $X$ network with $S$ source nodes
and $D$ destination nodes is shown to have $\frac{SD}{S+D-1}$ DoF in
\cite{Cadambe_Jafar_X}. Evidently, the transmitters' ability to
exploit the channel structure to selectively align signals -- the
key to the DoF of interference and $X$ networks -- is lost when CSIT
is entirely absent.

While perfect CSIT is an overly optimistic assumption, the complete
lack of CSIT is overly pessimistic. The collapse of DoF in the total
absence of CSIT, while sobering, is not a comprehensive argument
against the potential benefits of interference alignment in
particular or structured coding approaches in general. Hence the
need to investigate the behavior of DoF under partial channel
knowledge. Two kinds of approaches have been followed in this
regard.

The first approach investigates how the quality of CSIT should
improve as SNR increases, in order to retain the same DoF as
possible with perfect CSIT. A representative work that takes the
first approach is \cite{Caire_Jindal_Shamai} where the two user MISO
BC is investigated under the assumption that the channel vector of
one user (say user 1) is known perfectly but the channel vector of
the other user (user 2) can take one out of two values. The angular
separation $\theta$ between the two possible channel vectors of user
2 is chosen as a measure of the channel uncertainty and it is
investigated how $\theta$ should diminish as SNR approaches infinity
in order to retain the full (two) DoF possible with perfect CSIT. It
is shown that $\sin^2(\theta) = O(\mbox{SNR}^{-1})$ is required to
achieve two DoF in this setting. Other related works that follow
this approach include \cite{Yoo_Jindal_Goldsmith, Nihar_MIMOBC,
Ravindran_Jindal_MIMOBC}.

The second approach seeks the impact on DoF of a fixed amount of
channel uncertainty that is independent of SNR. References
\cite{Lapidoth_Shamai_Wigger, Weingarten_Shamai_Kramer} take this
approach for the two user MISO BC. While
\cite{Lapidoth_Shamai_Wigger} assumes channel uncertainty over a
space of non-zero probability measure under time-varying channel
conditions, \cite{Weingarten_Shamai_Kramer} investigates a finite
state compound channel setting where a specific channel state  is
drawn (unknown to the transmitter) from a finite set of allowed
states and the chosen state is held fixed throughout the duration of
communication. As large as this set may be, its finite cardinality
restricts the channel uncertainty at the transmitter to a space of
zero measure. While the two settings are quite different, the
conclusions arrived at in \cite{Lapidoth_Shamai_Wigger} and
\cite{Weingarten_Shamai_Kramer} bear striking similarities. For
example, with $M=2$ antennas at the transmitter, the best outer
bound on the DoF in both works is equal to $\frac{4}{3}$. In both
works it is conjectured that this outer bound is loose in general.
Lapidoth et. al. \cite{Lapidoth_Shamai_Wigger} conjecture that the
DoF in their setting should collapse to 1.  Remarkably, Weingarten
et. al. \cite{Weingarten_Shamai_Kramer} show the achievability of
$\frac{4}{3}$ DoF  when each user's channel can be in one of two
states\footnote{ To put this result into perspective with
\cite{Caire_Jindal_Shamai}, note that this is achieved without the
need for diminishing angular separation between the channel vectors
as SNR approaches infinity. Evidently the argument for
$\sin^2(\theta)=O(\mbox{SNR}^{-1})$, presented in
\cite{Caire_Jindal_Shamai}, is contingent on the premise that the
same DoF should be achieved as possible with perfect CSIT. }.
However, as the number of possible channel states for either user
(or both users) increases, Weingarten et. al.
\cite{Weingarten_Shamai_Kramer} also conjecture that the DoF in
their setting should collapse to 1.  Our main contribution in this
paper is to settle the latter conjecture in the negative.

The central concept involved in this work -- as well as in the
original work of Weingarten et. al. \cite{Weingarten_Shamai_Kramer}
--  is the idea of interference alignment, i.e. structuring signals
in such a way that undesired signals cast overlapping shadows where
they are not desired while they remain distinguishable where they
are desired. The idea originated out of the study of the $2$ user
$X$ channel \cite{MMK_isit,MMKreport, Jafar_Shamai}. The benefits of
overlapping interference spaces were first pointed out in the
context of the 2 user $X$ channel in \cite{MMK_isit, MMKreport} and
the concept of interference alignment was crystallized in
\cite{Jafar_Shamai}, where the first linear (based on
beamforming/zero-forcing over extended channel symbols) interference
alignment scheme was introduced. \cite{MMK_isit, MMKreport}
established the achievability of $\lfloor\frac{4}{3}M\rfloor$ DoF
for the $2$ user $X$ channel where all nodes are equipped with $M$
antennas each. For the same channel model, \cite{Jafar_Shamai}
derived the outer bound, DoF $\leq \frac{4}{3}M$, and also proved
its achievability  for $M>1$. For $M=1$ (single antenna at all
nodes), \cite{Jafar_Shamai} showed the achievability of
$\frac{4}{3}$ DoF  only for a time-varying/frequency-selective
channel model. Achievability of $\frac{4}{3}$ DoF for the constant
(not time-varying or frequency-selective) channel case, with complex
channel coefficients, was established in \cite{Cadambe_Jafar_Wang}
where the idea of asymmetric complex signaling was introduced to
achieve interference alignment. The constant channel case with real
channel coefficients was studied in \cite{Motahari_Gharan_Khandani}.
Building on the idea of  alignment of lattices scaled by
rational/irrational factors, originally introduced in
\cite{Etkin_Ordentlich}, it was shown by
\cite{Motahari_Gharan_Khandani} that the outer bound of
$\frac{4}{3}$ DoF is also achievable when the channel coefficients
are real. All these ideas -- zero-forcing, beamforming, channel
extensions, asymmetric complex signaling, rational/irrational scaled
lattice alignment -- are used for the achievable schemes in this
paper. In addition, we make use of the interference alignment
schemes used for the SIMO interference channel in
\cite{Gou_Jafar_SIMO}, which turns out to be the dual/reciprocal
network for the compound MISO BC. An interesting outcome of this
duality perspective is to clarify the role of alignment of vector
spaces at the \emph{transmitter} instead of the receivers.

A striking observation from the results summarized above, is the
recurrence of the fraction $\frac{4}{3}$  in the DoF
characterizations of both the 2 user $X$ channel, as well as the
compound MISO BC. As we find in this work,  this is not merely a
coincidence. With enough channel uncertainty, the finite state
compound BC -- regardless of the number of users or transmit
antennas -- loses the DoF benefits of joint signal processing at the
transmitter. From the DoF perspective, this makes the finite state
compound MISO BC equivalent to a finite state compound $X$ channel.
Moreover, the finite state compound $X$ channel  does not lose any
DoF compared to the perfect CSIT scenario (non-compound setting).
Thus, the DoF of the compound MISO BC end up being equal to the DoF
of the $X$ channel obtained by separating the transmit antennas.
Similar to $X$ networks, we find that $K$ user interference networks
also do not lose DoF in the finite state compound channel setting.
It should be noted that some of these results are found in the real
setting, i.e. all channel coefficients and signals and noise are
restricted to take only real values.

We present the system model in the next section. The main results
are presented as theorems in Section \ref{sec:complexmiso} and
\ref{sec:4} along with the main ideas needed for the proofs. The
detailed proofs are presented in the Appendix and the conclusions
are summarized in Section \ref{sec:5}.

\section{Compound MISO Broadcast Channel - Complex Setting}\label{sec:complexmiso}
A compound MISO  broadcast channel consists of a transmitter with
$M>1$ antennas and $K$ single antenna receivers. The channel vector
 $\mathbf{h}^{[k]}$ associated with user $k$  is drawn from a set
$\mathcal{J}_k$ with finite cardinality $J_k$. To avoid degenerate
cases, we assume the channel states  are drawn from a continuous
distribution. Thus, almost surely the channel states are generic,
e.g., the coefficients are algebraically independent. Once the
channel is drawn, it remains unchanged during the entire
transmission. While the transmitter is unaware of the specific
channel state realization, the receivers are assumed to have perfect
channel knowledge. The transmitter sends independent messages
$W^{[k]}$ with rates $R^{[k]}$ to receiver $k=1,2,\ldots, K$,
respectively. A rate tuple $(R^{[1]},R^{[2]},\ldots, R^{[K]})$ is
achievable if each receiver is able to decode its message with
arbitrary small error probability regardless of state (realization)
of the channel. The received signal of user $k$ corresponding to
channel state index $j_k$ is given by
\begin{eqnarray}
y^{[k]}_{j_k}(n)= \mathbf{h}^{[k]}_{j_k}{\bf
x}(n)+z^{[k]}_{j_k}(n)~~~~k=1,\ldots,K~~ j_k=1,\ldots, J_k
\end{eqnarray}
$\mathbf{h}^{[k]}_{j_k}=\left[h_{j_k1}^{[k]},\ldots,h_{j_kM}^{[k]}\right]$
is a $1\times M$ channel vector between the transmitter and receiver
$k$ under state $j_k$ where $j_k\in\{1,\ldots,J_k \}$.
$\mathbf{x}=\left[x_1(n),\ldots,x_M(n)\right]^T$ is an $M \times 1$
transmitted complex vector at time $n$ and satisfies the average
power constraint $E(\|\mathbf{x}\|^2)\leq P$. $z^{[k]}_{j_k}$
represents independent identically distributed (i.i.d.) zero mean
unit variance circularly symmetric complex Gaussian noise. The total
number of degrees of freedom $d$ is defined as
\begin{eqnarray}
d=\lim_{P \rightarrow \infty}\frac{R^{[1]}+\cdots+R^{[K]}}{\log P}
\end{eqnarray}
A two user compound broadcast channel with $M=2,J_1=J_2=J=2$ is
shown in Figure \ref{fig:2usercompoundbc}.

\begin{figure}[!t]
\centering
\includegraphics[width=3in]{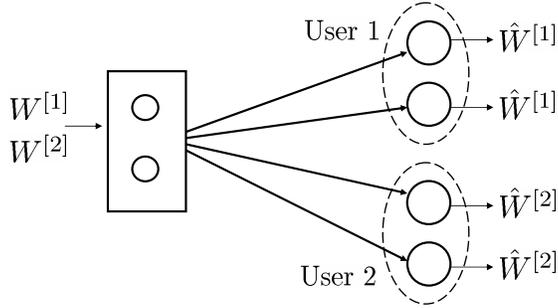}
\caption{$2$ User Compound Broadcast Channel with $J_1=J_2=2$}
\label{fig:2usercompoundbc}
\end{figure}

{\it Remark: } The compound broadcast channel is equivalent to a broadcast channel with
common messages. This can be seen by considering different states as
different users. Now instead of a  $K$ user compound broadcast
channel, we have a  $J_1+\cdots+J_K$ user broadcast channel with $K$
common messages, one for each group $k, \forall k=1,\ldots, K$, with
$J_k$ users.

\subsection{Degrees of Freedom of the Complex Compound MISO BC}
The degrees of freedom of the complex compound MISO BC are studied
by Weingarten, Shamai and Kramer in \cite{Weingarten_Shamai_Kramer}.
The exact DoF are found for some cases and conjectures are made for
more general scenarios. The achievability of the conjectured DoF is
established in \cite{Weingarten_Shamai_Kramer}. We start with the
first conjecture, re-stated here in the terminology of our system
model.

\subsection*{Case 1: $J_1=1,J_2=J\geq M$}
\begin{conjecture} \label{conjecture1}
(Weingarten et. al. \cite{Weingarten_Shamai_Kramer}) Consider a
complex compound BC with $K=2$ users, $M$ antennas at the
transmitter, and $J_1=1, J_2=J\geq M$ possible generic states for
users 1,2 respectively. Then the total number of DoF is
$1+\frac{M-1}{J}$, almost surely.
\end{conjecture}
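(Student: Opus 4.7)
The plan is to establish the DoF equality by proving both the achievability of $1 + (M-1)/J$ and a matching outer bound, exploiting the structural asymmetry that user 1's channel is fixed while user 2's channel varies over $J$ generic states. The form $(J + M - 1)/J$ naturally suggests working over blocks of $J$ channel uses, sending $J$ symbols for user 1 and $M-1$ for user 2.

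For achievability, I would design a vector-space alignment scheme over a $J$-length symbol extension. The $M-1$ symbols for user 2 are precoded in the $(M-1)$-dimensional null space of $\mathbf{h}^{[1]}$ with generic time profiles across the $J$ slots, which automatically prevents any interference leaking to user 1. The more delicate task is designing user 1's $J$ precoders so that their joint image at each of user 2's $J$ states collapses to only $J - M + 1$ dimensions, leaving precisely $M-1$ clean dimensions for user 2's signals. I would assign user 1's $i$-th symbol the precoder $\mathbf{e}_i \otimes \mathbf{n}_i \in \mathbb{C}^{MJ}$, where $\mathbf{e}_i$ isolates time slot $i$ and $\mathbf{n}_i$ is chosen from the generically one-dimensional intersection $\bigcap_{j \in S_i} \ker \mathbf{h}^{[2]}_j$, with $S_i$ an $(M-1)$-subset of $\{1, \ldots, J\}$ drawn from a balanced combinatorial design such as the cyclic shift $S_i = \{i, i+1, \ldots, i+M-2\} \pmod J$. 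Since each state $j$ belongs to exactly $M-1$ of the $S_i$'s, precisely $M-1$ of user 1's precoders vanish at state $j$ of user 2, producing the desired interference collapse; decodability at both users then follows from generic rank conditions that hold almost surely.

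For the converse, the goal is $d_1 + d_2 \leq 1 + (M-1)/J$, and this is where the main obstacle lies. Naive enhancements are too loose: giving user 1 access to some of user 2's states' observations yields only $d_1 + d_2 \leq M$, while trivial single-antenna bounds give $d_1 + d_2 \leq 2$, neither of which captures the compound constraint that user 2 must decode with a single codebook across all $J$ states. The tight converse must combine all $J$ state-specific Fano inequalities for user 2 with a dimension count over $J$-blocks, leveraging the fact that the stacked channel matrix $[\mathbf{h}^{[2]}_1; \ldots; \mathbf{h}^{[2]}_J]$ has rank $M$ almost surely. The cleanest route is likely via the reciprocity mentioned in the introduction: the compound MISO BC is the dual of a SIMO interference-type network whose DoF admits a tight outer bound via the signal-space counting techniques of \cite{Gou_Jafar_SIMO}, and transferring that bound back via duality should yield $1 + (M-1)/J$. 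A direct information-theoretic argument would instead introduce a genie revealing carefully chosen linear combinations of user 2's observations — enough to couple the $J$ states, but not so much as to saturate the signal space — together with an asymptotic rank count that translates the alignment-forcing effect (user 1's interference must span at least $J - M + 1$ dimensions at each of user 2's states) into a sum-rate inequality. Carrying out this last step rigorously, so that the $(M-1)/J$ factor emerges rather than the loose $M/J$ or $1$, is the heart of the proof.
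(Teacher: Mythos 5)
The statement you are attempting to prove is Conjecture~\ref{conjecture1} of Weingarten et al., and the entire point of this paper is that the conjecture is \emph{false} for $J>M$ (it is known to hold only at $J=M$). Theorem~\ref{theorem:firstconj} exhibits the counterexample $M=2$, $J_1=1$, $J_2=3$, where the exact DoF is $\frac{3}{2}$, strictly exceeding the conjectured $1+\frac{M-1}{J}=\frac{4}{3}$; and Theorem~\ref{thm:2usercompoundbc} shows that in the real setting the DoF is $1+\frac{M-1}{M}$ for \emph{every} $J\geq M$, i.e.\ it does not decay with $J$ at all. Your achievability half is fine --- it is essentially Weingarten et al.'s zero-forcing scheme with time-sharing over which $M-1$ states to null at each slot, and it does deliver $1+\frac{M-1}{J}$ --- but that scheme is not optimal, so the matching converse you defer to the end does not exist. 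You correctly identify the converse as ``the heart of the proof'' and leave it as a plan rather than an argument; the reason no amount of care with genies or duality will make the $\frac{M-1}{J}$ factor emerge is that the bound is simply not true.

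The precise point where your converse intuition fails is the parenthetical claim that user 1's interference must occupy at least $J-M+1$ dimensions at each of user 2's states, together with the implicit accounting that each of the $J$ states imposes an independent dimension cost at the transmitter. The transmitter can instead \emph{align} the directions it must protect: each state-$j$ realization of user 2 selects a combining vector ${\bf v}_j^{[2]}$, and user 1's signal need only avoid the span of the vectors ${\bf H}_j^{[2]T}{\bf v}_j^{[2]}$, $j=1,\ldots,J$, which can be collapsed to fewer than $J$ dimensions by solving for the ${\bf v}_j^{[2]}$ --- this is exactly condition (\ref{eq:simoinsight}) and the reciprocity with the many-to-one SIMO interference channel described in Section~\ref{sec:complexmiso} (and, for general $J$ in the real setting, the rational-dimension alignment of Lemma~\ref{lemma:ia}). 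The only valid converse available here is the $J=M$ outer bound of Weingarten et al., $d\leq 1+\frac{M-1}{M}$, which persists for $J>M$ because adding states cannot increase the DoF; the contribution of this paper is to show that this weaker bound, not the conjectured one, is tight.
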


Consider the MISO BC with 2 antennas at the transmitter. The case of
perfect CSIT corresponds to $J_1=J_2=1$. In this case it is easy to
see that 2 DoF can be achieved by zero forcing at the transmitter.
Specifically, the transmitter beamforms to user 1 in a direction
orthogonal to the channel vector of user 2, and beamforms to user 2
in a direction orthogonal to the channel vector of user 1. Since
neither user sees interference, they are able to achieve 1 DoF each.

Now, let us introduce some channel uncertainty with $J_1=1,
J_2=J=2$, i.e. user 1's channel is perfectly known to the
transmitter but user 2's channel can take one out of two known
values. In this case it is clear that the transmitter can still
choose a beamforming direction to user 2 that is orthogonal to the
known channel vector of user 1. However, it is not possible to pick
a beamforming vector for user 1 that is orthogonal to both the
possible channel vectors of user 2. This is because the transmitter
has only 2 antennas and the two possible channel vectors of user 2
span the entire two dimensional transmit space available to the
transmitter. It is shown by Weingarten et. al.
\cite{Weingarten_Shamai_Kramer} that the best thing to do in this
setting, from a DoF perspective, is to choose the beamforming vector
for user 1 to be orthogonal to the first possible channel vector of
user 2 for half the time and then choose it to be orthogonal to the
second possible channel vector of user 2 for the remaining half of
the time. In this manner, regardless of his state, user 2 is able to
see an interference free channel for half the time, thus achieving
0.5 DoF. At the same time, user 1 sees no interference from user 2's
signal and is able to achieve 1 DoF. Thus a total of $\frac{3}{2}$
DoF is achieved. Following the same idea, one can achieve
$1+\frac{M-1}{J}$ DoF for general values of $M, J$. Interestingly,
when $J$ is equal to $M$, \cite{Weingarten_Shamai_Kramer} shows that
this is optimal. Thus the compound BC loses DoF relative to the
perfect CSIT setting. For $J>M$ it is conjectured that
$1+\frac{M-1}{J}$ is still optimal. Note that if this conjecture
were to be true, this would mean that the DoF of the MISO BC
collapse to $1$ as the number of channel states of either user
increases.

To disprove this conjecture, we provide a specific counter example
in the following theorem.

\begin{theorem}\label{theorem:firstconj}
For the complex compound MISO BC with $K=2$ users, $M=2$ antennas at
the transmitter and $J_1=1$, $J_2=J=3$ generic channel states for
users $1,2$ respectively, the exact number of total DoF =
$\frac{3}{2}$, almost surely.
\end{theorem}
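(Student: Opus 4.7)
The theorem asserts that the exact total DoF equals $\tfrac{3}{2}$, so both the upper bound of $\tfrac{3}{2}$ and a matching achievability are needed. The upper bound is the specialization of Weingarten et al.'s outer bound to $M=2$, $J_1=1$, $J_2=3$ (which they conjectured would be loose). The substantive new content is therefore the achievability of $\tfrac{3}{2}$, which exceeds the conjectured inner bound of $\tfrac{4}{3}$, and this is where my plan concentrates.

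I would construct an interference-alignment scheme drawing on the asymptotic alignment ideas of Cadambe-Jafar and the Gou-Jafar scheme for the three-user SIMO interference channel, to which the compound MISO BC here is dual in the DoF sense. Over an $n$-symbol extension the transmit signal lies in $\mathbb{C}^{2n}$; I carry $d_1$ symbols for $W^{[1]}$ via a precoder $V_1\in\mathbb{C}^{2n\times d_1}$ and $d_2$ symbols for $W^{[2]}$ via $U_2\in\mathbb{C}^{2n\times d_2}$, with $(d_1+d_2)/n$ tending to $\tfrac{3}{2}$. The design must satisfy: (i) $U_2\subseteq\ker\tilde{\mathbf{H}}^{[1]}$, zero-forcing $W^{[2]}$ at user~1; (ii) $\dim\tilde{\mathbf{H}}^{[2]}_jV_1\le n-d_2$ for each $j=1,2,3$, aligning $W^{[1]}$'s interference at every state of user~2 into at most $n-d_2$ dimensions; and (iii) the decodability conditions, namely $\tilde{\mathbf{H}}^{[1]}V_1$ has full rank $d_1$ and the image of $U_2$ lies transverse to the aligned interference at every state.

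The heart of the construction is condition (ii). I would take the columns of $V_1$ to be monomials in appropriately chosen channel-derived matrices applied to a generic seed vector, exactly as in the Gou-Jafar alignment scheme; this forces the ratio $\dim\tilde{\mathbf{H}}^{[2]}_jV_1/\dim V_1$ to tend to $\tfrac{1}{2}$ as the exponent budget grows, making the target rates $d_1/n\to 1$ and $d_2/n\to\tfrac{1}{2}$ feasible. The hard part will be condition (iii): a $V_1$ rigid enough to satisfy alignment at all three states of user~2 need not retain full-rank image at user~1, and indeed naive finite-extension attempts (for instance $n=2$) collapse user~1's effective channel to a proper subspace. I expect to handle this by a Zariski-closure argument that exploits the algebraic independence of the channel coefficients: the rank-deficient configurations cut out a proper subvariety of the alignment variety, so generic channels avoid them. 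Once (iii) is in place, Gaussian signaling on the effective interference-free subchannels delivers $(d_1+d_2)/n\to\tfrac{3}{2}$ total DoF, matching the outer bound.
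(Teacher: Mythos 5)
Your outer bound and overall target are right, but the achievability plan has a structural flaw and also leans on machinery that does not apply here. The flaw is in your condition (ii): you ask that user~1's interference align \emph{at each receiver state of user~2}, i.e.\ $\mathrm{rank}(\tilde{\mathbf H}^{[2]}_j V_1)\le n-d_2$ for $j=1,2,3$. Since $\tilde{\mathbf H}^{[2]}_j=\mathbf h^{[2]}_j\otimes \mathbf I_n$ has an $n$-dimensional null space and these three null spaces pairwise intersect trivially (generically), the condition forces $\mathrm{col}(V_1)$ to meet each null space in dimension at least $d_1-(n-d_2)$, and these three intersections are in direct sum; hence $3(d_1-n+d_2)\le d_1$, i.e.\ $2d_1+3d_2\le 3n$, which is violated by $d_1/n\to 1$, $d_2/n\to\tfrac12$ ($3.5>3$). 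The feasible condition is the dual, \emph{transmitter-side} one: each state $j$ of user~2 selects a receive-combining vector $\mathbf v^{[2]}_j$, and the three pulled-back directions $\mathbf H^{[2]T}_j\mathbf v^{[2]}_j$ must collapse into a two-dimensional subspace of the transmit space so that user~1's precoder can avoid all of them while keeping two dimensions. That alignment, $\mathbf H^{[2]T}_3\mathbf v^{[2]}_3\in\mathrm{span}([\mathbf H^{[2]T}_1\mathbf v^{[2]}_1~\mathbf H^{[2]T}_2\mathbf v^{[2]}_2])$, is solved \emph{exactly in one shot} by $[\mathbf v^{[2]}_1;\mathbf v^{[2]}_2]=[\mathbf H^{[2]T}_1~\mathbf H^{[2]T}_2]^{-1}\mathbf H^{[2]T}_3\mathbf v^{[2]}_3$ after writing one complex channel use as a $2\times4$ real MIMO channel (asymmetric complex signaling). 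No symbol extension or asymptotic monomial construction is needed, and none would help: over extensions of a constant channel the effective matrices are scalar blocks, so Cadambe--Jafar-style monomial precoders do not generate the growing family of distinct directions your ratio argument requires (the paper's remark after its Theorem~2 proof makes exactly this point).

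Your condition (iii) also cannot be discharged by a routine Zariski-genericity argument. The real-ified channel matrices are not generic: they carry a quaternionic/rotation structure (e.g.\ $\mathbf H^{[1]T}=\beta_1\mathbf H^{[2]T}_1+\beta_2\mathbf H^{[2]T}_2$ with real $\beta_i$, and the solution operators $\mathbf B_m$ are rotation blocks tensored with identity), so the channel lives on a proper subvariety and one must check that the rank-deficient locus does not contain that subvariety. This is precisely the explicit full-rank verification that occupies most of the paper's proof; in the symmetric-signaling variant the check in fact \emph{fails} (the $\mathbf B_m$ become scalar multiples of the identity and the desired streams collapse), which shows the genericity shortcut is not available. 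So the correct skeleton is: converse from Weingarten et al.; one-shot asymmetric complex signaling giving a $2\times4$ real BC; transmitter-side alignment of the three protected directions into a plane; and a hands-on rank computation exploiting the quaternionic structure to confirm decodability of the $2+1$ real streams, for $\tfrac32$ complex DoF.
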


Since $1+\frac{M-1}{J}=\frac{4}{3}<\frac{3}{2}$ in this case,
Conjecture \ref{conjecture1} is disproved by Theorem
\ref{theorem:firstconj}. Interestingly, Theorem
\ref{theorem:firstconj} indicates that the total number of DoF does
\emph{not} decrease even as the number of possible channel states
for user 2 increases from 2 to 3. The reason we are able to achieve
more than the conjectured DoF in this case, can be attributed to
interference alignment schemes inspired by recently developed
insights on asymmetric complex signaling \cite{Cadambe_Jafar_Wang},
and a reciprocity/duality relationship with the interference
alignment problem for SIMO interference channels
\cite{Gou_Jafar_SIMO} that offers a clear perspective of
interference alignment at the \emph{transmitter}.

Asymmetric complex signaling involves reducing the complex setting
to a real setting by treating a complex number as a two dimensional
vector with real elements. In this setting, the scenario of Theorem
\ref{theorem:firstconj} can be seen as a transmitter with $4$
antennas, while each receiver has two antennas. Very importantly,
this mapping to the real setting introduces some structure in the
channel as complex channel coefficients are translated to
quaternionic matrix forms. Because of this structure the proof of
Theorem \ref{theorem:firstconj} involves some subtleties that may
distract from the main concept behind the interference alignment.
For simplicity of exposition, we defer the fine details of the proof
to Appendix \ref{app:firstconj} and highlight the main concepts in
this section. Specifically, in the following explanation we ignore
the special structure of the channel matrices and treat them as
generic MIMO channels between a $4$-antenna transmitter and
$2$-antenna receivers.

\begin{figure}[!t]
\centering
\includegraphics[width=5.5in]{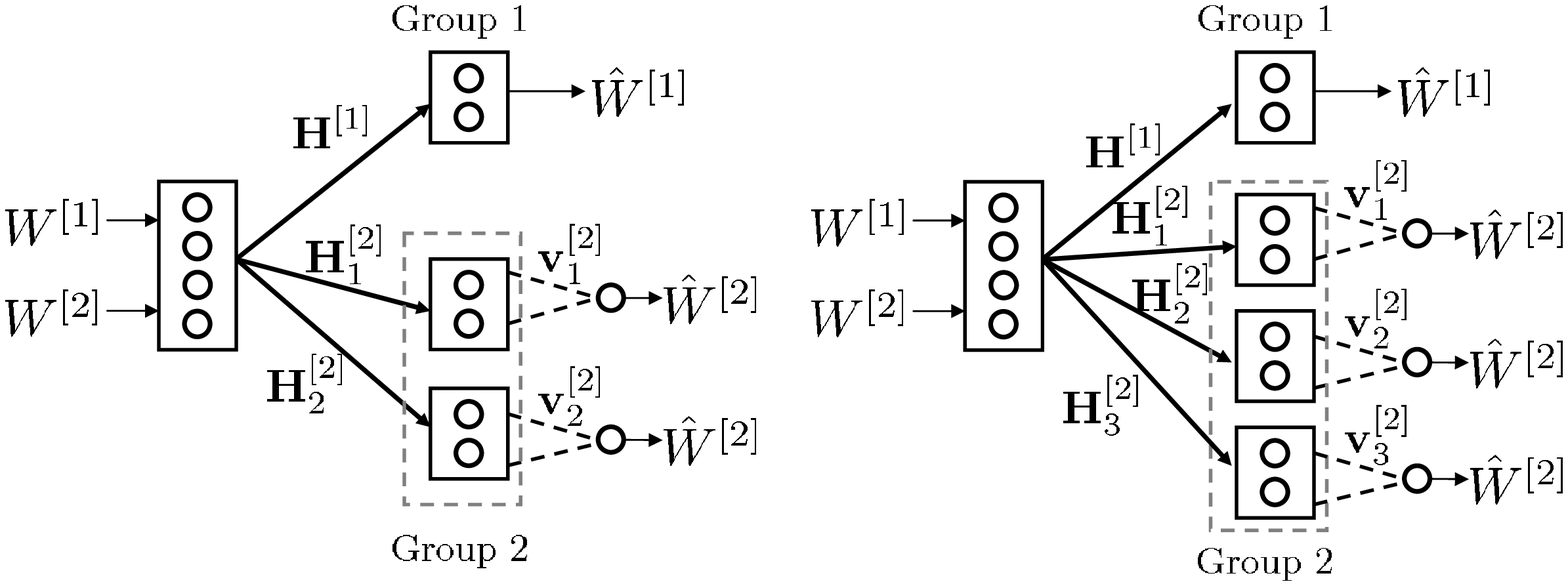}
\caption{$2$ user compound MIMO BC with $J_1=1,J_2=2$ and
$J_1=1,J_2=3$}\label{fig:compoundmimo}
\end{figure}

As mentioned before, the 2 user compound MIMO BC is equivalent to a
MIMO BC with two common messages, one for each of the two groups.
Group $k=1,2,$ consists of $J_k$ users corresponding to $J_k$ states
of user $k$ in the compound BC, and the users in the same group need
to decode the same message (see Figure {\ref{fig:compoundmimo}}).
Since there is only one receiver in group 1, we omit the index $j_1$
and replace $j_2$ with $j$ for simplicity. In addition, we use ${\bf
H}^{[1]}, {\bf H}_j^{[2]}$ to denote the channel matrix from the
transmitter to user 1 (group 1) and receiver $j$ in group 2,
respectively.

Consider first, an alternate achievable scheme for the case of
$J_1=1,J_2=2$. We let the two receivers in group 2 use arbitrarily
picked combining column vectors ${\bf v}_1^{[2]}$ and ${\bf
v}_2^{[2]}$, respectively, along which they require interference
free reception in order to achieve 1 DoF for their desired message.
In order to protect these group 2 receivers, the transmitter sends
user 1's message along the directions orthogonal to ${\bf
v}_1^{[2]T}{\bf H}_1^{[2]}$ and ${\bf v}_2^{[2]T}{\bf H}_2^{[2]}$.
Since the transmitter has $4$ antennas -- i.e., a $4$ dimensional
transmit signal space -- it is able to find a two dimensional
subspace that is orthogonal to the protected dimensions of user $2$.
This allows $2$ real streams to be sent to receiver 1 that do not
interfere with the chosen directions of any of the receivers in
group 2. On the other hand user 2's message will be sent along the
null space of ${\bf H}^{[1]}$. Since this is a rank $2$ matrix, it
also has a two dimensional null space along which user 2's message
can be transmitted. However, because each of group 2 receivers have
chosen only one receive dimension, only one interference free stream
is sent. Thus, a total of $3$ interference free streams are
delivered (2 streams to user 1 and 1 stream to every receiver of
group 2). Since these are real signals, the total DoF achieved is
$\frac{3}{2}$.

Now consider the case $J_2=3$. Suppose the third user in group 2
chooses a combining vector ${\bf v}_3^{[2]}$ for its
interference-free desired signal dimension. Now in order to protect
the chosen dimensions of group 2 receivers, the signals for user 1
should be transmitted orthogonal to the three $1\times 4$  vectors
${\bf v}_1^{[2]T}{\bf H}_1^{[2]}$, ${\bf v}_2^{[2]T}{\bf H}_2^{[2]}$
and ${\bf v}_3^{[2]T}{\bf H}_3^{[2]}$. Without alignment these three
vectors will span a three dimensional space that needs to be
protected from user 1's signal, leaving only 1 dimension at the
transmitter to send user 1's message. However, we wish to allow user
1 to still access 2 interference-free dimensions. To achieve this
goal, we need to make these three vectors span only a two
dimensional vector subspace as seen by the transmitter, i.e. the
three vectors should be linearly dependent. Equivalently, three
column vectors ${\bf H}_1^{[2]T}{\bf v}_1^{[2]},{\bf H}_2^{[2]T}{\bf
v}_2^{[2]},{\bf H}_3^{[2]T}{\bf v}_3^{[2]}$ should be linearly
dependent. Since ${\bf H}_1^{[2]T}$, ${\bf H}_2^{[2]T}$ and ${\bf
H}_3^{[2]T}$ are three $4\times 2$ generic matrices, the column
spaces of any two of the three matrices only have null intersection,
almost surely. In other words, any two of ${\bf H}_1^{[2]T}{\bf
v}_1^{[2]},{\bf H}_2^{[2]T}{\bf v}_2^{[2]}$ and ${\bf
H}_3^{[2]T}{\bf v}_3^{[2]}$ cannot be aligned along the same
direction. Therefore, we align the vector ${\bf H}_3^{[2]T}{\bf
v}_3^{[2]}$ in the space spanned by ${\bf H}_1^{[2]T}{\bf
v}_1^{[2]}$ and ${\bf H}_2^{[2]T}{\bf v}_2^{[2]}$. Mathematically,
we have
\begin{eqnarray}
{\bf H}_3^{[2]T}{\bf v}_3^{[2]}\in \textrm{span}([{\bf
H}_1^{[2]T}{\bf v}_1^{[2]}~{\bf H}_2^{[2]T}{\bf v}_2^{[2]}])\label{eq:simoinsight}
\end{eqnarray}

The above alignment is the key to achieving more than the
conjectured DoF in this case. Whereas most interference-alignment
schemes \cite{Cadambe_Jafar_int, Cadambe_Jafar_X, Jafar_Shamai,
Cadambe_Jafar_Wang} align interfering dimensions in a one-to-one
fashion, this is impossible in this case as pointed above. What is
needed instead, is an alignment of an interfering dimension within
the space spanned by several others. As it turns out, this problem
bears a striking resemblance to the interference alignment scheme
used in \cite{Gou_Jafar_SIMO} for the MIMO interference channel
where the interference vectors from one interferer are aligned in
the space spanned by the interference vectors of other interferers.

To illustrate the concept with an example, let us consider a $4$
user many-to-one interference network with $2$ antennas at each
transmitter and $4$ antennas at each receiver (see Figure
{\ref{4userifc}}). We claim user 1 can achieve $2$ DoF and other
users achieve 1 DoF simultaneously. Let ${\bf H}^{[ji]}$ denote the
$4\times 2$ channel from transmitter $i$ to receiver $j$,${\bf
V}^{[1]}$ denote the $2\times 2$ beamforming matrix for transmitter
1 and ${\bf v}^{[i]}$ denote the $2\times 1$ beamforming vectors of
transmitter $i=2,3,4$. Choose ${\bf v}^{[4]}$ randomly and let
\begin{eqnarray}
\left[\begin{array}{c}{\bf v}^{[2]}\\{\bf
v}^{[3]}\end{array}\right]={\left[{\bf H}^{[12]}~{\bf
H}^{[13]}\right]}^{-1}{\bf H}^{[14]}{\bf v}^{[4]}\Rightarrow{\bf
H}^{[14]}{\bf v}^{[4]}\in span(\left[{\bf H}^{[12]}{\bf
v}^{[2]}~{\bf H}^{[13]}{\bf v}^{[3]}\right]) \label{eqn8}
\end{eqnarray}

\begin{figure}[!b]
\centering
\includegraphics[width=4.5in, trim= 0 0 0 -40]{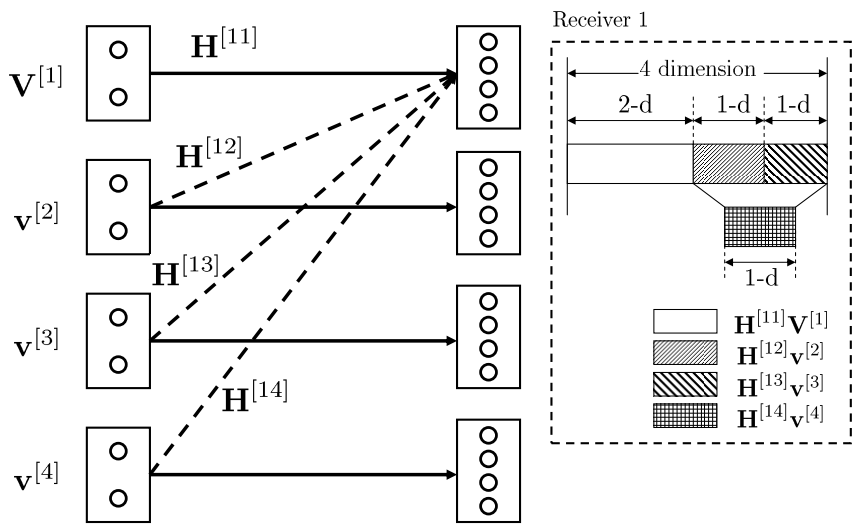}
\caption{$4$ user many-to-one interference network with 2 antennas
at transmitter, 4 antennas at receiver} \label{4userifc}
\end{figure}

\begin{figure}[!t]
\centering
\includegraphics[width=4in]{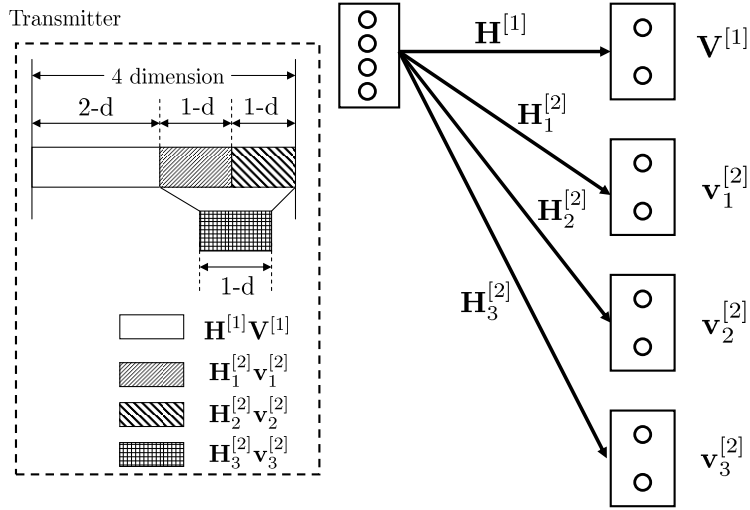}
\caption{$2$ compound broadcast channel with 4 antennas at
transmitter, 2 antennas at receiver, and $J_1=1, J_2=J=3$}
\label{compoundbcmimo}
\end{figure}

Since receivers $2$ to $4$ are interference free they can decode
their own message successfully. Now consider receiver 1. The spaces
spanned by the column vectors of ${\bf H}^{[12]}$ and ${\bf
H}^{[13]}$ only have null intersection. Thus the interference from
user $2$ and $3$ together occupies $2$ dimensions -- i.e., it does
not align. For the same reason, the interference from user 4 can
also not be aligned within the one dimensional interference from
user 2 or from user 3, individually. However,  the interference from
user $4$ is aligned in the 2 dimensional subspace space spanned by
interference from user $2$ and $3$ together. Thus, receiver $1$ sees
$4-2=2$ interference free dimensions for its intended signal
vectors. Comparing Figure {\ref{compoundbcmimo}} with Figure
{\ref{4userifc}}, the only difference is that the signal spaces are
aligned at the {\em transmitter} for compound broadcast channel
while they are aligned at the receiver for many-to-one interference
channel. In other words, from the viewpoint of signal vector
alignment, this two user compound broadcast channel is a reciprocal
version of the many-to-one interference network.

Using the insights from the above reciprocity, the solution of the
alignment problem in (\ref{eq:simoinsight}) is immediately obvious.
It is accomplished by setting
\begin{eqnarray}
\left[\begin{array}{c}{\bf v}_1^{[2]}\\{\bf
v}_2^{[2]}\end{array}\right]={\left[{\bf H}_1^{[2]T}~{\bf
H}_2^{[2]T}\right]}^{-1}{\bf H}_3^{[2]T}{\bf v}_3^{[2]}
\end{eqnarray}

The remaining details of the proof -- including the impact of
channel structure in this case -- are presented in Appendix
\ref{app:firstconj}. Finally, note that the converse is trivial here
because $\frac{3}{2}$ DoF are already shown to be optimal for
$J_2=2$ and DoF cannot increase with increasing channel uncertainty.

\subsection*{Case 2: $J_1=J_2=J\geq M$}
The second case captures the setting where the channel states of
both users are unknown to the transmitter.
\begin{conjecture} \label{conjecture2}
(Weingarten et. al. \cite{Weingarten_Shamai_Kramer}) Consider a
complex compound BC with $K=2$ users, $M$ antennas at the
transmitter, and $J_1=J_2=J\geq M$ possible generic states for users
1,2 respectively. Then the total number of DoF is
$\frac{2J}{2J-M+1}$, almost surely.
\end{conjecture}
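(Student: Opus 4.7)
The plan is to disprove this conjecture in the spirit of Theorem \ref{theorem:firstconj}, guided by the paper's announced theme that the compound MISO BC loses the transmit-side joint-processing benefit and therefore behaves, from the DoF standpoint, like the $X$ network obtained by separating its $M$ transmit antennas into $M$ single-antenna transmitters. This heuristic predicts the true sum DoF should equal the $X$-channel value $\frac{2M}{M+1}$, which is independent of $J$, coincides with the conjectured value exactly at the threshold $J=M$, and strictly exceeds $\frac{2J}{2J-M+1}$ for every $J>M$. My proof proposal is therefore an achievability argument targeting $\frac{2M}{M+1}$ complex DoF, coupled with a matching converse.

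First I would pass to the real setting using asymmetric complex signaling \cite{Cadambe_Jafar_Wang}. The equivalent real channel has $2M$ transmit antennas and $2$ real receive antennas at each of the $2J$ virtual receivers, organized into two groups of $J$ with a common message per group, with the caveat that the real-equivalent channel matrices inherit a quaternionic block structure from the complex scalars. Next I would apply the SIMO-IC reciprocity perspective used in Theorem \ref{theorem:firstconj}: each virtual receiver $j$ in group $k$ chooses a $2\times 1$ real combining vector $v_j^{[k]}$, and the other group's transmitted signal must lie in the joint null space of the pulled-back constraint vectors $H_j^{[k]T}v_j^{[k]}\in\mathbb{R}^{2M}$. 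Without alignment these $J$ vectors per group generically span $J$ of the $2M$ available dimensions, which produces exactly the conjectured collapse; the key is to align them inside a subspace of dimension as close as possible to $M$ by solving a system of linear constraints on the $v_j^{[k]}$'s through the channel matrices, generalizing the three-vector alignment $H_3^{[2]T}v_3^{[2]}\in\textrm{span}(H_1^{[2]T}v_1^{[2]},H_2^{[2]T}v_2^{[2]})$ used in Theorem \ref{theorem:firstconj}.

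Once pointwise alignment is established for both groups, I would lift the scheme to a block of channel uses to realize the required fractional DoF, invoking rational/irrational lattice alignment \cite{Motahari_Gharan_Khandani, Etkin_Ordentlich} to accommodate the non-integer value $\frac{2M}{M+1}$. For the converse I would either invoke the Weingarten et al.\ outer bound directly or argue that the compound constraint cannot exceed the separated-antenna $X$-channel upper bound, since artificially splitting the joint transmit array into $M$ cooperating single-antenna transmitters can only enlarge capacity.

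The main obstacle I anticipate is twofold. Algebraically, the quaternionic block structure of the real-equivalent channel destroys the generic-position assumption needed for alignment feasibility --- this was already a subtle point flagged in the proof of Theorem \ref{theorem:firstconj} --- and I would need to verify that the relevant column-space intersections and linear systems remain non-degenerate under this algebraic constraint. Simultaneously, because here \emph{both} groups' channels are uncertain (unlike in Conjecture \ref{conjecture1}), the alignment requirements for the two groups must be met jointly at the same transmit array without conflict, and I expect reconciling these two interlocking sets of constraints --- especially when $J$ is much larger than $M$, so that many more protection vectors must be compressed into $\sim M$ transmit dimensions --- to be the technical heart of the argument.
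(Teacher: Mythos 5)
You correctly read the situation: Conjecture \ref{conjecture2} is to be \emph{refuted}, not proved, and your mechanism for the minimal counterexample --- asymmetric complex signaling, a reciprocity view in which each virtual receiver picks a combining vector and the transmitter aligns the pulled-back constraint vectors ${\bf H}_j^{[k]T}{\bf v}_j^{[k]}$ into a low-dimensional subspace --- is exactly what the paper's Theorem \ref{theorem:secondconj} does, establishing $\frac{4}{3}$ DoF for $M=2$, $J=3$ (over $3$ channel extensions, i.e., a structured $6\times 12$ real MIMO BC) and killing the conjecture since $\frac{6}{5}<\frac{4}{3}$. The converse for that counterexample is, as in your first alternative, immediate from \cite{Weingarten_Shamai_Kramer}: $\frac{2M}{M+1}$ is tight at $J=M$ and DoF cannot increase with additional states.

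Two parts of your proposal do not survive contact with the details, however. First, you promise achievability of $\frac{2M}{M+1}$ complex DoF for \emph{every} $J>M$, but the quaternionic structure you flag as a nondegeneracy nuisance is in fact fatal to your linear scheme beyond the first step. Aligning one extra constraint block into the span of the others is a linear solve --- this is all the $M=2$, $J=3$ case requires --- but for $J\geq M+2$ the alignment forces an eigenvector condition of the form $\textrm{span}({\bf V}_3^{[k]})=\textrm{span}({\bf B}_1^{[k]-1}{\bf B}_3^{[k]}{\bf B}_4^{[k]-1}{\bf B}_2^{[k]}{\bf V}_3^{[k]})$, and the remark closing Appendix C shows that for the complex compound MISO BC the matrices ${\bf B}_i^{[k]}$ are rotations, so this product has no real eigenvectors almost surely. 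The paper therefore obtains the general $\frac{MN}{M+N-1}$ characterization only in the \emph{real} setting via rational-dimension lattice alignment (Theorems \ref{thm:2usercompoundbc} and \ref{thm:compoundX}), and Section \ref{sec:discussion} explains why that machinery does not port back to the complex compound BC either (desired and interfering links coincide, so the desired streams align with each other), yielding only $\frac{2M}{M+3}$ complex DoF in general; your ``lift to a block of channel uses with lattice alignment'' step is precisely the part that remains open. Second, your fallback converse is backwards: separating the transmit antennas produces an $X$ network that is a \emph{restriction} of the BC, so its outer bound does not upper-bound the BC --- this is why the paper proves the $\frac{MN}{M+N-1}$ outer bound for region $\mathcal{R}_4$ directly via a degraded-channel argument with an auxiliary random variable. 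Keep your first converse route (monotonicity in $J$ plus the tight $J=M$ point); drop the second.
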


For the case that $J=M$, this conjecture is shown to be tight in
\cite{Weingarten_Shamai_Kramer}. Note that the collapse of DoF to
unity as $J$ increases, also evident here, is already implied by
Conjecture \ref{conjecture1}.

An important observation here is that the achievability of the
$\frac{2J}{2J-M+1}$ DoF for $J\geq M$ already requires interference
alignment and is quite non-trivial. In fact this is the first
application of the concept of interference alignment outside the 2
user $X$ channel. We explain the need for interference alignment as
follows.

\begin{figure}[!t]
\centering
\includegraphics[width=4in]{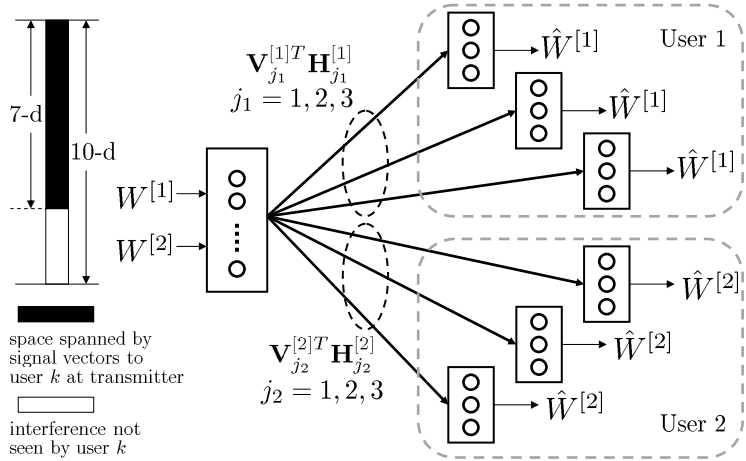}
\caption{$2$ user MISO compound broadcast channel, $M=2,J_1=J_2=3$,
5 channel extensions} \label{compoundj3}
\end{figure}

Consider a complex compound MISO broadcast channel with $K=2$ users,
2 antennas at transmitter, 1 antenna at each receiver, and
$J_1=J_2=J=3$ fading states. It is shown in Theorem 7 of
\cite{Weingarten_Shamai_Kramer} that a total of
$\frac{2J}{2J-M+1}=\frac{6}{5}$ DoF can be achieved in this case.
This is done by coding over 5 consecutive time slots to achieve 3
DoF for each user. With 5 time slots, the original $1\times 2$ MISO
channel ${\bf h}_{j_k}^{[k]}$ is converted to a $5\times 10$ MIMO
channel ${\bf H}_{j_k}^{[k]}$ but the channel has a block diagonal
structure, i.e. ${\bf H}_{j_k}^{[k]}={\bf h}_{j_k}^{[k]}\otimes {\bf
I}_{5\times 5}$, where $\otimes$ indicates the Kronecker product
operation. For fading state $j_k=1,2,3$ of user $k=1,2$, a $5\times
3$ linear combining matrix ${\bf V}_{j_k}^{[k]}$ is used whose
column vectors are respectively chosen from 3 columns of a $5\times
5$ identity matrix for user 1 and DFT matrix for user 2 such that as
seen by the transmitter the space spanned by column vectors of
$\left[{\bf H}_1^{[k]T}{\bf V}_1^{[k]}~{\bf H}_2^{[k]T}{\bf
V}_2^{[k]}~{\bf H}_3^{[k]T}{\bf V}_3^{[k]}\right]$ has 7 dimensions
(see Figure {\ref{compoundj3}}). In other words, from the
transmitter's point of view, the total number of dimensions to be
protected for user $k$ is equal to 7. Since the transmitter has
access to 10 dimensions, it can send 3 data streams to each user
along directions orthogonal to the dimensions occupied by the other
user. Thus, each user can get 3 interference free data streams and a
total of $\frac{6}{5}$ DoF is achieved per channel use. Note that if
${\bf V}_{j_k}^{[k]},j_k=1,2,3~k=1,2$ are generated randomly, the
column space spanned by $\left[{\bf H}_1^{[k]T}{\bf V}_1^{[k]}~{\bf
H}_2^{[k]T}{\bf V}_2^{[k]}~{\bf H}_3^{[k]T}{\bf V}_3^{[k]}\right]$
would have 9 dimensions. Therefore, interference alignment is the
key to the achievable scheme of \cite{Weingarten_Shamai_Kramer}.

It turns out, this is not the most efficient interference alignment
scheme. The following theorem disproves Conjecture \ref{conjecture2}
through another counter example.

\begin{theorem}\label{theorem:secondconj}
For the complex compound MISO BC with $K=2$ users, $M=2$ antennas at
the transmitter and $J_1=J_2=J=3$ generic channel states for each
user, the exact number of total DoF = $\frac{4}{3}$, almost surely.
\end{theorem}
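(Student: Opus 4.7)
\emph{Plan of proof.} The converse is immediate from monotonicity of the compound DoF in the number of states: any scheme reliable over three generic states is automatically reliable over any two of them, and Weingarten et~al.\ have already shown that DoF $=\frac{4}{3}$ for $J_1=J_2=2$, $M=2$. Consequently DoF $\leq\frac{4}{3}$ also when $J_1=J_2=3$, and the substance of the theorem is achievability.

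For achievability I would apply asymmetric complex signaling over $n=3$ complex channel extensions. In the real-equivalent picture the transmitter has a $12$-dimensional real signal space ($M=2$ complex antennas $\times$ $2$ real parts $\times$ $3$ extensions) and each of the six receivers (three states in each of the two groups) has a $6$-dimensional real observation space; the extended channel $\tilde{\mathbf H}_j^{[k]}$ is a $6\times 12$ matrix with block-diagonal Kronecker form $I_3\otimes\mathbf R_j^{[k]}$, where $\mathbf R_j^{[k]}$ is the $2\times 4$ quaternionic real image of the $1\times 2$ complex channel vector $\mathbf h_j^{[k]}$. Targeting $d_1=d_2=4$ real streams per user delivers $\frac{d_1+d_2}{2n}=\frac{4}{3}$ complex DoF per channel use, so the task reduces to designing two $12\times 4$ real precoders $\mathbf P^{[1]},\mathbf P^{[2]}$ that satisfy the interference alignment condition
\begin{equation*}
\mathrm{rank}\bigl(\tilde{\mathbf H}_j^{[k]}\mathbf P^{[\bar k]}\bigr)\leq 2 \qquad \forall\, j\in\{1,2,3\},\, k\in\{1,2\},
\end{equation*}
so that each receiver sees interference in only $2$ of its $6$ real dimensions and can extract $4$ interference-free dimensions carrying the desired message.

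Equivalently, the column span of each $\mathbf P^{[\bar k]}$ must meet each of the three $6$-dimensional null spaces $\mathrm{null}(\tilde{\mathbf H}_j^{[k]})$ in a $2$-dimensional subspace. I would construct $\mathbf P^{[1]}$ and $\mathbf P^{[2]}$ by invoking, symmetrically for both groups, the reciprocity/back-projection recipe already used in Theorem~\ref{theorem:firstconj}: within each group pick the receive combiners so that the third back-projected vector lies in the span of the first two, exactly as in (\ref{eq:simoinsight}), and exploit the three-fold Kronecker structure of the extended channel to do this consistently for both users at once. The hard part will be to verify that this alignment can actually be enforced despite the quaternionic structure of $\mathbf R_j^{[k]}$ (which blocks a naive generic-matrix Schubert argument) and that, simultaneously, the desired signal matrices $\tilde{\mathbf H}_j^{[k]}\mathbf P^{[k]}$ remain rank $4$ and linearly independent of the now rank-$2$ interference at every one of the six receivers. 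As in the proof of Theorem~\ref{theorem:firstconj}, I expect both requirements to reduce to the non-vanishing of certain polynomials in the underlying complex channel coefficients, which holds almost surely under the continuous-distribution assumption on the states.
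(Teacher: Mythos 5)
Your plan follows essentially the same route as the paper's proof in Appendix B: the same converse via monotonicity from the $J=2$ result of Weingarten et al., the same asymmetric complex signaling over three channel extensions giving a $12$-dimensional real transmit space and $6$-dimensional receive spaces, four real streams per user, and the same back-projection alignment (pick the third combiner ${\bf V}_3^{[k]}$ randomly and solve $[{\bf V}_1^{[k]};{\bf V}_2^{[k]}]=[{\bf H}_1^{[k]T}~{\bf H}_2^{[k]T}]^{-1}{\bf H}_3^{[k]T}{\bf V}_3^{[k]}$ so that the third back-projected subspace lies in the span of the first two). The one step you defer --- verifying that the desired signal matrices remain full rank despite the Kronecker/rotation structure --- is precisely where the paper's appendix spends its effort, reducing it to the almost-sure triviality of $\mathrm{span}({\bf V}_3^{[2]})\cap\mathrm{span}({\bf B}_1{\bf V}_3^{[1]})\cap\mathrm{span}({\bf B}_2{\bf V}_3^{[1]})$, so your expectation that everything reduces to a non-degenerate polynomial condition is borne out.
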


Since $\frac{2J}{2J-M+1}=\frac{6}{5}<\frac{4}{3}$, Conjecture
\ref{conjecture2} is disproved by Theorem \ref{theorem:secondconj}.
Once again, Theorem \ref{theorem:secondconj} indicates that the
total number of DoF does \emph{not} decrease as the number of
possible channel states $J$ for each user increases from 2 to 3.

The proof of Theorem \ref{theorem:secondconj} is based on asymmetric
complex signaling over multiple channel uses and is deferred to
Appendix \ref{app:secondconj}. Except for the detailed nuances
required to deal with the channel structure imposed by channel
extensions, the essence of the proof follows from the same
interference alignment ideas outlined in the previous section for
Theorem \ref{theorem:firstconj}.

Theorem \ref{theorem:firstconj} and \ref{theorem:secondconj} present
only specific counter examples to disprove Conjectures
\ref{conjecture1} and \ref{conjecture2}. In both cases the DoF are
shown to remain unchanged as the number of possible states for one
or both users is increased by one. From these results it is still
not clear what happens as the number of states continues to
increase. The problem with extending the results above lies with the
limitations of the linear interference alignment approach when
channel values are held constant. The difficulty is similar to the
problem of characterization of the DoF of the $K$ user interference
channel. Linear alignment solutions were shown in that case to
achieve the $\frac{K}{2}$ outer bound under
time-varying/frequency-selective channel conditions
\cite{Cadambe_Jafar_int}. For the constant channel case, even though
asymmetric complex signaling was able to achieve $\frac{6}{5}$ (i.e.
greater than one) DoF, it was still away from the $\frac{K}{2}$
outer bound. Interestingly, lattice alignment schemes were needed to
show the achievability of $\frac{K}{2}$ DoF almost surely
\cite{Etkin_Ordentlich, Motahari_Gharan_Khandani}. For the compound
MISO BC as well, it turns out that complete DoF characterizations
can be found using lattice alignment schemes in the real setting,
i.e. where all channel coefficients, signals and noise terms are
real variables.

\section{Compound MISO Broadcast Channel - Real Setting}{\label{sec:4}}
We consider the real compound broadcast channel for which the
channel input-output relationship is similar to the complex case but
with the channel, input signal and noise terms restricted to real
values. In the real setting, the total number of degrees of freedom,
$d$, is defined as
\begin{eqnarray}
d=\lim_{P \rightarrow
\infty}\frac{R^{[1]}+\cdots+R^{[K]}}{\frac{1}{2}{\log P}}
\end{eqnarray}
Note that in the previous section, we solved the interference
alignment problem for the complex compound BC only by viewing the
complex variables as two dimensional real vectors. Therefore it may
not be clear why the real setting should be considered separately.
The answer lies in the structure of the channel. Translating the
complex setting into the real setting, as mentioned before, imposes
a special structure on the channel because complex scalar
coefficients get replaced with quaternionic matrices. However, in
this section we will assume \emph{generic} real channel
coefficients, i.e. the channel coefficients will, almost surely, be
algebraically independent over rationals. The connections to the
complex setting will be discussed in Section \ref{sec:discussion}.

\subsection{Interference Alignment in rational dimensions}

It is well known that a multi-dimensional signal space provides
multiple independent signalling dimensions. By communicating along
linearly independent (beamforming) vectors, different data streams
can be separated. Moreover, in a multiuser communication network
where interference exists, linear independence between desired
signal and interference can be used to separate them as well. The
number of DoF is essentially equal to the number of interference
free dimensions. Thus, to maximize the achievable DoF, we should
minimize the dimension occupied by interference. This is the idea of
linear interference alignment, which is exploited to align
interference in signal space provided by spatial/time/frequency
dimensions \cite{Cadambe_Jafar_int, Cadambe_Jafar_X}.

For a network with real constant channel coefficients and single
antenna nodes, the notion of signal level as a dimension is very
useful. Alignment in this dimension is achieved through multi-level
lattice codes, e.g. \cite{Bresler_Parekh_Tse,
Cadambe_Jafar_Shamai,Etkin_Ordentlich,Motahari_Gharan_Maddahali_Khandani}.
Recent work by Etkin and Ordentlich in \cite{Etkin_Ordentlich} and
by Motahari et. al. \cite{Motahari_Gharan_Maddahali_Khandani,
Motahari_Gharan_Khandani} shows that interference alignment can be
exploited in signal scale dimension based on the notion of rational
independence. In this case, different data streams are multiplexed
using {\em rationally independent} coefficients. In fact, rationally
independent coefficients in  scalar channels play the same role as
linearly independent vectors in vector channels. They serve as
distinct directions along which several data streams can be carried
simultaneously and can be exploited to separate interference and
desired signals as well. In addition, similar to the case in signal
space, we can determine the number of DoF by simply counting the
number of interference free rational dimensions. Instead of
providing 1 DoF per dimension in the signal space, in an $m$
dimensional rational space each dimension can carry $\frac{1}{m}$
degrees of freedom if certain conditions are satisfied. Intuitively,
this is because for a 1 dimensional signal space, only 1 DoF is
available, and hence each rational dimension can carry $\frac{1}{m}$
DoF.

Next, we summarize the conditions in
\cite{Motahari_Gharan_Maddahali_Khandani} under which each data
stream can achieve $\frac{1}{m}$ DoF in interference networks to
multiuser wireless networks where $m$ denotes the maximum number of
rational dimensions received among all receivers. As in
\cite{Motahari_Gharan_Maddahali_Khandani}, we denote a set of
monomials with variables from a set of algebraically independent
numbers over rational numbers as $\mathcal{G}(\mathbf{h})$. In other
words, a member of $\mathcal{G}(\mathbf{h})$ is of the form
$h_1^{\alpha_1}h_2^{\alpha_2} \cdots h_n^{\alpha_n}$ where
$\alpha_1,\ldots, \alpha_n$ are nonnegative integers and
$h_1,\ldots, h_n$ are algebraically independent over the rational
numbers. Note that all members in $\mathcal{G}(\mathbf{h})$ are
rationally independent.

Consider a multiuser wireless network with real channel coefficients
where there are $S$ transmitters and $D$ receivers. Each transmitter
may have a message for each receiver. For any $\epsilon>0$,
transmitter $i, \forall i=1,2,\ldots, S,$ generates $D_i$
independent data streams by uniformly picking up integers from
interval
$(-P^{\frac{1-\epsilon}{2(m+\epsilon)}},P^{\frac{1-\epsilon}{2(m+\epsilon)}})$.
Essentially, each data stream carries $\frac{1}{m}$ DoF. Then these
data streams are multiplexed by rationally independent numbers
$V_{i0}, V_{i1},\ldots, V_{i(D_i-1)}$ which serve as distinct
directions. In order to satisfy the power constraint, the signal is
transmitted with a scaling factor $A=\lambda
P^{\frac{m-1+2\epsilon}{2(m+\epsilon)}}$ where $\lambda$ is a
constant. Now, suppose at receiver $j$, there are $L_j$ desired data
streams received along directions $V'_{j0}, V'_{j1},\ldots,
V'_{j(L_j-1)}$ and interference data streams are received in a
$L'_{j}$ dimensional space over rational numbers with a basis
$U_{j0}, U_{j1}, \ldots, U_{j(L'_j-1)}$. In other words, there are
$L'_j$ effective interference data streams along directions $U_{j0},
U_{j1}, \ldots, U_{j(L'_j-1)}$. Each data stream can almost surely
achieve a rate $\frac{1}{2m}\log P +o(\log P)$ and hence
$\frac{1}{m}$ degrees of freedom where $m$ is the maximum number of
rational dimensions received among all receivers, i.e., $m=\max_j
L_j+L'_j$, if following conditions are satisfied:
\begin{enumerate}
\item $V_{i0}, V_{i1},\ldots, V_{i(D_i-1)}$ are distinct members of
$\mathcal{G}(\mathbf{h})$.
\item $V'_{j0}, V'_{j1},\ldots,
V'_{j(L_j-1)}$, $U_{j0}, U_{j1}, \ldots, U_{j(L'_j-1)}$ are all
distinct.
\item One of $V'_{j0}, V'_{j1},\ldots,
V'_{j(L_j-1)}$, $U_{j0}, U_{j1}, \ldots, U_{j(L'_j-1)}$  is 1.
\end{enumerate}
Note that the first condition ensures that $V_{i0}, V_{i1},\ldots,
V_{i(D_i-1)}$ are rationally independent and the second condition
ensures that the desired signals and interference are rationally
independent so that they can be separated. Along with the first and
second condition, the third condition can be used to show that the
distance between any two points in the receive constellation grows
with $P$ \cite{Motahari_Gharan_Maddahali_Khandani}. Thus, at high
SNR, the message can be decoded with arbitrary small probability. In
addition, as in \cite{Motahari_Gharan_Maddahali_Khandani}, if none
of $V'_{j0}, V'_{j1},\ldots, V'_{j(L_j-1)}$, $U_{j0}, U_{j1},
\ldots, U_{j(L'_j-1)}$  is 1, then $\frac{1}{m+1}$ degrees of
freedom can be achieved for each data stream.

We can see that to maximize the achievable DoF, the key is to
minimize the dimensions of the space spanned by interference. Note
that here the space denotes the set of all linear combinations of
rationally independent numbers with {\em rational coefficients}.
Ideally, we wish to align interference from different users
perfectly with each other. For example, if the interference received
at one receiver from the $i$th user is along members of
$h_i\mathbf{V}_i, \forall i=1,\ldots, N$, we wish to align them as
$\textrm{span}(h_1\mathbf{V}_1)=\textrm{span}(h_2\mathbf{V}_2)=\cdots=\textrm{span}(h_N\mathbf{V}_N)$
where $\textrm{span}(A)$ denotes the space spanned by columns of
$A$. However, it turns out that such alignment is infeasible in
general. In fact, similar problem appears in vector space alignment
for interference networks and wireless $X$ networks as well
\cite{Cadambe_Jafar_int, Cadambe_Jafar_X}. Fortunately, as shown in
\cite{Cadambe_Jafar_int, Cadambe_Jafar_X} alignment is feasible if
we allow a negligible fraction of interference terms not aligned
perfectly. Due to the similarity of spatial dimensions and rational
dimensions, the vector space alignment schemes can be directly
translated into alignment schemes in rational dimensions. We present
the idea in the following lemma.

\begin{lemma}\label{lemma:ia}
Suppose $T_1,T_2,\cdots,T_N$ are algebraically independent over the
rational numbers. For any $n \in \mathbb{N}$, we can construct a
$1\times n^N$ vector $\mathbf{V}$ whose entries are rationally
independent and a $1\times (n+1)^N$ vector $\mathbf{U}$ whose
entries are rationally independent as well, such that the following
relations are satisfied.
\begin{eqnarray*}
\textrm{span}(T_1 \mathbf{V}) &\subset& \textrm{span}(\mathbf{U})\\
\textrm{span}(T_2 \mathbf{V}) &\subset& \textrm{span}(\mathbf{U})\\
&\vdots&\\
\textrm{span}(T_N \mathbf{V}) &\subset& \textrm{span}(\mathbf{U})
\end{eqnarray*}
\proof Let us construct two sets $V$ and $U$ with cardinality $n^N$
and $(n+1)^N$, respectively, as follows:
\begin{eqnarray}
V&=&\big\{\prod_{i=1,\ldots, N}T_i^{\alpha_i}: (\alpha_1,\ldots,
\alpha_N) \in \{1,2\ldots, n\}^N \big\}\\
U&=&\big\{\prod_{i=1,\ldots, N}T_i^{\alpha_i}: (\alpha_1,\ldots,
\alpha_N) \in \{1,2\ldots, n+1\}^N \big\}
\end{eqnarray}
Since elements of $V$ are distinct monomials, they are rationally
independent. Similarly, elements of $U$ are rationally independent.
Let entries of $\mathbf{V}$ and $\mathbf{U}$ be the elements of $V$
and $U$, respectively. It can be easily seen that such construction
satisfies all conditions stated above. \hfill\QED
\end{lemma}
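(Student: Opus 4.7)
The plan is to choose the entries of $\mathbf{V}$ and $\mathbf{U}$ as monomials in $T_1,\ldots,T_N$, engineered so that left-multiplication by any single $T_j$ carries every $\mathbf{V}$-entry to some $\mathbf{U}$-entry. Concretely, I would let $\mathbf{V}$ be an enumeration of the $n^N$ monomials $T_1^{\alpha_1}\cdots T_N^{\alpha_N}$ with exponent vector $(\alpha_1,\ldots,\alpha_N)\in\{1,2,\ldots,n\}^N$, and $\mathbf{U}$ be an enumeration of the $(n+1)^N$ monomials whose exponent vector lies in $\{1,2,\ldots,n+1\}^N$. The cardinalities then match the required dimensions by construction.

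Next I would verify the span containments directly. Multiplying any entry of $\mathbf{V}$ by $T_j$ simply increments its $j$-th exponent by one: the new $j$-th exponent lies in $\{2,\ldots,n+1\}\subseteq\{1,\ldots,n+1\}$, while the other $N-1$ exponents remain in $\{1,\ldots,n\}\subseteq\{1,\ldots,n+1\}$. Hence every entry of $T_j\mathbf{V}$ is already an entry of $\mathbf{U}$, and in particular $\textrm{span}(T_j\mathbf{V})\subseteq\textrm{span}(\mathbf{U})$ for each $j=1,\ldots,N$.

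The final step is rational independence of the entries. Because $T_1,\ldots,T_N$ are algebraically independent over $\mathbb{Q}$, no nontrivial polynomial in them with rational coefficients vanishes, which is equivalent to saying that any finite set of distinct monomials in $T_1,\ldots,T_N$ is linearly independent over $\mathbb{Q}$. Since the exponent tuples chosen for $\mathbf{V}$ (and for $\mathbf{U}$) are pairwise distinct, the corresponding monomial entries are rationally independent as required.

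I do not anticipate a serious obstacle: the construction is tight in exactly the right way, reserving one unit of exponent headroom per variable so that multiplication by any single $T_j$ stays inside $\mathbf{U}$. The one step that deserves a moment of care is invoking the equivalence between algebraic independence of $T_1,\ldots,T_N$ over $\mathbb{Q}$ and $\mathbb{Q}$-linear independence of the set of distinct monomials in them, but this is essentially the definition of algebraic independence and requires no extra machinery.
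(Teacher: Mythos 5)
Your construction is identical to the paper's: $\mathbf{V}$ enumerates the monomials with exponents in $\{1,\ldots,n\}^N$, $\mathbf{U}$ those with exponents in $\{1,\ldots,n+1\}^N$, and the containments and rational independence follow exactly as you argue. The proposal is correct and matches the paper's proof, merely spelling out the verification steps the paper leaves to the reader.
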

Note that the span of a vector here represents the set of all real
numbers that can be represented as linear combinations of the
elements of the vector \emph{with rational coefficients}.

It is important to note that the construction of $V$ and $U$
requires the commutative property of multiplication of numbers
$T_i$. For vector space alignment schemes in interference networks
and wireless $X$ networks \cite{Cadambe_Jafar_int, Cadambe_Jafar_X},
$T_i$ are diagonal matrices which satisfy commutative property of
multiplication as well. Notice that as $n \rightarrow \infty$,
$\frac{|V|}{|U|}\approx 1$. This implies that these two sets are
asymptotically perfectly aligned.

\subsection{Degrees of Freedom of Compound Broadcast Channel}

\begin{figure}[!t]
\centering
\includegraphics[width=3in]{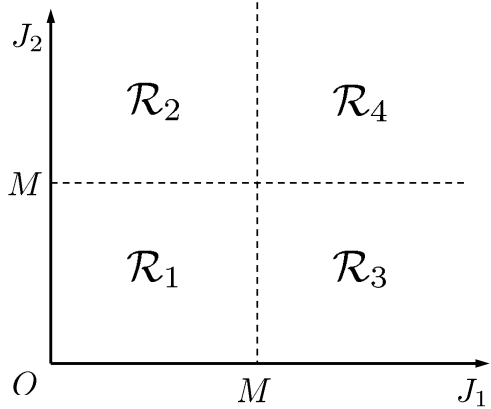}
\caption{$4$ regions for $2$ User Compound Broadcast
Channel}\label{fig:4regions}
\end{figure}

In this section, we first consider the 2 user compound broadcast
channel with $J_1$ and $J_2$ states at each user, respectively.
First, according to the relationship between $J_i$ and $M$, we
partition the $J_1$ and $J_2$ plane into four distinct regions as
illustrated in Figure \ref{fig:4regions}. It can be seen that in the
first region $\mathcal{R}_1$ where $J_1<M$ and $J_2<M$, each user
can achieve 1 degrees of freedom \cite{Weingarten_Shamai_Kramer}.
This can be done by transmitting a data stream along a beamforming
vector orthogonal to channels of the other user. Thus, in this
region, each user achieves the same degrees of freedom as
non-compound broadcast channel. In other words, no degrees of
freedom is lost due to multiple states at each user.

Next we consider regions $\mathcal{R}_2$ and $\mathcal{R}_3$ in
which the number of states for at least one user is no less than $M$
while the other user has less than $M$ states. We establish the
total number of degrees of freedom for these two regions in the
following theorem.
\begin{theorem}\label{thm:2usercompoundbc}
For the compound real broadcast channel with $M$ antennas at the
transmitter, $2$ single antenna users with $J_1<M$ and $J_2 \ge M$
or $J_2<M$ and $J_1 \ge M$, the exact number of total degrees of
freedom is $1+\frac{M-1}{M}$ almost surely.
\end{theorem}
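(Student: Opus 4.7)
By the obvious symmetry between regions $\mathcal{R}_2$ and $\mathcal{R}_3$ (relabeling the two users), I assume without loss of generality that $J_1 < M \leq J_2$, and I split the sum DoF claim into a converse bound and a matching achievability. The converse is almost immediate: the DoF of a finite-state compound BC is a non-increasing function of each $J_k$, since enlarging a state set only adds decoding constraints, and Weingarten \emph{et al.}~\cite{Weingarten_Shamai_Kramer} have already shown that at the corner $(J_1,J_2)=(1,M)$ the total DoF equals exactly $1+(M-1)/M=(2M-1)/M$. Therefore, for any $1\leq J_1<M$ and $J_2\geq M$,
\[
d_1+d_2\ \leq\ (d_1+d_2)\big|_{J_1=1,\,J_2=M}\ =\ \frac{2M-1}{M}.
\]

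For achievability I would split the sum DoF as $d_1=1$ for user 1 and $d_2=(M-1)/M$ for user 2, combining transmit beamforming with rational-dimension interference alignment (Lemma \ref{lemma:ia}). In the base case $J_1=1$ the scheme is direct: transmit $s^{[1]}$ along an arbitrary vector $\mathbf{v}$ and user 2's $M-1$ streams along a basis $\mathbf{u}_1,\ldots,\mathbf{u}_{M-1}$ of the $(M-1)$-dimensional null space of $\mathbf{h}^{[1]}$. User 1 then sees only $s^{[1]}$ and attains 1 DoF, while each of user 2's $J_2$ states receives a rational combination of $M$ algebraically independent coefficients; by the rational-dimension framework each stream obtains $1/M$ DoF and user 2 harvests $(M-1)/M$ DoF from its $M-1$ desired streams.

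In the general case $J_1\geq 2$ the null space of user 1 has dimension only $M-J_1<M-1$, so at least $J_1-1$ of user 2's beamforming directions must lie outside this null space and create unavoidable interference at user 1. The plan is to apply Lemma \ref{lemma:ia} with $N=J_1-1$ and $n$ channel extensions, taking the transcendentals to be entries of $\{\mathbf{h}^{[1]}_{j_1}\}$, to produce vectors $\mathbf{V}$ of length $n^N$ and $\mathbf{U}$ of length $(n+1)^N$ satisfying $\mathrm{span}(T_i\mathbf{V})\subset\mathrm{span}(\mathbf{U})$. I would then transmit $s^{[1]}$ multiplexed along the $\mathbf{V}$-directions and the $J_1-1$ non-null user-2 streams along the $\mathbf{U}$-directions; at every state of user 1 the interference falls inside $\mathrm{span}(\mathbf{U})$ while $s^{[1]}$ occupies the smaller $\mathbf{V}$-subspace, so the fraction of interference-free rational dimensions for $s^{[1]}$ approaches $n^N/(n+1)^N\to 1$ as $n\to\infty$, giving user 1 a full DoF asymptotically. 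The generic structure of user 2's channels simultaneously ensures that each user-2 state still sees exactly $M$ rationally separable components, preserving $d_2=(M-1)/M$.

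The hard part will be the achievability construction in the regime $J_1\geq 2$: the Lemma \ref{lemma:ia}-based asymptotic alignment must simultaneously absorb the unavoidable interference at user 1 into a vanishing fraction of its rational dimensions and maintain $M$-way rational separability at every one of user 2's $J_2$ receivers. The converse, by contrast, drops out almost immediately from monotonicity together with the Weingarten \emph{et al.}~boundary result at $(J_1,J_2)=(1,M)$.
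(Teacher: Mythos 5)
Your converse is fine and is essentially the paper's (monotonicity in the state sets plus the Weingarten et al.\ corner point at $(J_1,J_2)=(1,M)$). The achievability, however, has two genuine gaps. First, the base case does not deliver $1$ DoF to user 1. In the rational-dimensions framework every data stream in the network is drawn from a constellation of size $P^{\frac{1-\epsilon}{2(m+\epsilon)}}$ where $m$ is the \emph{maximum} number of rational dimensions over \emph{all} receivers; since each state of user 2 sees $M$ dimensions ($M-1$ desired plus the interference from $s^{[1]}$), we have $m\geq M$, and user 1's single stream therefore carries only $1/M$ DoF, for a total of $1/M+(M-1)/M=1$. (You cannot rescue this by giving $s^{[1]}$ a full-power Gaussian or a larger integer constellation: the minimum-distance argument at user 2 requires the interference to live on the same integer lattice scale as the desired streams.) The paper's fix is precisely to split $W^{[1]}$ into $M$ sub-messages, one per transmit antenna, each multiplexed over $n^{\Gamma}$ monomial directions, so that user 1's receiver is itself filled with $Mn^{\Gamma}\approx m_n$ interference-free rational dimensions and its DoF tends to $1$.

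Second, your general case aligns at the wrong receiver. User 2's entire message needs only \emph{one} spatial beamforming vector $\mathbf{V}^{[2]}$ carrying $(M-1)n^{\Gamma}$ streams multiplexed in signal scale along $\mathbf{G}=[G_0,G_0^2,\ldots]$; since $J_1<M$, the stacked $J_1\times M$ channel of user 1 always has a nonempty null space containing $\mathbf{V}^{[2]}$, so user 1 sees \emph{no} interference for any $J_1\in\{1,\ldots,M-1\}$ and your case split on $J_1\geq 2$ addresses a problem that does not arise. The place where Lemma \ref{lemma:ia} is genuinely needed is user 2: the $M$ antenna streams of user 1 arrive at state $j_2$ scaled by $h^{[2]}_{j_2 1},\ldots,h^{[2]}_{j_2 M}$, i.e.\ as $M$ (and across states $J_2M$) differently scaled copies of $\mathbf{V}$, which without alignment would occupy $\approx Mn^{\Gamma}$ rational dimensions and leave no room for the desired signal. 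The paper applies Lemma \ref{lemma:ia} with the $\Gamma=J_2M$ coefficients $h^{[2]}_{j_2 k}$ as generators to compress all of this interference into a single $(n+1)^{\Gamma}$-dimensional $\mathrm{span}(\mathbf{U})$. Your assertion that ``each user-2 state still sees exactly $M$ rationally separable components'' by genericity is exactly the statement that requires this alignment and is false without it.
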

\proof The outer bound follows from \cite{Weingarten_Shamai_Kramer}.
Due to symmetry, let us consider the case when $J_1<M$ and $J_2 \ge
M$. We will show user 1 can achieve 1 DoF while user 2 achieves
$\frac{M-1}{M}$ DoF. First, note that when $J_2=M$, this can be
achieved using zero-forcing at the transmitter
\cite{Weingarten_Shamai_Kramer}. Now consider $J_2>M$. For
simplicity, let us consider the case when $M=2$, $J_1=1$ and
$J_2=3$. The proof for the general case is presented in the
Appendix. Thus, we need to show that user 1 and 2 can achieve 1 and
$\frac{1}{2}$ degrees of freedom, respectively. Note that the linear
alignment solution presented previously for the complex channel does
not apply here, with generic real channel coefficients.

Message $W^{[1]}$ intended for user 1 is split into $2$ sub-messages
denoted as $W^{[1]}_1$ and $W^{[1]}_2$. $W^{[1]}_1$ is encoded into
$n^6$ data streams denoted as $X^{[1]}_{1k},~ k=1,\ldots, n^6$.
$W^{[1]}_2$ is encoded into $n^6$ data streams denoted as
$X^{[1]}_{2k},~ k=1,\ldots, n^6$. Message for user 2 denoted as $W
^{[2]}$ is encoded into $n^6$ independent data streams denoted as
$X^{[2]}_{k},~ k=1,\ldots, n^6$. For any $\epsilon>0$, let
$\mathcal{C}=\{x: x\in\mathbb{Z}\cap [-
P^{\frac{1-\epsilon}{2(m_n+\epsilon)}},
P^{\frac{1-\epsilon}{2(m_n+\epsilon)}}]\}$ where
$m_n=1+(n+1)^{6}+n^{6}$. In other words, $\mathcal{C}$ denotes a set
of all integers in the interval $[-
P^{\frac{1-\epsilon}{2(m_n+\epsilon)}},
P^{\frac{1-\epsilon}{2(m_n+\epsilon)}}]$. Each symbol in the data
stream is obtained by uniformly i.i.d. sampling $\mathcal{C}$.
Essentially, each data stream carries $\frac{1}{m_n}$ degrees of
freedom.

\begin{figure}[!t]
\centering
\includegraphics[width=5in, trim=0 130  0 120]{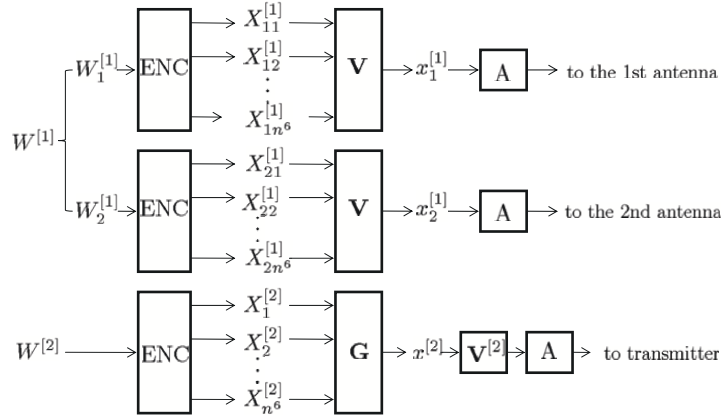}
\caption{Encoding at transmitter 1 and 2 for 2 user compound
broadcast channel}\label{fig:compoundbcenc}
\end{figure}
A data stream $x^{[1]}_i, \forall i=1,2$ is obtained by multiplexing
data streams $X^{[1]}_{ik}, \forall k=1,\ldots, n^{6}$ using a $1
\times n^6$ vector $\mathbf{V}$, i.e,
$x^{[1]}_i=\mathbf{V}\mathbf{X}^{[1]}_i$, where
$\mathbf{V}=[V_1,\ldots, V_{n^6}]$ and
$\mathbf{X}^{[1]}_i=[X^{[1]}_{i1},\ldots, X^{[1]}_{in^6}]^T$. Note
that all elements of $\mathbf{V}$ are functions of channel
coefficients which will be designed to align interference. A data
stream $x^{[2]}$ is obtained by multiplexing $X^{[2]}_k,~
k=1,\cdots, n^{6}$ using a $1 \times n^6$ vector $\mathbf{G}$, i.e,
$x^{[2]}=\mathbf{G}\mathbf{X}^{[2]}$, where
$\mathbf{X}^{[2]}=[X^{[2]}_{1},\ldots, X^{[2]}_{n^6}]^T$. Let
$\mathbf{G}=[G_0~G^2_0~\cdots~G^{n^6}_0]$ where $G_0$ is a randomly
generated real number which is algebraically independent with all
channel coefficients over rationals almost surely. After scaling
with a factor $A$, $x^{[2]}$ is transmitted with a beamforming
vector $\mathbf{V}^{[2]}$ and $x^{[1]}_i$ is transmitted from the
$i$th antenna as illustrated in Figure \ref{fig:compoundbcenc}.
Thus, the transmitted signal is
\begin{eqnarray}
\mathbf{x}=A(\mathbf{V}^{[2]}x^{[2]}+\mathbf{X}^{[1]})
\end{eqnarray}
where  $\mathbf{X}^{[1]}=[x^{[1]}_1~ x^{[1]}_2]^T$ and
$\mathbf{V}^{[2]}$ with unit norm is orthogonal to the channel of
user 1. Thus, no interference is created at user 1. $A$ is a scalar
which is chosen such that the power constraint is satisfied, i.e.,
\begin{eqnarray}
E[\|\mathbf{x}\|^2]&=&E[A(\mathbf{V}^{[2]}x^{[2]}+\mathbf{X}^{[1]})^TA(\mathbf{V}^{[2]}x^{[2]}+\mathbf{X}^{[1]})]\notag\\
&=&
A^2\big(E[(x^{[2]})^2]+E[(x^{[1]}_1)^2]+E[(x^{[1]}_2)^2]\big)\notag\\
&\leq& A^2
\underbrace{(\|\mathbf{G}\|^2+2\|\mathbf{V}\|^2)}_{\lambda^2}P^{\frac{1-\epsilon}{m_n+\epsilon}}\notag\\
&\leq& P\\
\Rightarrow A &=&
\frac{1}{\lambda}P^{\frac{m_n+2\epsilon-1}{2(m_n+\epsilon)}}
\end{eqnarray}

Let us first consider user 2. The received signal at receiver 2
under state $j_2$  is given by
\begin{eqnarray}
y^{[2]}_{j_2} &=& A(\mathbf{h}^{[2]}_{j_2}(\mathbf{V}^{[2]}x^{[2]}+\mathbf{X}^{[1]}))+z^{[2]}_{j_2} \notag\\
          &=&A(
          \underbrace{\mathbf{h}^{[2]}_{j_2}\mathbf{V}^{[2]}}_{h^{[2]'}_{j_2}}x^{[2]}+\mathbf{h}^{[2]}_{j_2}\mathbf{X}^{[1]})+z^{[2]}_{j_2}\notag\\
          &=&A(h^{[2]'}_{j_2}\mathbf{G}\mathbf{X}^{[2]}+h^{[2]}_{j_21}\mathbf{V}\mathbf{X}^{[1]}_1+h^{[2]}_{j_22}\mathbf{V}\mathbf{X}^{[1]}_2)+z^{[2]}_{j_2},~~\forall j_2=1,2,3
\end{eqnarray}
where $\mathbf{h}^{[2]}_{j_2}=[h^{[2]}_{j_21}~h^{[2]}_{j_22}]$. In
order for desired signal $\mathbf{X}^{[2]}$ to get $n^{6}$
interference free dimensions in a total of $1+(n+1)^6+n^6$
dimensional space, we align all interference into a  $(n+1)^6$
dimensional subspace which is spanned by the members of a $1\times
(n+1)^6$ vector $\mathbf{U}$:
\begin{eqnarray}
\textrm{span}(h^{[2]}_{j_2i}\mathbf{V}) &\subset&
\textrm{span}(\mathbf{U}),~~j_2=1,2,3~ ~i=1,2
\end{eqnarray}
From Lemma \ref{lemma:ia}, we construct $\mathbf{V}$ and
$\mathbf{U}$ with rationally independent members to satisfy above
equations. Since $\mathbf{G}$ is generated independently with
$\mathbf{\mathbf{U}}$, members of $h^{[2]'}_{j_2}\mathbf{G}$ and
$\mathbf{U}$ are all distinct and none of them is equal to 1. Thus,
user 2 can achieve $\frac{n^{6}}{1+n^{6}+(n+1)^{6}}$ degrees of
freedom regardless of the realization of the channel almost surely.
As $n\rightarrow \infty$, $\frac{1}{2}$ degrees of freedom can be
achieved.

Now consider user 1. Since there is no interference at user 1, all
data streams are received interference free and along elements
$h^{[1]}_1\mathbf{V}$ and $h^{[1]}_2\mathbf{V}$ where $h^{[1]}_i$ is
the channel coefficient from the $i$th antenna to user 1. It can be
easily seen that members of $h^{[1]}_1\mathbf{V}$ and
$h^{[1]}_2\mathbf{V}$ are all distinct since $\mathbf{V}$ is
independent of  $h^{[1]}_i$. In addition, none of them is 1. Notice
that there are a total of $2n^{6}$ data streams. Since each stream
carries $\frac{1}{1+(n+1)^{6}+n^{6}}$ degrees of freedom, user 1
achieves a total of $\frac{2n^6}{1+(n+1)^{6}+n^{6}}$ DoF almost
surely. As $n\rightarrow \infty$, 1 DoF can be achieved.
 \hfill \QED

As in the complex setting, this is a surprising result. Intuitively,
the DoF will decrease as the number of states associated with the
user increases. However, Theorem \ref{thm:2usercompoundbc} shows
that if one user's states are less than $M$ and regardless of the
number of states associated with the other user, $1+\frac{M-1}{M}$
DoF can be achieved. Thus, in regions $\mathcal{R}_2$ and
$\mathcal{R}_3$, there is only a fraction of $\frac{1}{M}$ DoF lost
due to multiple states at users.

Next we establish the degrees of freedom for $\mathcal{R}_4$ by
solving a general case, i.e., a $N \ge 2$ users compound broadcast
channel where each user has no less than $M$ states. The result is
presented in the following theorem.
\begin{theorem}
For the real compound broadcast channel with $M$ antennas at the
transmitter, $N$ single antenna users with $J_i \ge M, i=1,\ldots,
N$ states at user $i$, the total number of degrees of freedom is
$\frac{MN}{M-N+1}$ almost surely.
\end{theorem}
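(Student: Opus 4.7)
The plan is to mimic the two-user development in Theorem~\ref{thm:2usercompoundbc} and push it to $N$ users via a combination of a Weingarten-style outer bound and a rational-dimension alignment inner bound. The remark in Section~\ref{sec:complexmiso} reformulates the compound channel as a BC with $\sum_i J_i$ receivers and $N$ common messages, which is the perspective I would keep throughout; in that reformulation the goal $\frac{MN}{M-N+1}$ looks like an $N$-user analogue of the two-user DoF ceilings derived by Weingarten et.~al.

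For the converse, I would adapt the outer bound of \cite{Weingarten_Shamai_Kramer} from $N=2$ to general $N$. The key object is a stack of channel matrices formed by picking one state from each of a chosen subset of groups. Genie-aiding each selected virtual receiver with the observations of all other selected ones---exactly as in the two-user proof---reduces the bound to a single-user MIMO problem and produces one DoF inequality per choice of subset. Summing these inequalities over a symmetrisation of subset choices, taking the hypothesis $J_i\ge M$ into account to kill dimensions that become unusable due to channel ambiguity, and normalising by $\tfrac{1}{2}\log P$ should collapse to the claimed $\tfrac{MN}{M-N+1}$ sum-DoF bound.

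For the inner bound, I would use the rational-dimension alignment machinery built up in Section~\ref{sec:4}. Each user's message is split across the $M$ transmit antennas and encoded into integer data streams drawn from the scaled lattice $\mathcal{C}$ of the proof of Theorem~\ref{thm:2usercompoundbc}. The streams are multiplexed by rationally independent monomial coefficients drawn from $\mathcal{G}(\mathbf{h})$ whose exponent set is engineered via Lemma~\ref{lemma:ia} so that, for any fixed state $j_k$ of any receiver $k$, the $(N-1)$ interfering users' contributions all land inside the span of a common vector $\mathbf{U}^{[k]}_{j_k}$ of cardinality $(n+1)^{\alpha}$ while the desired signal sits in the span of a vector $\mathbf{V}^{[k]}$ of cardinality $n^{\alpha}$, for an exponent $\alpha$ chosen as a function of $M$ and $N$. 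The power scaling $A\propto P^{(m_n-1)/(2m_n)}$ and the three verifying conditions (distinctness, rational independence, and the constant $1$ among the received directions) carry over verbatim from the proof of Theorem~\ref{thm:2usercompoundbc}, and letting $n\to\infty$ drives the per-user DoF ratio to the target.

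The main obstacle is the simultaneous alignment step: the interference at receiver $k$ under state $j_k$ comprises $M(N-1)$ antenna-streams from the other users, and we must compress all of them into a single rational subspace $\mathrm{span}(\mathbf{U}^{[k]}_{j_k})$, simultaneously for every $(k,j_k)$ among the $\sum_i J_i$ virtual receivers. Designing the exponent pattern so that multiplication by each channel scalar $h^{[k]}_{j_k,m}$ carries $\mathbf{V}^{[k']}$ into $\mathbf{U}^{[k]}_{j_k}$ for all $k'\neq k$ is the core combinatorial task; unlike Lemma~\ref{lemma:ia} which aligns $N$ scaled copies of a single vector, here one must align $M(N-1)J_k$ scaled copies coming from distinct users' codebooks, and the bookkeeping that certifies distinctness of directions plus presence of the constant $1$ at every receiver is where the bulk of the appendix-level work will lie.
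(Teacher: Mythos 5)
Your overall architecture (genie-aided converse plus rational-dimension alignment for achievability) matches the paper's in spirit, but your inner bound contains a genuine error. You propose to align, at each receiver $k$ under each state $j_k$, the contributions of all $N-1$ interfering users into the span of a \emph{single} vector $\mathbf{U}^{[k]}_{j_k}$ of cardinality $(n+1)^{\alpha}$. For $N\ge 3$ this is both infeasible and, if it were feasible, too strong: with $Mn^{\alpha}$ desired dimensions and $(n+1)^{\alpha}$ interference dimensions each receiver would obtain $\frac{M}{M+1}$ DoF, for a total of $\frac{MN}{M+1}$, which exceeds the converse bound $\frac{MN}{M+N-1}$ whenever $N>2$. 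The paper explicitly notes that perfect mutual alignment of interference originating from different messages is infeasible in general; the correct target is one subspace $\mathbf{U}^{[l]}$ \emph{per interfering message group} $l\ne k$, so that the interference occupies $(N-1)(n+1)^{\Gamma}$ dimensions and the per-receiver ratio tends to $\frac{M}{M+N-1}$. Once this is set up, each containment $\textrm{span}(h\,\mathbf{V}^{[l]})\subset\textrm{span}(\mathbf{U}^{[l]})$ is exactly the situation of Lemma~\ref{lemma:ia} (many algebraically independent scalars times one common vector), because every transmit antenna uses the \emph{same} multiplexing vector $\mathbf{V}^{[l]}$ for the streams destined to user $l$; the ``core combinatorial task'' you single out as the main obstacle largely dissolves. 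In fact the paper does none of this work inside this theorem: it observes that the compound $X$ network is precisely the MISO BC with the transmit antennas forbidden to cooperate, so the $\frac{MN}{M+N-1}$ achievability is inherited verbatim from Theorem~\ref{thm:compoundX}.

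On the converse, your genie/symmetrisation sketch is directionally consistent with the paper but omits the step that actually produces the factor $M+N-1$. The paper reduces to $J_1=1$, $J_i=M$ for $i\ge 2$, introduces an auxiliary variable $U$, and exploits the fact that the $M$ generic observations $\mathbf{Y}^{[N]}$ of user $N$ determine $\mathbf{x}$ up to noise, so that $I(\mathbf{x};y^{[1]},\mathbf{Y}^{[2]},\ldots,\mathbf{Y}^{[N-1]}\mid U,\mathbf{Y}^{[N]})=o(\log P)$; this yields $R^{[1]}+\cdots+R^{[N-1]}+MR^{[N]}\le\frac{M}{2}\log P+o(\log P)$, and summing the $N$ symmetric versions of this weighted bound gives $(M+N-1)\sum_i R^{[i]}\le \frac{MN}{2}\log P+o(\log P)$. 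Your proposal would need to supply this weighting explicitly rather than a generic ``sum over subsets.'' Finally, note that the displayed value $\frac{MN}{M-N+1}$ is a typo for $\frac{MN}{M+N-1}$ --- both the paper's converse and any correct achievability produce the latter --- and your write-up propagates the typo.
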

\proof The achievable scheme is based on $M \times N$ compound $X$
channel discussed later in Theorem \ref{thm:compoundX}. Since the
compound $X$ channel is a restricted form of the MISO BC (the
transmit antennas are separated in the $X$ channel), achievable
degrees of freedom for the $X$ channel are also achievable for the
BC.

For the outer bound, we consider the case where $J_1=1$ and $J_i=M,
i=2,\cdots, N$, since adding more states for each user results in
more constraints and hence cannot increase the rates. The bound is
obtained for a degraded broadcast channel by providing receiver $1$
to $N-1$ with all received signals. Let
$\mathbf{Y}^{[i]}=[y^{[i]}_1\cdots y^{[i]}_{J_i}]$ denote the
received signals for all realizations of user $i$. For an auxiliary
random variable $U$, $U-\mathbf{x}-(y^{[1]},
\mathbf{Y}^{[2]},\cdots, \mathbf{Y}^{[N]})-y^{[N]}_i$ forms a Markov
chain. Thus, we have
\begin{eqnarray}
&& R^{[1]}+R^{[2]}+\cdots+R^{[N-1]}\notag\\
&\leq& I(\mathbf{x};y^{[1]}, \mathbf{Y}^{[2]},\cdots, \mathbf{Y}^{[N]}|U)\notag\\
&=& I(\mathbf{x};\mathbf{Y}^{[N]}|U)+I(\mathbf{x};y^{[1]}, \mathbf{Y}^{[2]},\cdots, \mathbf{Y}^{[N-1]}| U,\mathbf{Y}^{[N]})\notag\\
&=& I(\mathbf{x};\mathbf{Y}^{[N]}|U)+o(\log P)\notag\\
&=&
\sum_{i=1}^{M}I(\mathbf{x};y^{[N]}_i|y^{[N]}_1,\ldots,y^{[N]}_{i-1},
U)+o(\log P)\notag\\
&=&\sum_{i=1}^{M}\big(h(y^{[N]}_i|y^{[N]}_1,\ldots,y^{[N]}_{i-1},
U)-h(y^{[N]}_i|y^{[N]}_1,\ldots,y^{[N]}_{i-1}, U,\mathbf{x})\big)+o(\log P)\notag\\
&\leq& \sum_{i=1}^{M}\big(h(y^{[N]}_i| U)-h(y^{[N]}_i|
U,\mathbf{x})\big)+o(\log P)\notag\\
&=& \sum_{i=1}^{M}I(\mathbf{x};y^{[N]}_i| U)+o(\log P)
\label{eqn:outerbound1}
\end{eqnarray}
On the other hand, $\forall i=1,\ldots, M$,
\begin{eqnarray}
R_N &\leq&  I(U;y^{[N]}_i)
\end{eqnarray}
Adding up all these bounds, we have
\begin{eqnarray}\label{eqn:outerbound2}
MR_N \leq \sum_{i=1}^{M}  I(U;y^{[N]}_i)
\end{eqnarray}
Now adding \eqref{eqn:outerbound1} and \eqref{eqn:outerbound2}, we
have
\begin{eqnarray}
R^{[1]}+\cdots+R^{[N-1]}+MR^{[N]}&\leq& \sum_{i=1}^{M}  (I(U;y^{[N]}_i)+I(\mathbf{x};y^{[N]}_i| U))+o(\log P)\notag\\
&=&  \sum_{i=1}^{M}I(\mathbf{x};y^{[N]}_i)+ o(\log P)\notag \\
&\leq&\frac{M}{2} \log P + o(\log P)
\end{eqnarray}
By symmetry, $\forall i=1,\ldots,N-1$, we have
\begin{eqnarray}
R^{[1]}+\cdots+R^{[i-1]}+MR^{[i]}+R^{[i+1]}+\cdots+R^{[N]} \leq
\frac{M}{2}\log P + o(\log P)
\end{eqnarray}
Adding up all such bounds, we have
\begin{eqnarray}
(M+N-1)(R^{[1]}+\cdots+R^{[N]}) &\leq& \frac{MN}{2}\log P + o(\log P)\notag\\
\Rightarrow R^{[1]}+\cdots+R^{[N]} &\leq& \frac{MN}{2(M+N-1)}\log P
+o(\log P)
\end{eqnarray}
Therefore,
\begin{eqnarray}
d^{[1]}+\cdots+d^{[N]} =\lim_{P\rightarrow
\infty}\frac{R^{[1]}+\cdots+R^{[N]}}{\frac{1}{2}\log P} \leq
\frac{MN}{M+N-1}
\end{eqnarray}
\hfill\QED

Thus we have shown that the DoF of the (real) finite state compound
MISO BC do not collapse to 1 as the channel uncertainty (number of
possible states) increases. What is lost is only the DoF benefits of
joint processing at the transmit antennas, without which the MISO BC
reduces to an $X$ network. Note that for large $M$ this loss also
disappears. In other words, for large $M$, the MISO BC with $N$
users and arbitrary number of states at each user, can achieve
$\frac{MN}{M+N-1}\approx N$ DoF which is the maximum DoF possible
with perfect CSIT.

\section{Compound $X$ Channel}
\begin{figure}[!t]
\centering
\includegraphics[width=3in]{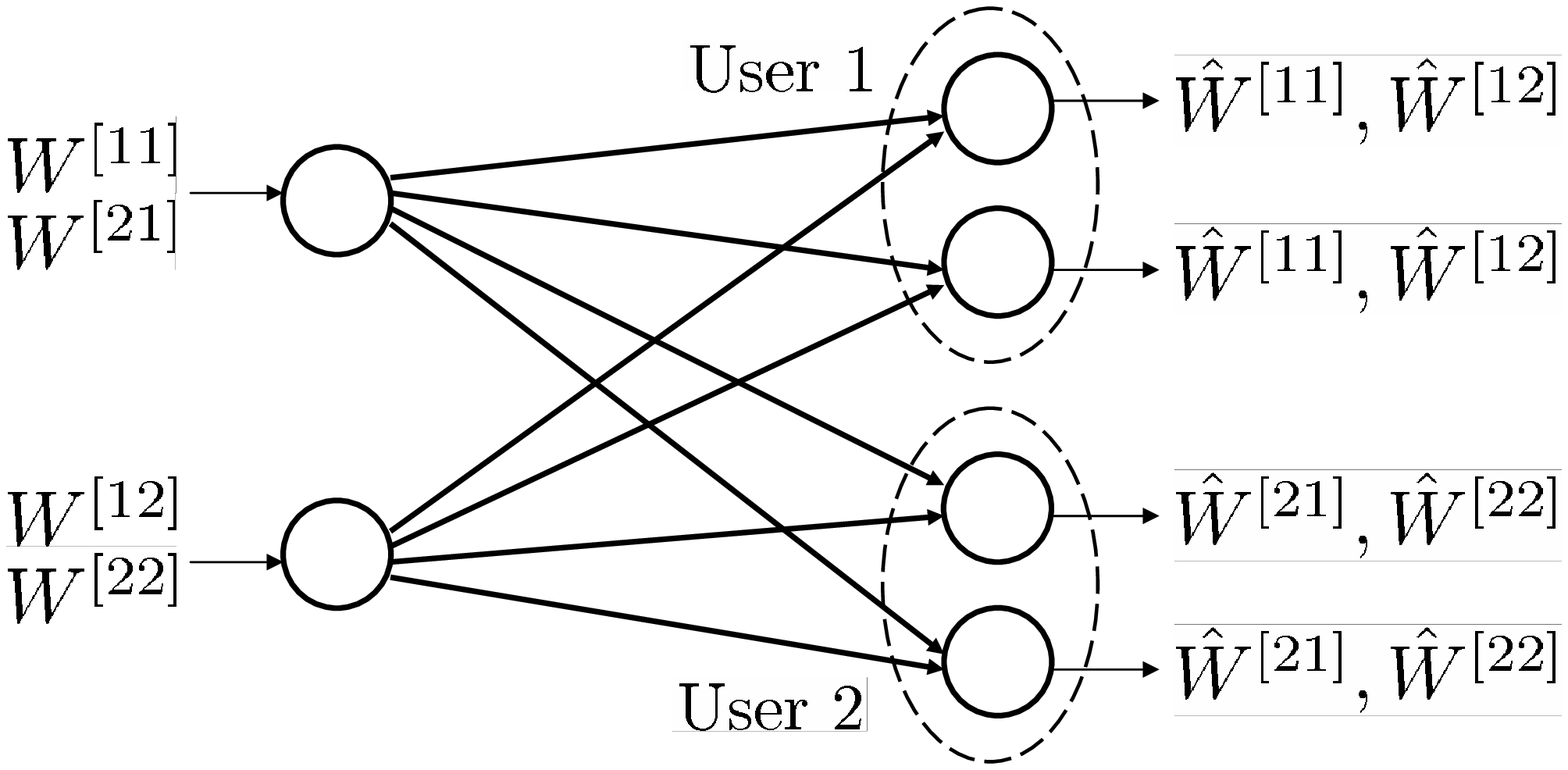}
\caption{$2$ User Compound $X$ Channel with $J_1=J_2=2$}
\label{fig:compoundx2u2s}
\end{figure}
The $M\times N$ wireless compound $X$ channel consists of $M$
transmitters and $N$ receivers. Transmitter $i, \forall i=1,\ldots,
M$ sends an independent message $W^{[ji]}$ with rate $R^{[ji]}$ to
receiver $j, \forall j=1,\ldots, N$. Thus, there are a total of $MN$
messages in the network. Let us denote the channel vector associated
with receiver $j, ~\forall j\in\{1,\ldots, N\}$ as a vector
$(h^{[j1]},\cdots,h^{[jM]})$ which is drawn from a finite set
$\mathcal{J}_j$ with cardinality $J_j$. In addition, we assume the
channel states are drawn from a continuous distribution and hence
algebraically independent over rational numbers almost surely. Once
the channel is drawn, it remains fixed during the entire
transmission. While the transmitters are unaware of the specific
channel state realization, the receivers are assumed to have perfect
channel knowledge. We say the rate tuple $(R^{[11]},\ldots,
R^{[NM]})$ is achievable if all messages can be decoded with
arbitrarily small error probability regardless of the channel
realizations. In this paper, we mainly consider the {\em real}
compound $X$ channel. The received signal at receiver $j$ under
state $k_j$ is given by
\begin{eqnarray}
y^{[j]}_{k_j}=\sum_{i=1}^{M}h^{[ji]}_{k_j}x^{[i]}+z^{[j]}_{k_j}~~
k_j=1,\ldots, J_j,~ j=1,\ldots, N
\end{eqnarray}
where  $h^{[ji]}_{k_j}$ and  $x^{[i]}$  represent the channel
coefficient and transmitted signal, respectively. Transmitter $i$
satisfies the power constraint $E(x_i^2)\leq P$. $z^{[j]}_{k_j}$ is
the additive white Gaussian noise (AWGN) with zero mean and unit
variance. The total number of degrees of freedom, $d$, is defined as
\begin{eqnarray}
d=\lim_{P\rightarrow
\infty}\frac{R^{[11]}+\cdots+R^{[NM]}}{\frac{1}{2}\log P}
\end{eqnarray}
A two user compound $X$ channel with 2 states at each user is shown
in Figure \ref{fig:compoundx2u2s}.

\subsection{Degrees of Freedom of Compound $X$ Network}
We establish the total number of DoF for real compound $X$ network
in the following theorem.

\begin{theorem}\label{thm:compoundX}
For the real compound $M \times N$ $X$ network with $J_j$ states
associated at the $j$th receiver, the total number of degrees of
freedom is $\frac{MN}{M+N-1}$ almost surely.
\end{theorem}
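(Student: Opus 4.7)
The bound $d \le \tfrac{MN}{M+N-1}$ is immediate by specialization to the non-compound setting: taking $J_j=1$ for every $j$ yields an ordinary real $M\times N$ $X$ network, whose sum DoF is bounded by $\tfrac{MN}{M+N-1}$ in \cite{Cadambe_Jafar_X}. Extra channel states in the compound version only add constraints on the transmitters' codebooks and cannot increase the sum DoF, so the entire work lies in constructing an achievable scheme.

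\textbf{Achievability.} My plan is to extend the rational-dimension interference alignment argument used in the proof of Theorem~\ref{thm:2usercompoundbc}, in the style of the Cadambe-Jafar $X$-channel beamforming construction \cite{Cadambe_Jafar_X} translated into rational dimensions \cite{Motahari_Gharan_Khandani, Motahari_Gharan_Maddahali_Khandani}. Transmitter $i$ encodes each message $W^{[ji]}$ into $n^{L}$ integer data streams drawn uniformly from the constellation $\mathcal{C}=\{x\in\mathbb{Z}:|x|\le P^{(1-\varepsilon)/2(m_n+\varepsilon)}\}$, multiplexes them along a beamforming row vector $\mathbf{V}^{[ji]}$ whose entries are monomials in the full collection of generic channel coefficients $\{h^{[j'i']}_{k_{j'}}\}$, and transmits $x^{[i]}=A\sum_{j=1}^{N}\mathbf{V}^{[ji]}\mathbf{X}^{[ji]}$ with an appropriate power scaling $A$. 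The $\mathbf{V}^{[ji]}$ are built by the monomial recipe: for every channel $h^{[j'i']}_{k_{j'}}$ with $j'\ne j$ (i.e.\ a channel along which this message appears as interference at another receiver), its exponent in every monomial of $\mathbf{V}^{[ji]}$ is drawn from $\{1,\dots,n\}$, while the remaining exponents encode a $j$-dependent ``phase'' whose role is to prevent desired signals from aligning with interference at the intended receiver. Applying Lemma~\ref{lemma:ia} to the $M J_{j'}$ channels incident to receiver $j'$, multiplication of $\mathbf{V}^{[ji]}$ by the state-dependent scalar $h^{[j'i]}_{k_{j'}}$ merely shifts a single exponent up by one and therefore keeps $h^{[j'i]}_{k_{j'}}\mathbf{V}^{[ji]}$ inside a common alignment vector $\mathbf{U}^{[j',j]}$ of cardinality $(1+\tfrac{1}{n})^{MJ_{j'}}|\mathbf{V}|$, uniformly in $i$ and $k_{j'}$. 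Consequently, at receiver $j'$ in every state $k_{j'}$, the $M$ desired messages contribute $M|\mathbf{V}|$ distinct monomial directions while the $(N-1)M$ interferers collapse into at most $(N-1)(1+\tfrac{1}{n})^{MJ_{j'}}|\mathbf{V}|$ directions; since both ratios tend to $1/(M+N-1)$ as $n\to\infty$, the rational-alignment argument recalled just above Lemma~\ref{lemma:ia} delivers $\tfrac{1}{M+N-1}\cdot\tfrac{1}{2}\log P+o(\log P)$ per data stream, summing to $\tfrac{MN}{M+N-1}$ DoF.

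\textbf{Main obstacle.} The delicate point is choosing the $j$-dependent phase so that a \emph{single} family of monomial beamforming vectors simultaneously satisfies, at every one of the $\sum_j J_j$ receiver/state pairs, the following: (i) for every $j\ne j'$, the $M$ interfering vectors $\{h^{[j'i]}_{k_{j'}}\mathbf{V}^{[ji]}\}_{i=1}^{M}$ all land inside the prescribed $\mathbf{U}^{[j',j]}$; (ii) the $M|\mathbf{V}|$ desired monomial directions at receiver $j'$ are pairwise distinct and rationally independent of the union of all interference alignment sets; and (iii) the three technical conditions listed above Lemma~\ref{lemma:ia} (distinct multiplexing monomials, presence of the unit monomial, and rational independence of the receive constellation) are met. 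All three properties reduce to a bookkeeping argument on exponent tuples using algebraic independence of the generic channel coefficients, after which the error-probability bound and the DoF accounting proceed essentially verbatim from \cite{Motahari_Gharan_Maddahali_Khandani} and from the proof of Theorem~\ref{thm:2usercompoundbc}.
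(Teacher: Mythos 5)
Your overall route is the same as the paper's: the outer bound is inherited from the non-compound $X$ network of \cite{Cadambe_Jafar_X} (extra states only add decoding constraints), and achievability is by rational-dimension interference alignment with monomial beamforming directions, invoking Lemma \ref{lemma:ia} so that multiplication by any coefficient incident to receiver $j'$ keeps the interference inside a set only $(1+\frac{1}{n})^{MJ_{j'}}$ times larger. Your dimension count at each receiver ($M|\mathbf{V}|$ desired versus at most $(N-1)(1+\frac{1}{n})^{MJ_{j'}}|\mathbf{V}|$ interference directions) matches the paper's, up to a slip in the last sentence: each \emph{data stream} carries only $\frac{1}{2m_n}\log P$ with $m_n\approx (M+N-1)|\mathbf{V}|$, so it is each \emph{message} (comprising $|\mathbf{V}|$ streams) that gets $\frac{1}{M+N-1}$ DoF, and the $MN$ messages sum to the claim.

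The genuine gap is that you never specify the one design choice on which the argument turns: the ``$j$-dependent phase,'' i.e., how the monomials of $\mathbf{V}^{[ji]}$ treat the coefficients incident to the \emph{intended} receiver $j$. You correctly flag this as the delicate point (your conditions (i)--(iii)) but then defer it to unspecified bookkeeping, and a careless choice fails: if those direct coefficients appeared in $\mathbf{V}^{[ji]}$ with exponents ranging over the same set used for the cross coefficients, a desired monomial $h^{[ji]}_{k_j}V$ at receiver $j$ could coincide with a member of an interference alignment set, and the receive constellation would not separate. The paper's resolution is concrete and should be stated: take $\mathbf{V}^{[ji]}=\mathbf{V}^{[j]}$ \emph{independent of the transmitter} $i$, equal to the set of monomials $\prod (h^{[ri]}_{k_r})^{\alpha^{[ri]}_{k_r}}$ built \emph{only} from the cross coefficients $r\neq j$ (all transmitters $i$, all states $k_r$) with exponents in $\{1,\dots,n\}$, and $\mathbf{U}^{[j]}$ the same with exponents in $\{1,\dots,n+1\}$. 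Then the $M$ desired directions $h^{[ji]}_{k_j}\mathbf{V}^{[j]}$ at receiver $j$ are mutually distinct because no $h^{[ji]}_{k_j}$ occurs inside $\mathbf{V}^{[j]}$, and they are distinct from all interference because every member of $h^{[ji]}_{k_j}\mathbf{V}^{[j]}$ carries each receiver-$l$ coefficient ($l\neq j$) with positive exponent while $\mathbf{U}^{[l]}$ contains none of them. With that choice made explicit (and with the stream counts $n^{\Gamma_j}$, $\Gamma_j=M\sum_{r\neq j}J_r$, matched to the monomial sets), your outline becomes the paper's proof.
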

\proof For the non-compound $X$ network, \cite{Cadambe_Jafar_X}
shows that the total degrees of freedom cannot be more than
$\frac{MN}{M+N-1}$. Since compound $X$ network has more decoding
constraints, the outer bound for non-compound $X$ network is also an
outer bound for the compound $X$ network. Next, we provide an
outline of achievable scheme for $2 \times 2$ user $X$ network with
2 states at each user. The detailed proof for general case is
provided in the Appendix.

\begin{figure}[!t]
\centering
\includegraphics[width=4.5in, trim=205 150 0 120]{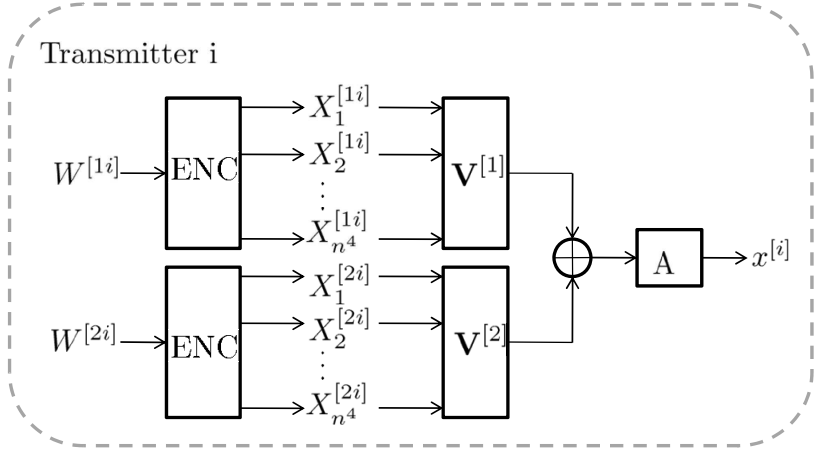}
\caption{Encoding at transmitter $i$} \label{fig:compoundxenc}
\end{figure}
The message from transmitter $i (i=1,2)$, to receiver $j (j=1,2)$
denoted as $W^{[ji]}$, is encoded into $n^4$ independent data
streams. Let $X^{[ji]}_k$ denote the symbol of $k$th data streams
from transmitter $i$ to receiver $j$. For any $\epsilon>0$, let
$\mathcal{C}=\{x: x\in\mathbb{Z}\cap [-
P^{\frac{1-\epsilon}{2(m_n+\epsilon)}},
P^{\frac{1-\epsilon}{2(m_n+\epsilon)}}]\}$ where
$m_n=1+(n+1)^4+2n^4$. Each symbol in the data stream is obtained by
uniformly i.i.d. sampling $\mathcal{C}$. Essentially, each symbol
carries $\frac{1}{m_n}$ degrees of freedom.

At transmitter $i, \forall i=1,2$, the transmitted signal for user
$j$ is obtained by multiplexing different data streams using a
vector $\mathbf{V}^{[j]}$. After scaling with a factor $A$, the
transmitted signal is
\begin{eqnarray}
x^{[i]} &=&
A(\mathbf{V}^{[1]}\mathbf{X}^{[1i]}+\mathbf{V}^{[2]}\mathbf{X}^{[2i]})~~i=1,2
\end{eqnarray}
where $\mathbf{X}^{[ji]}=[X^{[ji]}_1 X^{[ji]}_2 \cdots
X^{[ji]}_{n^4}]^T$ and $\mathbf{V}^{[j]}=[V^{[j]}_1~V^{[j]}_2 \cdots
V^{[j]}_{n^4}] ~\forall j= 1,2$. The encoding at the transmitter is
illustrated in Figure \ref{fig:compoundxenc}. $A$ is a scalar which
is designed such that the power constraints are satisfied, i.e.,
\begin{eqnarray}
E((x^{[i]})^2)\leq P ~~\forall i=1,2
\end{eqnarray}
\begin{eqnarray}
E((x^{[i]})^2)\leq A^2 P^{\frac{1-\epsilon}{m_n+\epsilon}}
\sum_{j=1}^2 \|\mathbf{V}^{[j]}\|^2\leq P
\end{eqnarray}
Let $\lambda^2= \sum_{j=1}^2 \|\mathbf{V}^{[j]}\|^2$ which is a
constant, then
\begin{eqnarray}
A^2P^{\frac{1-\epsilon}{m_n+\epsilon}} \lambda^2 &\leq& P\\
\Rightarrow A &=& \frac{1}{\lambda
}P^{\frac{m_n-1+2\epsilon}{2(m_n+\epsilon)}}
\end{eqnarray}

The received signal for the first state at receiver 1 is given by:
\begin{eqnarray}
y^{[1]}_1 &=&\sum_{i=1}^{2}h^{[1i]}_{1}x_i+z^{[1]}_{1}\notag\\
          &=&A( \underbrace{h_{1}^{[11]}
          \mathbf{V}^{[1]}\mathbf{X}^{[11]}+h^{[12]}_{1}\mathbf{V}^{[1]}\mathbf{X}^{[12]}}_{\textrm{desired
          signal}}+\underbrace{h_{1}^{[11]}\mathbf{V}^{[2]}\mathbf{X}^{[21]}+h_{1}^{[12]}\mathbf{V}^{[2]}\mathbf{X}^{[22]}}_{\textrm{interference}})+z^{[1]}_{1}\notag
\end{eqnarray}
In order to get $2n^4$ interference free dimensions for desired
signal in a total of $1+2n^4+(n+1)^4$ dimensional space, we align
all interference into a $(n+1)^4$ dimensional subspace spanned by
members of $\mathbf{U}^{[2]}$:
\begin{eqnarray}
 \textrm{span}(h^{[11]}_1 \mathbf{V}^{[2]}) &\subset&  \textrm{span}(\mathbf{U}^{[2]})\\
  \textrm{span}(h^{[12]}_1\mathbf{V}^{[2]}) &\subset&
  \textrm{span}(\mathbf{U}^{[2]})
\end{eqnarray}
Similarly, for the second state at receiver 1, we have following
alignment conditions:
\begin{eqnarray}
  \textrm{span}(h^{[11]}_2\mathbf{V}^{[2]}) &\subset&  \textrm{span}(\mathbf{U}^{[2]})\\
  \textrm{span}(h^{[12]}_2\mathbf{V}^{[2]}) &\subset&
  \textrm{span}(\mathbf{U}^{[2]})
\end{eqnarray}
By symmetry, the alignment conditions for user 2 are
\begin{eqnarray}
  \textrm{span}(h^{[22]}_i\mathbf{V}^{[1]}) &\subset&  \textrm{span}(\mathbf{U}^{[1]})\\
  \textrm{span}(h^{[21]}_i\mathbf{V}^{[1]}) &\subset&  \textrm{span}(\mathbf{U}^{[1]})~~i=1,2
 \end{eqnarray}
From Lemma \ref{lemma:ia}, we can construct $\mathbf{V}^{[1]}$,
$\mathbf{V}^{[2]}$, $\mathbf{U}^{[1]}$ and $\mathbf{U}^{[2]}$ to
satisfy those equations. As a result, all interference is received
along members of $\mathbf{U}^{[1]}$ and $\mathbf{U}^{[2]}$ at user 2
and 1, respectively. It can be seen that members of
$\mathbf{V}^{[1]}$ and $\mathbf{V}^{[2]}$ are distinct and
rationally independent. Notice that members of $\mathbf{V}^{[1]}$
and $\mathbf{U}^{[1]}$ depend on $h^{[22]}_i$ and $h^{[21]}_i$,
$\forall i=1,2$ while members of $\mathbf{V}^{[2]}$ and
$\mathbf{U}^{[2]}$ depend on $h^{[12]}_i$ and $h^{[11]}_i$. Thus,
all the desired data streams are received along distinct directions
from the interference and none of them is 1. Thus, each message can
achieve $\frac{n^{4}}{1+(n+1)^{4}+2n^{4}}$ degrees of freedom almost
surely regardless of channel realizations. As $n \rightarrow
\infty$, each message achieves $\frac{1}{3}$ degrees of freedom for
a total of $\frac{4}{3}$ degrees of freedom. \hfill\QED

\noindent{\it Remark:} Theorem \ref{thm:compoundX} also establishes
the total degrees of freedom for the real $M\times N$ wireless $X$
network with constant channel coefficients are $\frac{MN}{M+N-1}$
almost surely. Since this is a special case of compound $X$ network
when each user has only one state. In addition, this indicates that
the finite state compound $X$ channel does not lose any DoF compared
to the non-compound setting.

\section{Compound Interference Channel}
\begin{figure}[!t]
\centering
\includegraphics[width=3in]{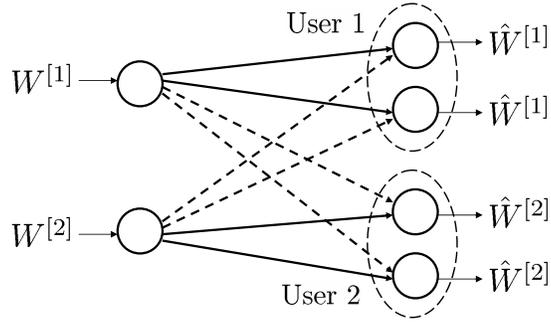}
\caption{$2$ User Compound Interference Channel with $J_1=J_2=2$}
\label{fig:compoundifc2u2s}
\end{figure}
A $K$ user compound interference channel consists of $K$ transmitter
and receiver pairs. Each transmitter sends an independent message
$W^{[j]}, \forall j=1,\ldots, K$ with rate $R^{[j]}$ to its
receiver. Channels associated with receiver $j$ are denoted as the
vector $(h^{[j1]},\cdots,h^{[jM]})$ which comes from a finite set
$\mathcal{J}_j$ with cardinality $J_j$. In addition, we assume the
channel states are drawn from a continuous distribution and hence
algebraically independent over rational numbers almost surely. Once
the channel is drawn, it remains fixed during the entire
transmission. While the transmitters are unaware of the specific
channel state realization, the receivers are assumed to have perfect
channel knowledge. In this section, we consider the real compound
interference channel. The received signal at user $j$ under state
$k_j$ is given by
\begin{eqnarray}
y^{[j]}_{k_j}=\sum_{i=1}^{K}h^{[ji]}_{k_j}x^{[i]}+z^{[j]}_{k_j}~~~
k_j=1,\ldots, J_j,~\forall j=1,\ldots, K
\end{eqnarray}
where $y^{[j]}_{k_j}$, $h^{[ji]}_{k_j}$ and  $x^{[i]}$  represent
the received signal, channel coefficient and transmitted signal,
respectively. Transmitters satisfy the power constraint
$E(x_i^2)\leq P$. $z^{[j]}_{k_j}$ is AWGN with zero mean and unit
variance. We say a rate tuple $(R^{[1]},\ldots,R^{[K]})$ is
achievable if each receiver can decode its message with arbitrarily
small error probability regardless of what state is realized. The
total number of degrees of freedom, $d$, is defined as
\begin{eqnarray}
d=\lim_{P\rightarrow
\infty}\frac{R^{[1]}+\cdots+R^{[K]}}{\frac{1}{2}\log P}
\end{eqnarray}
A two user compound interference channel with 2 states at each user
is shown in Figure \ref{fig:compoundifc2u2s}.

\subsection{Degrees of Freedom of Compound Interference Channel}
Similar to compound $X$ channel, $K$ user interference networks do
not lose DoF in the finite state compound channel setting. We
present the result in the following theorem.
\begin{theorem}\label{thm:compoundic}
The degrees of freedom for $K$ user real compound interference
channel with finite states at each user are $\frac{K}{2}$ almost
surely.
\end{theorem}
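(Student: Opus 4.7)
The plan separates into a straightforward converse and a rational-dimension achievability argument modeled directly on the proof of Theorem~\ref{thm:compoundX}.

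For the converse, any coding scheme that satisfies the compound reliability requirement also works when each receiver is fixed in a single state, so the sum DoF cannot exceed the $K/2$ outer bound known for the non-compound real $K$-user interference channel with generic constant coefficients \cite{Cadambe_Jafar_int, Motahari_Gharan_Maddahali_Khandani}.

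For achievability, I would adapt the rational alignment construction used in the proof of Theorem~\ref{thm:compoundX}. Each transmitter $i$ encodes $W^{[i]}$ into $n^{\Gamma}$ integer streams drawn from $\mathcal{C}=\mathbb{Z}\cap[-P^{\frac{1-\epsilon}{2(m_n+\epsilon)}},P^{\frac{1-\epsilon}{2(m_n+\epsilon)}}]$ and transmits $x^{[i]}=A\,\mathbf{V}^{[i]}\mathbf{X}^{[i]}$, where the entries of $\mathbf{V}^{[i]}$ are distinct monomials in the cross-channel coefficients $\{h^{[j'i']}_{k_{j'}}:i'\ne j',\,k_{j'}=1,\ldots,J_{j'}\}$ and $A$ is the usual power-normalizing scalar. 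At receiver $j$ in state $k_j$ the received signal is
\begin{equation*}
y^{[j]}_{k_j}=A\,h^{[jj]}_{k_j}\mathbf{V}^{[j]}\mathbf{X}^{[j]}+A\sum_{i\ne j}h^{[ji]}_{k_j}\mathbf{V}^{[i]}\mathbf{X}^{[i]}+z^{[j]}_{k_j},
\end{equation*}
and the alignment objective is to place every interference direction inside a common $1\times(n+1)^{\Gamma}$ vector $\mathbf{U}^{[j]}$ of monomials:
\begin{equation*}
\textrm{span}\bigl(h^{[ji]}_{k_j}\mathbf{V}^{[i]}\bigr)\subset\textrm{span}\bigl(\mathbf{U}^{[j]}\bigr),\qquad\forall\,i\ne j,\ k_j=1,\ldots,J_j.
\end{equation*}
Following Lemma~\ref{lemma:ia}, I would let $\Gamma$ count the full set of cross-channel variables appearing in these constraints and choose the exponents of $\mathbf{V}^{[i]}$ and $\mathbf{U}^{[j]}$ to range over $\{1,\ldots,n\}^{\Gamma}$ and $\{1,\ldots,n+1\}^{\Gamma}$, respectively, so that multiplying any monomial in $\mathbf{V}^{[i]}$ by any $h^{[ji]}_{k_j}$ still lands inside $\mathbf{U}^{[j]}$.

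The DoF accounting then proceeds exactly as in the $X$-channel proof. The desired monomials $h^{[jj]}_{k_j}\mathbf{V}^{[j]}$ are rationally independent of $\mathbf{U}^{[j]}$ almost surely because the direct coefficients $\{h^{[jj]}_{k_j}\}$ are absent from the cross-channel variables generating both $\mathbf{V}^{[j]}$ and $\mathbf{U}^{[j]}$. Once the three conditions of the rational-dimension framework are verified, each user carries $n^{\Gamma}$ clean streams inside an $m_n=1+n^{\Gamma}+(n+1)^{\Gamma}$ dimensional rational signal space, contributing $n^{\Gamma}/m_n$ to the DoF, and summing over the $K$ users yields $Kn^{\Gamma}/m_n\to K/2$ as $n\to\infty$.

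The main obstacle is ensuring that a \emph{single} $\mathbf{V}^{[i]}$ per transmitter simultaneously aligns all $J_j$ state-dependent copies of its interference footprint at every receiver $j\ne i$, without blowing up the cardinality of $\mathbf{U}^{[j]}$ beyond what the asymptotic ratio $n^{\Gamma}/(n+1)^{\Gamma}\to 1$ can tolerate. Absorbing all the required cross-channel variables into one tensor-product monomial grid inflates $\Gamma$ considerably, but the Cadambe--Jafar/Motahari cardinality argument renders this inflation immaterial in the limit, so the asymptotic DoF count remains pinned at $K/2$.
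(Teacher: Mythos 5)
Your proposal is correct and follows essentially the same route as the paper's proof: the same exponent count $\Gamma=(K-1)(J_1+\cdots+J_K)$, the same monomial construction via Lemma~\ref{lemma:ia}, the same $m_n=1+n^{\Gamma}+(n+1)^{\Gamma}$, and the same separation of desired directions from interference via the absence of the direct coefficients $h^{[jj]}_{k_j}$ from the monomial grid. The only cosmetic difference is that you index $\mathbf{V}^{[i]}$ and $\mathbf{U}^{[j]}$ by transmitter and receiver, but since each is drawn from the full tensor-product grid over all cross-channel variables they coincide with the paper's single $\mathbf{V}$ and $\mathbf{U}$; you also state the converse (restriction to a single state) explicitly, which the paper leaves implicit.
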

\begin{proof}
The proof is provided in the Appendix. \hfill \QED
\end{proof}

\noindent{\it Remark:} Note that if we view different states
associated at each receiver as different users which require
distinct messages from the corresponding transmitter, it is
equivalent to an interference broadcast channel which models the
downlink of cellular network. Specifically, consider a cellular
network with $M$ cells in each of which there are $K$ users. Then
using similar interference alignment schemes used for compound $X$
channel, a total of $\frac{MK}{K+1}$ DoF can be achieved.

\section{Discussion on DoF of Complex Compound Wireless
Networks}\label{sec:discussion}

In Section 3, 4 and 5, we establish the total DoF for different {\em
real} compound wireless networks. A natural question is whether the
same results can be obtained for their complex counterparts. The key
to answer this question is to determine if the interference
alignment schemes used for the real compound interference networks
can be adopted in networks with complex channel coefficients. Since
the alignment schemes are designed for real channel coefficients, we
first write the complex channel as an equivalent channel with real
channel coefficients. Consider a point to point complex channel with
a complex channel coefficient $h$ and a complex noise term $z$. Then
the channel can be written as
\begin{eqnarray}
\left[\begin{array}{c}y_R\\y_I\end{array}\right]=
\left[\begin{array}{cc}h_R & -h_I \\ h_I & h_R \end{array}\right]
\left[\begin{array}{c}x_R\\x_I\end{array}\right]+
\left[\begin{array}{c}z_R \\
z_I\end{array}\right]\label{eqn:complextoreal}
\end{eqnarray}
where subscripts $R$ and $I$ denote the real and imaginary part of
the complex number, respectively. To study the feasibility of the
interference alignment schemes of real wireless networks on their
complex counterparts, let us first consider the non-compound
interference network. In fact, in
\cite{Motahari_Gharan_Maddahali_Khandani}, it is pointed out that
for the $K$ user complex interference channel, every user is still
able to achieve 1/2 DoF. This can be done by viewing  real and
imaginary dimensions  as two independent users. As a result, a user
with complex channel coefficients is converted to a two user real
interference channel where the channel input-output relation is
given by \eqref{eqn:complextoreal}. Thus, instead of a $K$ user
complex interference channel, we obtain a $2K$ user real
interference channel with some dependence among channel coefficients
in the network. Now using the interference alignment scheme on the
real interference network, all interference can be aligned at each
receiver. Note that all beamforming directions are monomials with
variables of all distinct cross links. Since the direct links are
{\em distinct} with all cross links, the directions of desired
signal are distinct of all interference after scaling with the
direct channel, and thus the desired signal is rationally
independent with all interference. Thus, a total of $K$ real DoF,
and hence $K/2$ complex DoF can be achieved. Similarly, the compound
complex interference channel also has $K/2$ DoF almost surely.

However, it is difficult to make the same case for the complex
compound $X$ channel using a similar approach as in the compound
interference channel. It turns out the desired signals overlap with
each other. Intuitively, this is because desired links are the same
as some interference links. Thus, although we can align signals at
receivers where they are not desired, they are aligned at the
desired receivers as well. To avoid the dependence among channel
coefficients, at the cost of nearly half the DoF, we can restrict
the transmitters to send only real signals. For example, for an $M
\times 2$ complex compound $X$ channel where the number of states
associated with each user is greater than $M$, we can obtain an $M
\times 4$ user real compound $X$ channel with all channel
coefficients independent with each other. From Theorem
\ref{thm:compoundX}, $\frac{4M}{M+3}$ real DoF and hence
$\frac{2M}{M+3}$ complex DoF can be achieved. Notice that when
$M>3$, more than 1 DoF can be achieved.  Thus, even for the 2 user
complex MISO broadcast channel,  more than 1 DoF can be achieved
regardless of the number of states as long as the number of antennas
at the transmitter is greater than 3.

\section{Conclusion}\label{sec:5}
This work was motivated by the need to resolve the remarkable
contrast between optimistic results that advocate structured codes
based on the high DoF that can be achieved with perfect channel
knowledge, and pessimistic conjectures that claim that without
perfect channel knowledge the DoF collapse to unity. The strongest
pessimistic conjectures were made by Weingarten et. al. in the
finite state compound channel setting for the MISO BC. In this work
we settle these conjectures in the negative, thereby showing that in
the \emph{finite} state compound channel setting, the DoF results
based on structured codes are robust to channel uncertainty at the
transmitters.

In retrospect, it is perhaps not too surprising that the finite
state compound channel setting does not lose DoF. For example,
consider the $K$ user interference channel. Within this channel,
consider the signals sent by transmitter 1 and 2. In order to
achieve the full $K/2$ DoF, it is clear that these signals must
align at receivers $3,4,\cdots, K$. Clearly, as $K$ increases, i.e.,
more and more receivers are added, bringing new channels into the
picture, the signals from transmitter 1 and 2 must be aligned at
these new receivers while still maintaining alignment at the
previously existing receivers. While it may be surprising at first
to find out that this can be done, it has already been shown in
\cite{Cadambe_Jafar_int}. The compound network setting offers a very
similar challenge. Whatever alignments are needed, must be achieved
for not just one state but for an arbitrary (but finite) number of
states. In the $K$ user interference channel example above, if we
think of the channels to receivers $3,4,\cdots, K$ as multiple
states for the same user, it is clear that the alignment is robust
to the number of states.

The key to the robustness of DoF in the finite state compound
setting is the same as the key to the $K/2$ DoF of the $K$ user
interference channel -- unbounded bandwidth expansion, or
equivalently unlimited resolution in time, frequency, space, or
signal level dimensions. As the alignment problem becomes more and
more challenging, whether by increasing the number of states in the
compound setting or by increasing the number of users in the $K$
user interference channel, greater and greater bandwidth
(equivalently, resolution) is needed to achieve partial alignment.
In the time-varying/frequency-selective $K$ user interference
channel the bandwidth expansion refers to the need to code over
increasingly larger number of symbols. Thinking of these symbols as
frequency slots, we call this a bandwidth expansion. Similar
bandwidth expansion (equivalent to the unbounded resolution of
propagation delays) is observed in the line of sight alignment
schemes found in \cite{Cadambe_Jafar_Delay, Grokop_Tse_Yates}.
Interestingly, when we think of signal level as a signaling
dimension, the unbounded bandwidth expansion or unlimited resolution
essentially corresponds to the infinite precision knowledge of the
channel coefficients. With this infinite precision, we have an
infinite number of signaling level dimensions along which
interference can be aligned regardless of the number of states. Note
that the rational/irrational scaled lattice alignment schemes follow
a complete translation of the Cadambe-Jafar alignment scheme
\cite{Cadambe_Jafar_int} from the time-varying/frequency-selective
channel model to the real constant channel model. Once the role of
rational/irrational scaled lattice alignments is understood in this
context, it is not surprising that finite state compound networks
retain their ability to align signals and thus do not lose their DoF
entirely.

While the DoF are not entirely lost in the finite state compound
setting, it is intriguing that the benefits of transmitter
cooperation are lost. In other words, the MIMO benefits of vector
space alignment are lost. This observation may indicate the distinct
character of alignment schemes over vector spaces and signal levels.
It is notable that inspite of a variety of results on these
different alignment approaches, it has not been possible so far to
unify them into a common framework to understand their collective
synergies and individual limitations.

Another intriguing question that remains open is the implication of
real versus complex models. In all well-studied networks so far,
this issue has been of no real consequence. For interference
alignment problems this issue does become important as it affects
the structure of the channel. However, it is not clear if this is an
avoidable nuisance or if there is a fundamental distinction between
the two settings.

Finally, in the current line of work, the most important issue that
remains unresolved is the robustness of DoF characterizations to
compound networks with infinite states or a continuum of states. In
this regard, the conjecture of Lapidoth et. al.
\cite{Lapidoth_Shamai_Wigger} is most relevant, as is the recent
work on the DoF of the two user MIMO interference channel
\cite{Huang_Jafar_Shamai_Vishwanath}. The overarching observation is
that the best outer bounds known so far are not able to distinguish
between channel uncertainty at the transmitters over a finite set of
states or over a continuum of states. To prove the pessimistic
hypothesis, if indeed the DoF collapse to unity with channel
uncertainty over a continuous (non-zero measure) channel space, then
better outer bounds are needed that can distinguish this setting
from the finite state compound setting. On the other hand, to prove
the optimistic hypothesis, that the DoF are indeed resilient to
channel uncertainty over a continuum, then a much finer
understanding of statistical interference alignment is needed. In
either case, settling this issue will have a profound impact on our
understanding of both the capacity limits of wireless networks as
well as the robustness of these limits.

\appendix
\section{Proof of Theorem {\ref{theorem:firstconj}}}\label{app:firstconj}
\noindent{\it Proof:} The converse is shown in
\cite{Weingarten_Shamai_Kramer}. The achievable scheme is
interference alignment with asymmetric signaling.

Consider the received signal at user $k$ under state $j_k$ in a
single time slot.
\begin{eqnarray}
y_{j_k}^{[k]}={\bf h}_{j_k}^{[k]}{\bf x}+z_{j_k}^{[k]}
\end{eqnarray}
By viewing complex variables as two dimensional vectors, the
received signal can be written as
\begin{eqnarray}
\underbrace{\left[\begin{array}{c}
\mbox{Re}\{y_{j_k}^{[k]}\}\\\mbox{Im}\{y_{j_k}^{[k]}\}\end{array}\right]}_{{\bf
y}_{j_k}^{[k]}:2\times
1}=\underbrace{\left[\!\!\begin{array}{rrrr}\mbox{Re}\{h_{j_k1}^{[k]}\}\!\!&\!\!-\mbox{Im}\{h_{j_k1}^{[k]}\}\!\!&\!\!\mbox{Re}\{h_{j_k2}^{[k]}\}\!\!&\!\!-\mbox{Im}\{h_{j_k2}^{[k]}\}\\\mbox{Im}\{h_{j_k1}^{[k]}\}\!\!&\!\!\mbox{Re}\{h_{j_k1}^{[k]}\}\!\!&\!\!\mbox{Im}\{h_{j_k2}^{[k]}\}\!\!&\!\!\mbox{Re}\{h_{j_k2}^{[k]}\}\end{array}\!\!\right]}_{{\bf
H}_{j_k}^{[k]}:2\times 4}\underbrace{\left[\!\!\begin{array}{c}
\mbox{Re}\{x_1\}\\\mbox{Im}\{x_1\}\\\mbox{Re}\{x_2\}\\\mbox{Im}\{x_2\}\end{array}\!\!\right]}_{{\bf
x}:4\times
1}+\underbrace{\left[\begin{array}{c}\mbox{Re}\{z_{j_k}^{[k]}\}\\\mbox{Im}\{z_{j_k}^{[k]}\}\end{array}\right]}_{{\bf
z}_{j_k}^{[k]}:2\times1}
\end{eqnarray}
Thus we convert the original $1\times 2$ complex MISO BC to a $2
\times 4$ real MIMO BC with a special structure in the channel
matrices. On this new real channel, therefore, we need to show the
achievability of a total of $\frac{3}{2}\times 2=3$ DoF. Due to
$J_1=1$ in this network, we omit the state index of user 1 and
replace $j_2$ with $j$ to denote the state of user 2.

The transmitter sends 2 independent data streams $x_1^{[1]}$ and
$x_2^{[1]}$ to user 1 along with beamforming vectors ${\bf
u}_1^{[1]}$ and ${\bf u}_2^{[1]}$, respectively. In addition, it
sends 1 data stream to user 2 with beamforming vector ${\bf
u}^{[2]}$. Mathematically, we have
\begin{eqnarray}
\overline{{\bf x}}^{[1]}&\!\!\!\!=\!\!\!\!&{\bf u}_1^{[1]}
x_1^{[1]}+{\bf u}_2^{[1]} x_2^{[1]}=\left[{\bf u}_1^{[1]}~{\bf
u}_2^{[1]}\right]\left[\begin{array}{c}x_1^{[1]}\\x_2^{[1]}\end{array}\right]\triangleq{\bf
U}^{[1]}{\bf x}^{[1]}\\
\overline{{\bf x}}^{[2]}&\!\!\!\!=\!\!\!\!&{\bf u}^{[2]} x^{[2]}
\end{eqnarray}
Thus the transmit vector is ${\bf x}=\overline{{\bf
x}}^{[1]}+\overline{{\bf x}}^{[2]}$, and the received signal vectors
of two users are
\begin{eqnarray}
{\bf y}^{[1]}&\!\!\!\!=\!\!\!\!&{\bf H}^{[1]}{\bf
x}+{\bf z}^{[1]}\\
{\bf y}_j^{[2]}&\!\!\!\!=\!\!\!\!&{\bf H}_j^{[2]}{\bf x}+{\bf
z}_j^{[2]}
\end{eqnarray}
At state $j$ of user $2$, we use a $2\times 1$ combining vector
${\bf v}_j^{[2]}$ to get one interference free dimension. Thus the
signal vectors of user 1 and user 2 under state $j$ after linear
combination can be represented as
\begin{eqnarray}
{\bf r}^{[1]}&\!\!\!\!\triangleq \!\!\!\!&{\bf y}^{[1]}={\bf
H}^{[1]}{\bf U}^{[1]}{\bf x}^{[1]}+{\bf H}^{[1]}{\bf
u}^{[2]}x^{[2]}+{\bf z}^{[1]}\\
{\bf r}_j^{[2]}&\!\!\!\!=\!\!\!\!&{\bf v}_j^{[2]T}{\bf
y}_j^{[2]}={\bf v}_j^{[2]T}{\bf H}_j^{[2]}{\bf u}^{[2]}x^{[2]}+{\bf
v}_j^{[2]T}{\bf H}_j^{[2]}{\bf U}^{[1]}{\bf x}^{[1]}+{\bf
v}_j^{[2]T}{\bf z}_j^{[2]}
\end{eqnarray}
In order to decode the desired signals without interference at both
users, we just need to zero force the second item (interference
item) of the two equations above. Thus, our goal is to design ${\bf
U}^{[1]}, {\bf u}^{[2]},{\bf v}_j^{[2]},~j=1,2,3$ such that
following equations are satisfied.
\begin{equation}
\left\{
\begin{array}{rl}
{\bf U}^{[1]T}{\bf H}_j^{[2]T}{\bf v}_j^{[2]}&\!\!\!\!={\bf 0},~~~j=1,2,3\\
{\bf u}^{[2]T}{\bf H}^{[1]T}&\!\!\!\!={\bf 0}
\end{array}
\right.
\end{equation}
To satisfy the first condition, the dimension of the column space of
$\left[\!{\bf H}_1^{[2]T}\!\!\!{\bf v}_1^{[2]}~{\bf
H}_2^{[2]T}\!\!\!{\bf v}_2^{[2]}~{\bf H}_3^{[2]T}\!\!\!{\bf
v}_3^{[2]}\!\right]$ cannot be larger than 2. This is because ${\bf
U}^{[1]T}$ is a $2\times 4$ matrix, which has a 2 dimensional null
space. Since the column spaces of ${\bf H}_1^{[2]T}$ and ${\bf
H}_2^{[2]T}$ only have null intersection, the matrix $\left[{\bf
H}_1^{[2]T}{\bf v}_1^{[2]}~{\bf H}_2^{[2]T}{\bf v}_2^{[2]}\right]$
has rank 2 almost surely. Therefore, we align ${\bf H}_3^{[2]T}{\bf
v}_3^{[2]}$ into the space spanned by column vectors of $\left[{\bf
H}_1^{[2]T}{\bf v}_1^{[2]}~{\bf H}_2^{[2]T}{\bf v}_2^{[2]}\right]$.
To achieve this aim, we first generate ${\bf v}_3^{[2]}$ randomly,
then let
\begin{eqnarray}
\left[\begin{array}{c}{\bf v}_1^{[2]}\\{\bf
v}_2^{[2]}\end{array}\right]={\left[{\bf H}_1^{[2]T}~{\bf
H}_2^{[2]T}\right]}^{-1}{\bf H}_3^{[2]T}{\bf v}_3^{[2]}
\end{eqnarray}
Therefore, we can find 2 linearly independent beamforming vectors of
${\bf U}^{[1]}$ for user 1 that are orthogonal to the column vectors
of $\left[{\bf H}_1^{[2]T}{\bf v}_1^{[2]}~{\bf H}_2^{[2]T}{\bf
v}_2^{[2]}\right]$ and 1 vector ${\bf u}^{[2]}$ for user 2 which is
orthogonal to the column vectors of ${\bf H}^{[1]T}$ such that both
users are free of interference.

What remains to be shown is that at any receiver, the desired signal
vectors after linear combination are linearly independent among
themselves.

First, consider the desired signal after linear combination at user
2 under state $j$ which is given by ${\bf v}_j^{[2]T}{\bf
H}_j^{[2]}{\bf u}^{[2]}x^{[2]}$. We only need to show ${\bf
u}^{[2]T}{\bf H}_j^{[2]T}{\bf v}_j^{[2]}\neq 0$. Note that ${\bf
u}^{[2]}$ can be arbitrarily chosen as any vector orthogonal to the
column vectors of ${\bf H}^{[1]T}$. Thus, ${\bf u}^{[2]T}{\bf
H}_j^{[2]T}{\bf v}_j^{[2]}=0$ implies that ${\bf H}_j^{[2]T}{\bf
v}_j^{[2]}$  lies in the column space of ${\bf H}^{[1]T}$. This,
however, cannot be true since ${\bf H}^{[1]T}, {\bf H}_j^{[2]T}$ are
$4\times 2$ matrices generated i.i.d, the column spaces of ${\bf
H}^{[1]T}, {\bf H}_j^{[2]T}$ only have null intersection almost
surely.

Second, we consider the desired signal of user 1 which is given by
${\bf H}^{[1]}{\bf U}^{[1]}{\bf x}^{[1]}$. To separate two data
streams carried by ${\bf x}^{[1]}$, ${\bf H}^{[1]}{\bf U}^{[1]}$ or
equivalently ${\bf U}^{[1]T}{\bf H}^{[1]T}$ should be a full rank
matrix. Recall that ${\bf U}^{[1]}$ is chosen such that ${\bf
U}^{[1]T}\left[{\bf H}_1^{[2]T}{\bf v}_1^{[2]}~{\bf H}_2^{[2]T}{\bf
v}_2^{[2]}\right]={\bf 0}$. Thus, to show ${\bf H}^{[1]T}$ does lie
in the null space of ${\bf U}^{[1]T}$, we only need to prove the
following matrix has full rank.
\begin{eqnarray}
\left[{\bf H}^{[1]T}~{\bf H}_1^{[2]T}{\bf v}_1^{[2]}~{\bf
H}_2^{[2]T}{\bf v}_2^{[2]}\right]\label{eqn11}
\end{eqnarray}
Since ${\bf h}^{[1]},{\bf h}_1^{[2]},{\bf h}_2^{[2]}$ are three
$1\times 2$ vectors generated i.i.d., we are able to find two
non-zero real coefficients $\beta_1,\beta_2$ such that
\begin{eqnarray}
{\bf h}^{[1]T}=\beta_1{\bf h}_1^{[2]T}+\beta_2{\bf h}_2^{[2]T}
\end{eqnarray}
Considering the mapping from the complex channel to a real channel,
we can see that the matrix ${\bf H}^{[1]}$ can also be linearly
represented by ${\bf H}_1^{[2]},{\bf H}_2^{[2]}$ with the same
coefficients.
\begin{eqnarray}
{\bf H}^{[1]T}=\beta_1{\bf H}_1^{[2]T}+\beta_2{\bf
H}_2^{[2]T}\label{eqn10}
\end{eqnarray}
Since $\left[\!{\bf H}_1^{[2]\!T}\!~\!{\bf H}_2^{[2]\!T}\!\right]$
has full rank almost surely, substituting (\ref{eqn10}) into
(\ref{eqn11}) and multiplying $\left[\!{\bf H}_1^{[2]\!T}\!~\!{\bf
H}_2^{[2]\!T}\!\right]^{\!-\!1}$ to the left hand side of
(\ref{eqn11}) do not change the rank of (\ref{eqn11}). Therefore, we
just need to prove the following matrix has full rank almost surely.
\begin{eqnarray}
\left[\begin{array}{ccc}\beta_1{\bf I}_{2\times 2}&{\bf
v}_1^{[2]}&{\bf O}\\\beta_2{\bf I}_{2\times 2}&{\bf O}&{\bf
v}_2^{[2]}\end{array}\right]
\end{eqnarray}
Recall that
\begin{eqnarray}
\left[\begin{array}{c}{\bf v}_1^{[2]}\\{\bf
v}_2^{[2]}\end{array}\right]={\left[{\bf H}_1^{[2]T}~{\bf
H}_2^{[2]T}\right]}^{-1}{\bf H}_3^{[2]T}{\bf
v}_3^{[2]}=\left[\begin{array}{c}{\bf B}_1{\bf v}_3^{[2]}\\{\bf
B}_2{\bf v}_3^{[2]}\end{array}\right]
\end{eqnarray}
where ${\bf B}_1,{\bf B}_2$ are both $4\times 4$ full rank matrices
with the form
\begin{equation}
\left\{
\begin{array}{rl}
{\bf B}_m&\!\!\!\!=\left[\begin{array}{rr}\mbox{Re}\{b_m\}&-\mbox{Im}\{b_m\}\\ \mbox{Im}\{b_m\}&\mbox{Re}\{b_m\}\end{array}\right]\otimes {\bf I}_{2\times 2},~~~m=1,2\\
b_1&\!\!\!\!=det(\left[{\bf h}_3^{[2]T}~{\bf
h}_2^{[2]T}\right])/det(\left[{\bf h}_1^{[2]T}~{\bf
h}_2^{[2]T}\right]), ~~b_2=det(\left[{\bf h}_1^{[2]T}~{\bf
h}_3^{[2]T}\right])/det(\left[{\bf h}_1^{[2]T}~{\bf
h}_2^{[2]T}\right])
\end{array}
\right.
\end{equation}
where $\otimes$ indicates the Kronecker product operation. The
channels ${\bf h}_j^{[2]}$ are generated i.i.d., thus ${\bf B}_1$ is
not a scaling version of ${\bf B}_2$ almost surely. Scaling the row
vectors and column vectors does not change the rank of a matrix,
therefore, we only need to show that the following matrix has full
rank almost surely.
\begin{eqnarray}
\left[\begin{array}{ccc}{\bf I}_{2\times 2}&{\bf B}_1{\bf
v}_3^{[2]}&{\bf O}\\{\bf I}_{2\times 2}&{\bf O}&{\bf B}_2{\bf
v}_3^{[2]}\end{array}\right]\label{eqn12}
\end{eqnarray}
Let ${\bf \lambda}$ be a $2\times1 $ linear combination vector and
$\lambda_1,\lambda_2$ be two linear combination scalar coefficients.
If the matrix (\ref{eqn12}) has full rank, the following equations
should have only zero solutions.
\begin{eqnarray}
\left[\begin{array}{ccc}{\bf I}_{2\times 2}&{\bf B}_1{\bf
v}_3^{[2]}&{\bf O}\\{\bf I}_{2\times 2}&{\bf O}&{\bf B}_2{\bf
v}_3^{[2]}\end{array}\right]\left[\begin{array}{c}{\bf \lambda}\\
\lambda_1\\ \lambda_2\end{array}\right]={\bf 0}
\end{eqnarray}
Equivalently we can rewrite it as,
\begin{equation}
\left\{
\begin{array}{rl}
{\bf B}_1{\bf v}_3^{[2]}\lambda_1&\!\!\!\!={\bf B}_2{\bf v}_3^{[2]}\lambda_2\\
-{\bf B}_2{\bf v}_3^{[2]}\lambda_2&\!\!\!\!={\bf \lambda}
\label{eqn13}
\end{array}
\right.
\end{equation}
The first equation implies that ${\bf B}_1{\bf v}_3^{[2]}$ and ${\bf
B}_2{\bf v}_3^{[2]}$ are along the same direction. However, this is
not true since ${\bf B}_1$ is not a scalar version of ${\bf B}_2$
almost surely. Thus, the only solution to ({\ref{eqn13}}) is
$\lambda_2=\lambda_3=0$ and ${\bf \lambda}={\bf 0}$. Therefore, all
the column vectors of (\ref{eqn12}) are linearly independent almost
surely. In other words, (\ref{eqn12}) is a full rank matrix almost
surely.

Overall, a total of $\frac{2+1}{2}=\frac{3}{2}$ DoF can be
achievable almost surely. \hfill\QED

\section{Proof of Theorem {\ref{theorem:secondconj}}}\label{app:secondconj}
\noindent{\it Proof:} The converse follows from
\cite{Weingarten_Shamai_Kramer}. The achievable scheme is still
interference alignment with asymmetric signaling.

Consider the $3$ consecutive time slots,
\begin{eqnarray}
\underbrace{\left[\begin{array}{c}y_{j_k}^{[k]}(3n)\\y_{j_k}^{[k]}(3n+1)\\y_{j_k}^{[k]}(3n+2)\end{array}\right]}_{3\times
1}=\underbrace{{\bf h}_{j_k}^{[k]}\otimes {\bf I}_{3\times
3}}_{3\times 6}\underbrace{\left[\begin{array}{c}{\bf x}(3n)\\{\bf
x}(3n+1)\\{\bf x}(3n+2)\end{array}\right]}_{6\times
1}+\underbrace{\left[\begin{array}{c}z_{j_k}^{[k]}(3n)\\z_{j_k}^{[k]}(3n+1)\\z_{j_k}^{[k]}(3n+2)\end{array}\right]}_{3\times
1}
\end{eqnarray}
Thus we have a $3$ dimensional complex signal space, or
equivalently, a $6$ dimensional real signal space.
\begin{eqnarray}
\underbrace{\left[\!\!\begin{array}{c}
\mbox{Re}\{y_{j_k}^{[k]}(3n)\}\\\mbox{Im}\{y_{j_k}^{[k]}(3n)\}\\\mbox{Re}\{y_{j_k}^{[k]}(3n\!\!+\!\!1)\}\\\vdots\\\mbox{Im}\{y_{j_k}^{[k]}(3n\!\!+\!\!2)\}
\end{array}\!\!\right]}_{\overline{{\bf y}}_{j_k}^{[k]}(n):6\times 1}\!\!=\!\!\underbrace{\left[\!\!\begin{array}{rr}\mbox{Re}\{h_{j_k1}^{[k]}\}\!&\!\mbox{Im}\{h_{j_k1}^{[k]}\}\\-\mbox{Im}\{h_{j_k1}^{[k]}\}\!&\!\mbox{Re}\{h_{j_k1}^{[k]}\}\\\mbox{Re}\{h_{j_k2}^{[k]}\}\!&\!\mbox{Im}\{h_{j_k2}^{[k]}\}\\-\mbox{Im}\{h_{j_k2}^{[k]}\}\!&\!\mbox{Re}\{h_{j_k2}^{[k]}\}\end{array}\!\!\right]^T\!\!\!\!\!\!\otimes \!{\bf I}_{3\times 3}\!\!}_{{\bf H}_{j_k}^{[k]}:6\times 12}\underbrace{\left[\!\!\begin{array}{c}
\mbox{Re}\{x_1(3n)\}\\\mbox{Im}\{x_1(3n)\}\\\mbox{Re}\{x_2(3n)\}\\\vdots\\\mbox{Im}\{x_2(3n\!\!+\!\!2))\}\end{array}\!\!\right]}_{\overline{{\bf
x}}(n):12\times
1}\!\!+\!\!\underbrace{\left[\!\!\begin{array}{c}\mbox{Re}\{z_{j_k}^{[k]}(3n)\}\\\mbox{Im}\{z_{j_k}^{[k]}(3n)\}\\\mbox{Re}\{z_{j_k}^{[k]}(3n\!\!+\!\!1)\}\\\vdots\\\mbox{Im}\{z_{j_k}^{[k]}(3n\!\!+\!\!2)\}\end{array}\!\!\right]}_{\overline{{\bf
z}}_{j_k}^{[k]}(n):6\times1}
\end{eqnarray}
After mapping from the complex channel to a real channel, we can
treat it as a MIMO channel with 12 and 6 antennas at the transmitter
and each receiver, respectively. Note that this mapping also
introduces a diagonal structure into the MIMO channel. Therefore, we
need to show the achievability of a total of $\frac{4}{3}\times
2\times 3=8$ DoF for this real channel.

We transmit $4$ data streams to each user. Let ${\bf
u}_m^{[k]},k=1,2~m=1,\ldots,4$ denote the $12\times 1$ beamforming
vector for the $m$-th data stream of user $k$. Then the intended
signal for user $k$ can be represented as
\begin{eqnarray}
\overline{{\bf x}}^{[k]}=\sum_{m=1}^{4}{\bf u}_m^{[k]}
x_m^{[k]}=\left[{\bf u}_1^{[k]}~\ldots~{\bf
u}_4^{[k]}\right]\left[\begin{array}{c}x_1^{[k]}\\\vdots\\x_4^{[k]}\end{array}\right]\triangleq{\bf
U}^{[k]}{\bf x}^{[k]}
\end{eqnarray}
And the transmit signal is $\overline{{\bf x}}=\overline{{\bf
x}}^{[1]}+\overline{{\bf x}}^{[2]}$. Let ${\bf V}_{j_k}^{[k]}$
denote the $6\times 4$ linear combining matrix at user $k$ under
state $j_k$ to achieve 4 interference free dimensions, then the
signal vector after linear combination is
\begin{eqnarray}
{\bf r}_{j_k}^{[k]}={\bf V}_{j_k}^{[k]T}\overline{{\bf
y}}_{j_k}^{[k]}={\bf V}_{j_k}^{[k]T}{\bf H}_{j_k}^{[k]}{\bf
U}^{[1]}{\bf x}^{[1]}+{\bf V}_{j_k}^{[k]T}{\bf H}_{j_k}^{[k]}{\bf
U}^{[2]}{\bf x}^{[2]}+{\bf V}_{j_k}^{[k]T}\overline{{\bf
z}}_{j_k}^{[k]}
\end{eqnarray}
In order for each user to see a clean channel, we need to zero force
the interference items. Equivalently we can write them in the
transpose form.
\begin{equation}
\left\{
\begin{array}{rl}
{\bf U}^{[1]T}{\bf H}_{j_1}^{[2]T}{\bf V}_{j_1}^{[2]}&={\bf 0}~~~~{j_1}=1,2,3\\
{\bf U}^{[2]T}{\bf H}_{j_2}^{[1]T}{\bf V}_{j_2}^{[1]}&={\bf
0}~~~~{j_2}=1,2,3
\end{array}
\right.
\end{equation}
${\bf H}_1^{[k]T}, {\bf H}_2^{[k]T}$ are two $12\times 6$ matrices,
and it can be easily seen that the column spaces of ${\bf
H}_1^{[k]T}$ and ${\bf H}_2^{[k]T}$ only have null intersection
almost surely. Therefore $\left[{\bf H}_1^{[k]T}{\bf V}_1^{[k]}~{\bf
H}_2^{[k]T}{\bf V}_2^{[k]}\right]$ has rank 8 almost surely. Since a
4 dimensional interference free space of the other user should be
protected, we align ${\bf H}_3^{[k]T}{\bf V}_3^{[k]}$ into the
column space of $\left[{\bf H}_1^{[k]T}{\bf V}_1^{[k]}~{\bf
H}_2^{[k]T}{\bf V}_2^{[k]}\right]$. To achieve this goal, we
generate ${\bf V}_3^{[k]}$ randomly, and let
\begin{eqnarray}
\left[\begin{array}{c}{\bf V}_1^{[k]}\\{\bf
V}_2^{[k]}\end{array}\right]={\left[{\bf H}_1^{[k]T}~{\bf
H}_2^{[k]T}\right]}^{-1}{\bf H}_3^{[k]T}{\bf V}_3^{[k]}\label{eqn4}
\end{eqnarray}
Thus we can find 4 linearly independent beamforming vectors to
determine ${\bf U}^{[k]}$ for each user $k$ such that it sees a
clean channel.

What remains to be shown is that at any state of each user, the
desired signal vectors after linear combination are linearly
independent among themselves. Without loss of generality we show
this for user 2. The same argument applies to user 1 due to symmetry
of signaling scheme. Consider the desired signal vector of user 2
under state $j_2$ after linear combination regardless of the noise,
${\bf V}_{j_2}^{[2]T}{\bf H}_{j_2}^{[2]}{\bf U}^{[2]}{\bf x}^{[2]}$.
It is equivalent to a $4\times 4$ MIMO channel, and the matrix ${\bf
U}^{[2]T}{\bf H}_{j_2}^{[2]T}{\bf V}_{j_2}^{[2]}$ should have full
rank almost surely if user 2 can decode its message. Again, since we
have ${\bf U}^{[2]T}\left[{\bf H}_1^{[1]T}{\bf V}_1^{[1]}~{\bf
H}_2^{[1]T}{\bf V}_2^{[1]}\right]={\bf 0}$, our aim can be converted
to prove the following $12\times 12$ matrix has full rank almost
surely.
\begin{eqnarray}
\left[{\bf H}_1^{[1]T}{\bf V}_1^{[1]}~{\bf H}_2^{[1]T}{\bf
V}_2^{[1]}~{\bf H}_{j_2}^{[2]T}{\bf
V}_{j_2}^{[2]}\right]~~~j_2=1,2,3\label{eqn0}
\end{eqnarray}
We show this is true for the state $j_2=3$ and $1$, and the same
argument applies to $j_2=2$.

First consider $j_2=3$. Due to structures of ${\bf H}_{j_k}^{[k]}$,
it can be easily seen that ${\bf H}_3^{[2]}$ linearly depends on
${\bf H}_1^{[1]},{\bf H}_2^{[1]}$. Thus we can find two non-zero
scalar coefficients $\beta_1,\beta_2$ such that
\begin{eqnarray}
{\bf H}_3^{[2]T}=\beta_1{\bf H}_1^{[1]T}+\beta_2{\bf
H}_1^{[2]T}\label{eqn1}
\end{eqnarray}
Again since $\left[{\bf H}_1^{[1]T}~{\bf H}_2^{[1]T}\right]$ has
full rank, substituting (\ref{eqn1}) into (\ref{eqn0}) and
multiplying $\left[{\bf H}_1^{[1]T}~{\bf H}_2^{[1]T}\right]^{-1}$ to
the left hand side of (\ref{eqn0}) do not change the rank of
(\ref{eqn0}). Therefore, we equivalently need to prove the following
matrix has full rank almost surely.
\begin{eqnarray}
\left[\begin{array}{ccc}{\bf V}_1^{[1]}&{\bf O}&\beta_1{\bf
V}_3^{[2]}\\{\bf O}&{\bf V}_2^{[1]}&\beta_2{\bf
V}_3^{[2]}\end{array}\right]
\end{eqnarray}
Remember that
\begin{eqnarray}
\left[\begin{array}{c}{\bf V}_1^{[1]}\\{\bf
V}_2^{[1]}\end{array}\right]={\left[{\bf H}_1^{[1]T}~{\bf
H}_2^{[1]T}\right]}^{-1}{\bf H}_3^{[1]T}{\bf
V}_3^{[1]}=\left[\begin{array}{c}{\bf B}_1{\bf V}_3^{[1]}\\{\bf
B}_2{\bf V}_3^{[1]}\end{array}\right]
\end{eqnarray}
where ${\bf B}_1,{\bf B}_2$ are both $6\times 6$ full rank matrices
with the form
\begin{equation}
\left\{
\begin{array}{rl}
{\bf B}_m&\!\!\!\!=\left[\begin{array}{rr}\mbox{Re}\{b_m\}&-\mbox{Im}\{b_m\}\\ \mbox{Im}\{b_m\}&\mbox{Re}\{b_m\}\end{array}\right]\otimes {\bf I}_{3\times 3},~~~m=1,2\\
b_1&\!\!\!\!=det(\left[{\bf h}_3^{[1]T}~{\bf
h}_2^{[1]T}\right])/det(\left[{\bf h}_1^{[1]T}~{\bf
h}_2^{[1]T}\right]), ~~b_2=det(\left[{\bf h}_1^{[1]T}~{\bf
h}_3^{[1]T}\right])/det(\left[{\bf h}_1^{[1]T}~{\bf
h}_2^{[1]T}\right])
\end{array}
\right.
\end{equation}
Since the channels ${\bf h}_{j_1}^{[1]}$ are generated i.i.d., the
probability of ${\bf B}_1$ being a scaling version of ${\bf B}_2$ is
zero, and scaling the row vectors and column vectors does not change
the rank of a matrix. Therefore, we only need to show that the
following matrix has full rank almost surely.
\begin{eqnarray}
\left[\begin{array}{ccc}{\bf B}_1{\bf V}_3^{[1]}&{\bf O}&{\bf
V}_3^{[2]}\\{\bf O}&{\bf B}_2{\bf V}_3^{[1]}&{\bf
V}_3^{[2]}\end{array}\right]\label{eqn7}
\end{eqnarray}
Let ${\bf \lambda}_1,{\bf \lambda}_2,{\bf \lambda}_3$ be three
$4\times1 $ linear combination vectors. If the matrix (\ref{eqn7})
has full rank, the following equations should only have zero
solutions.
\begin{eqnarray}
\left[\begin{array}{ccc}{\bf B}_1{\bf V}_3^{[1]}&{\bf O}&{\bf
V}_3^{[2]}\\{\bf O}&{\bf B}_2{\bf V}_3^{[1]}&{\bf
V}_3^{[2]}\end{array}\right]\left[\begin{array}{c}{\bf
\lambda}_1\\{\bf \lambda}_2\\{\bf \lambda}_3\end{array}\right]={\bf
0}
\end{eqnarray}
Equivalently we can rewrite it as,
\begin{equation}
\left\{
\begin{array}{rl}
-{\bf B}_1{\bf V}_3^{[1]}{\bf \lambda}_1={\bf V}_3^{[2]}{\bf \lambda}_3\\
-{\bf B}_2{\bf V}_3^{[1]}{\bf \lambda}_2={\bf V}_3^{[2]}{\bf
\lambda}_3 \label{eqn5}
\end{array}
\right.
\end{equation}
This implies that the vector ${\bf V}_3^{[2]}{\bf \lambda}_3$ lies
in the intersection of column spaces of ${\bf V}_3^{[2]}$, ${\bf
B}_1{\bf V}_3^{[1]}$ and ${\bf B}_2{\bf V}_3^{[1]}$. Mathematically,
we have
\begin{eqnarray}
{\bf V}_3^{[2]}{\bf \lambda}_3\in (span({\bf V}_3^{[2]})\cap
span({\bf B}_1{\bf V}_3^{[1]}))\cap (span({\bf V}_3^{[2]})\cap
span({\bf B}_2{\bf V}_3^{[1]}))\\
\Longrightarrow {\bf V}_3^{[2]}{\bf \lambda}_3\in span({\bf
V}_3^{[2]})\cap span({\bf B}_1{\bf V}_3^{[1]})\cap span({\bf
B}_2{\bf V}_3^{[1]})\label{eqn6}
\end{eqnarray}
Since ${\bf V}_3^{[1]}$ is generated randomly, it can be easily seen
that the dimension of matrix $\left[{\bf B}_1{\bf V}_3^{[1]}~{\bf
B}_2{\bf V}_3^{[1]}\right]$ is 6 almost surely. Thus the
intersection of two column spaces of ${\bf B}_1{\bf V}_3^{[1]}$ and
${\bf B}_2{\bf V}_3^{[1]}$ has $4+4-6=2$ dimensions. Recall that
${\bf V}_3^{[2]}$ is also chosen randomly and independently with
${\bf B}_1,{\bf B}_2$ and ${\bf V}_3^{[1]}$, we can conclude that
$span({\bf V}_3^{[2]})$ and $span({\bf B}_1{\bf V}_3^{[1]})\cap
span({\bf B}_2{\bf V}_3^{[1]})$ only have null intersection almost
surely. Hence ${\bf \lambda}_3={\bf 0}$. Substituting it back to
({\ref{eqn5}}), we have
\begin{eqnarray}
{\bf \lambda}_1={\bf \lambda}_2={\bf \lambda}_3={\bf 0}
\end{eqnarray}
Therefore, (\ref{eqn7}) is a full rank matrix almost surely.

Second we consider $j_2=1$. Following the similar analysis, we just
need to show the following matrix has full rank almost surely.
\begin{eqnarray}
\left[\begin{array}{ccc}{\bf B}_1{\bf V}_3^{[1]}&{\bf O}&{\bf
V}_1^{[2]}\\{\bf O}&{\bf B}_2{\bf V}_3^{[1]}&{\bf
V}_1^{[2]}\end{array}\right]\label{eqn9}
\end{eqnarray}
Recall again how is ${\bf V}_1^{[2]}$ generated.
\begin{eqnarray}
\left[\begin{array}{c}{\bf V}_1^{[2]}\\{\bf
V}_2^{[2]}\end{array}\right]={\left[{\bf H}_1^{[2]T}~{\bf
H}_2^{[2]T}\right]}^{-1}{\bf H}_3^{[2]T}{\bf
V}_3^{[2]}=\left[\begin{array}{c}{\bf C}_1{\bf V}_3^{[2]}\\{\bf
C}_2{\bf V}_3^{[1]}\end{array}\right]
\end{eqnarray}
where ${\bf C}_1$ is a $6\times 6$ full rank matrix with the form
\begin{equation}
\left\{
\begin{array}{rl}
{\bf C}_1&\!\!\!\!=\left[\begin{array}{rr}\mbox{Re}\{c_1\}&-\mbox{Im}\{c_1\}\\ \mbox{Im}\{c_1\}&\mbox{Re}\{c_1\}\end{array}\right]\otimes {\bf I}_{3\times 3}\\
c_1&\!\!\!\!=det(\left[{\bf h}_3^{[2]T}~{\bf
h}_2^{[2]T}\right])/det(\left[{\bf h}_1^{[2]T}~{\bf
h}_2^{[2]T}\right])
\end{array}
\right.
\end{equation}
Substitute ${\bf V}_1^{[2]}={\bf C}_1{\bf V}_3^{[2]}$ into
({\ref{eqn9}}), and multiplying $\left[{\bf C}_1\otimes {\bf
I}_{2\times 2}\right]^{-1}$ to the left hand side of ({\ref{eqn9}})
does not change its rank. We thus just need to show
\begin{eqnarray}
\left[\begin{array}{ccc}{\bf C}_1^{-1}{\bf B}_1{\bf V}_3^{[1]}&{\bf
O}&{\bf V}_3^{[2]}\\{\bf O}&{\bf C}_1^{-1}{\bf B}_2{\bf
V}_3^{[1]}&{\bf V}_3^{[2]}\end{array}\right]
\end{eqnarray}
has full rank almost surely. This can be easily seen to be true
since $span({\bf V}_3^{[2]})\cap span({\bf C}_1^{-1}{\bf B}_1{\bf
V}_3^{[1]})\cap span({\bf C}_1^{-1}{\bf B}_2{\bf V}_3^{[1]})$ is
only the null vector almost surely.

Overall, we can achieve a total of
$(\frac{4}{2}+\frac{4}{2})\frac{1}{3}=\frac{4}{3}$ DoF almost
surely. \hfill\QED

\noindent{\it Remark:} Note that the similar alignment scheme does
not work if we apply it with symmetric signaling to the original
complex channel with 3 channel extensions. The reason is that even
though signals can still be aligned at the transmitter, the desired
signal are aligned at the receiver as well. To see this, consider
that ({\ref{eqn6}}) can be also obtained in this case, but here
${\bf V}_3^{[1]}$ and ${\bf V}_3^{[2]}$ are two $3 \times 2$ complex
matrices. ${\bf B}_1$ and ${\bf B}_2$ turn out to be in the form of
${\bf B}_m=b_m{\bf I}_{3\times 3}$, hence scalar versions of the
identity matrix. This implies that the intersection of column spaces
of ${\bf B}_1{\bf V}_3^{[1]}$ and ${\bf B}_2{\bf V}_3^{[1]}$ always
has 2 dimensions. Thus, $span({\bf V}_3^{[2]})\cap span({\bf
B}_1{\bf V}_3^{[1]})\cap span({\bf B}_2{\bf V}_3^{[1]})$ always has
1 dimension. In other words, we can always find non-zero vector
${\bf \lambda}_3$ to satisfy ({\ref{eqn6}}) so that (\ref{eqn7}) is
not a full rank matrix. Therefore, the signal vectors at its
intended receiver are linearly dependent among themselves and each
user fails to decode its message.

\section{Some Examples of the Complex Compound MIMO BC}
In Section \ref{sec:complexmiso}, we investigate some cases of the
complex compound MISO BC. The achievable schemes we use to prove
Theorem \ref{theorem:firstconj} and Theorem \ref{theorem:secondconj}
are both interference alignment with asymmetric signaling. Treating
a complex number as a two dimensional vector with real elements, we
have shown that the complex MISO channel can be treated as a real
MIMO channel but the channel matrix has a special rotation
structure. In addition in the achievable scheme of Theorem
\ref{theorem:secondconj}, we also consider the channel extension
such that the channel has a block diagonal structure. If the channel
has no such special structures, i.e. each entry of the channel
matrix is generated i.i.d., the complex compound MISO BC model would
become compound (\emph{generic}) MIMO BC model. Let us consider two
examples of the complex compound MIMO BC.

{\bf Example 1.} For the complex compound MIMO BC with $K=2$ users,
4 antennas at the transmitter, 2 antennas at each receiver, and
$J_1=1$, $J_2=J=3$ generic channel states for user $1,2$
respectively, the exact number of total DoF = 3, almost surely.

{\bf Example 2.} For the complex compound MIMO BC with $K=2$ users,
6 antennas at the transmitter, 3 antennas at each receiver, and
$J_1=J_2=J=3$ generic channel states for each user, the exact number
of total DoF = 4, almost surely.

In fact, after using asymmetric signaling mapping and multiple
channel extensions, the channel models in Theorem
\ref{theorem:firstconj} and Theorem \ref{theorem:secondconj} are as
same as Example 1 and Example 2, respectively, except for the
special structures of the channel. Using the same alignment scheme,
we achieve the DoF stated in two examples above. However, if $J$
increases from 3 to 4, can we still achieve the same DoF with linear
alignment scheme? The following two examples will answer this
question.

{\bf Example 3.} For the complex compound MIMO BC with $K=2$ users,
4 antennas at the transmitter, 2 antennas at each receiver, and
$J_1=1$, $J_2=J=4$ generic channel states for user $1,2$
respectively, a total of 3 DoF can still be achieved, almost surely.

{\bf Example 4.} For the complex compound MIMO BC with $K=2$ users,
6 antennas at the transmitter, 3 antennas at each receiver, and
$J_1=J_2=J=4$ generic channel states for each user, a total of 4 DoF
can still be achieved, almost surely.

Comparing Example 1 (Example 2) with Example 3 (Example 4), the same
DoF are achieved when $J$ increases from 3 to 4. The difference of
the achievable schemes between the case $J=4$ and $J=3$ starts from
how to choose ${\bf V}_3^{[k]}$. Due to the similar analysis for
Example 3 and 4, we only show the achievability for Example 4.

In the model of Example 4, the transmitter still sends 2 data
streams to each user, respectively. In the case $J=3$, we generate
${\bf V}_3^{[k]}$ randomly. In this case, however, we choose ${\bf
V}_3^{[k]}$ in a different way. Let ${\bf B}_i^{[k]}, i=1,\ldots,4$
denote four $3\times 3$ matrices which are determined by
\begin{eqnarray}
\left[\begin{array}{c}{\bf B}_1^{[k]}\\{\bf
B}_2^{[k]}\end{array}\right]={\left[{\bf H}_1^{[k]T}~{\bf
H}_2^{[k]T}\right]}^{-1}{\bf H}_3^{[k]T}\\
\left[\begin{array}{c}{\bf B}_3^{[k]}\\{\bf
B}_4^{[k]}\end{array}\right]={\left[{\bf H}_1^{[k]T}~{\bf
H}_2^{[k]T}\right]}^{-1}{\bf H}_4^{[k]T}
\end{eqnarray}
Then we let
\begin{eqnarray}
span({\bf B}_1^{[k]}{\bf V}_3^{[k]})=span({\bf B}_3^{[k]}{\bf V}_4^{[k]})\\
span({\bf B}_4^{[k]}{\bf V}_4^{[k]})=span({\bf B}_2^{[k]}{\bf
V}_3^{[k]})
\end{eqnarray}
Thus we obtain
\begin{eqnarray}
span({\bf V}_3^{[k]})=span({\bf B}_1^{[k]-1}{\bf B}_3^{[k]}{\bf
B}_4^{[k]-1}{\bf B}_2^{[k]}{\bf V}_3^{[k]})
\end{eqnarray}
This implies that we can choose two eigenvectors of ${\bf
B}_1^{[k]-1}{\bf B}_3^{[k]}{\bf B}_4^{[k]-1}{\bf B}_2^{[k]}$ as the
column vectors of ${\bf V}_3^{[k]}$. After determining ${\bf
V}_3^{[k]}$, we also determine other combining matrices.
\begin{eqnarray}
{\bf V}_1^{[k]}&\!\!\!\!=\!\!\!\!&{\bf B}_1^{[k]}{\bf V}_3^{[k]}\\
{\bf V}_2^{[k]}&\!\!\!\!=\!\!\!\!&{\bf B}_2^{[k]}{\bf V}_3^{[k]}\\
{\bf V}_4^{[k]}&\!\!\!\!=\!\!\!\!&{\bf B}_3^{[k]-1}{\bf
B}_1^{[k]}{\bf V}_3^{[k]}
\end{eqnarray}
It can be easily seen that all column vectors of ${\bf
H}_3^{[k]T}{\bf V}_3^{[k]},{\bf H}_4^{[k]T}{\bf V}_4^{[k]}$ are
aligned in the column space of $\left[{\bf H}_1^{[k]T}{\bf
V}_1^{[k]}~{\bf H}_2^{[k]T}{\bf V}_2^{[k]}\right]$, thus the
dimension of $\left[{\bf H}_1^{[k]T}{\bf V}_1^{[k]}~{\bf
H}_2^{[k]T}{\bf V}_2^{[k]}~{\bf H}_3^{[k]T}{\bf V}_3^{[k]}~{\bf
H}_4^{[k]T}{\bf V}_4^{[k]}\right]$ is 4 almost surely. Therefore, we
can choose beamforming vectors such that no interference is caused
at each user.

Similar to the proof in the case $J=3$ and due to symmetrical
analysis for user 1 and user 2, we only need to prove the following
matrices have full rank almost surely if desired signal vectors are
linearly independent among themselves at each user.
\begin{eqnarray}
\left[{\bf H}_1^{[1]T}{\bf V}_1^{[1]}~{\bf H}_2^{[1]T}{\bf
V}_2^{[1]}~{\bf H}_{j_2}^{[2]T}{\bf
V}_{j_2}^{[2]}\right]~~~j_2=1,\ldots,4\label{eqn3}
\end{eqnarray}
Notice that ${\bf V}^{[2]}_{j_2}$ is designed independent with ${\bf
H}_1^{[1]}$ and ${\bf H}_2^{[1]}$. In addition, ${\bf V}_1^{[1]}$
and ${\bf V}_2^{[1]}$ are independent with ${\bf H}^{[2]}_{j_2}$.
Since all channel matrices do not have special structure,
({\ref{eqn3}}) has full rank almost surely.

\noindent{\it Remark:} In the case $J=4$ of the compound MIMO BC,
${\bf V}_3^{[k]}$ is determined by the eigenvectors of ${\bf
B}_1^{[k]-1}{\bf B}_3^{[k]}{\bf B}_4^{[k]-1}{\bf B}_2^{[k]}$.
Applying the same scheme to the compound MISO broadcast channel
model in Theorem \ref{theorem:secondconj}, we can see that ${\bf
B}_1^{[k]},{\bf B}_2^{[k]},{\bf B}_3^{[k]},{\bf B}_4^{[k]}$ all
become rotation matrices. Thus ${\bf B}_1^{[k]-1}{\bf B}_3^{[k]}{\bf
B}_4^{[k]-1}{\bf B}_2^{[k]}$ which is also a rotation matrix does
not have real eigenvectors almost surely. The same achievable
scheme, therefore, is not applicable to the complex compound MISO
broadcast channel in Section \ref{sec:complexmiso} due to the
special channel structure.

\section{Proof of Theorem \ref{thm:2usercompoundbc}}
\proof Message $W^{[1]}$ intended for user 1 is split into $M$
sub-messages denoted as $W^{[1]}_i, ~i=1,\ldots, M$.  $W^{[1]}_i
~\forall i=1, 2,\ldots, M$ is encoded into $n^{\Gamma}$ data streams
denoted as $X^{[1]}_{ik},~ \forall k=1,\ldots, n^{\Gamma}$ where
$\Gamma=J_2M$. Message for user 2 denoted as $W ^{[2]}$ is encoded
into $(M-1)n^{\Gamma}$ independent data streams $X^{[2]}_k,~\forall
k=1,\cdots, (M-1)n^{\Gamma}$.  For any $\epsilon>0$, let
$\mathcal{C}=\{x: x\in\mathbb{Z}\cap [-
P^{\frac{1-\epsilon}{2(m_n+\epsilon)}},
P^{\frac{1-\epsilon}{2(m_n+\epsilon)}}]\}$ where
$m_n=1+(n+1)^{\Gamma}+(M-1)n^{\Gamma}$. In other words,
$\mathcal{C}$ denotes a set of all integers in the interval $[-
P^{\frac{1-\epsilon}{2(m_n+\epsilon)}},
P^{\frac{1-\epsilon}{2(m_n+\epsilon)}}]$. Each symbol in the data
stream is obtained by uniformly i.i.d. sampling $\mathcal{C}$.

A data stream  $x^{[1]}_i$ is obtained by multiplexing
$X^{[1]}_{ik},~ \forall i=1,2,\ldots, M,~\forall k=1,\ldots,
n^{\Gamma}$ using the same $1\times n^{\Gamma}$ vector $\mathbf{V}$.
Note that all elements of $\mathbf{V}$ are functions of channel
coefficients which will be designed to align interference. A data
stream $x^{[2]}$ is obtained by multiplexing $X^{[2]}_k,~\forall
k=1,\cdots, (M-1)n^{\Gamma}$ using a vector $\mathbf{G}$. Let
$\mathbf{G}=[G_0, G_0^2, \ldots, G_0^{(M-1)n^{\Gamma}}]$ where $G_0$
is a randomly and independently generated real number. Note that
$G_0$ is algebraically independent with all other channel
coefficients over rationals almost surely. In addition, members of
$\mathbf{G}$ are rationally independent. Mathematically, we have
\begin{eqnarray}
x^{[1]}_i&=&\sum_{k=1}^{n^{\Gamma}}V_{k}X^{[1]}_{ik}=\mathbf{V}\mathbf{X}^{[1]}_i, ~\forall i=1,2,\ldots, M\\
x^{[2]}&=&\sum_{k=1}^{(M-1)n^{\Gamma}}G_0^k
X^{[2]}_{k}=\mathbf{G}\mathbf{X}^{[2]}.
\end{eqnarray}
where $\mathbf{V}=[V_1~\cdots~V_{n^{\Gamma}}]$,
$\mathbf{X}^{[1]}_i=[X^{[1]}_{i1}~\cdots~X^{[1]}_{i n^{\Gamma}}]^T$,
and
$\mathbf{X}^{[2]}=[X^{[2]}_{1}~\cdots~X^{[2]}_{(M-1)n^{\Gamma}}]^T$.
After scaling with a factor $A$, $x^{[2]}$ is transmitted with a
beamforming vector $\mathbf{V}^{[2]}$ and $x^{[1]}_i$ is transmitted
from the $i$th antenna (no cooperation is needed among antennas).
Thus, the transmitted signal is
\begin{eqnarray}
\mathbf{x}=A(\mathbf{V}^{[2]}x^{[2]}+\mathbf{X}^{[1]})
\end{eqnarray}
where  $\mathbf{X}^{[1]}=[x^{[1]}_1~\cdots~ x^{[1]}_M]^T$ and
$\mathbf{V}^{[2]}$ with unit norm is chosen such that no
interference is caused at user 1, i.e.,
\begin{eqnarray}
\mathbf{h}^{[1]}_{j_1} \mathbf{V}^{[2]}=0 ~~\forall j_1=1,\ldots,
J_1
\end{eqnarray}
where $\mathbf{h}^{[1]}_{j_1} $ is the row channel vector of user 1.
$A$ is a scalar which is chosen such that the power constraint is
satisfied, i.e.,
\begin{eqnarray}
E[\|\mathbf{x}\|^2]&=&E[A(\mathbf{V}^{[2]}x^{[2]}+\mathbf{X}^{[1]})^TA(\mathbf{V}^{[2]}x^{[2]}+\mathbf{X}^{[1]})]\notag\\
&=&
A^2\big(E[(x^{[2]})^2]+E[(x^{[1]}_1)^2]+\cdots+E[(x^{[1]}_M)^2]\big)\notag\\
&\leq& A^2
\underbrace{(\|\mathbf{G}\|^2+M\|\mathbf{V}\|^2)}_{\lambda^2}P^{\frac{1-\epsilon}{m_n+\epsilon}}\notag\\
&\leq& P\\
\Rightarrow A &\leq&
\frac{1}{\lambda}P^{\frac{m_n+2\epsilon-1}{2(m_n+\epsilon)}}
\end{eqnarray}

Let us first consider user 2. The received signal at receiver 2
under state $j_2$  is given by
\begin{eqnarray}
y^{[2]}_{j_2} &=& A(\mathbf{h}^{[2]}_{j_2}(\mathbf{V}^{[2]}x^{[2]}+\mathbf{X}^{[1]}))+z^{[2]}_{j_2} \notag\\
          &=&A(
          \underbrace{\mathbf{h}^{[2]}_{j_2}\mathbf{V}^{[2]}}_{h^{[2]'}_{j_2}}x^{[2]}+\mathbf{h}^{[2]}_{j_2}\mathbf{X}^{[1]})+z^{[2]}_{j_2}\notag\\
          &=&A(h^{[2]'}_{j_2}\mathbf{G}\mathbf{X}^{[2]}+h^{[2]}_{j_21}\mathbf{V}\mathbf{X}^{[1]}_1+\cdots+h^{[2]}_{j_2M}\mathbf{V}\mathbf{X}^{[1]}_M)+z^{[2]}_{j_2}~~j_2=1,\ldots,
          J_2
\end{eqnarray}
where $\mathbf{h}^{[2]}_{j_2}=[h^{[2]}_{j_21}\cdots
h^{[2]}_{j_2M}]$. In order to get $(M-1)n^{\Gamma}$ interference
free dimensions for user 2 in a total of
$1+(M-1)n^{\Gamma}+(n+1)^{\Gamma}$ dimensional space, we align all
interference into a $(n+1)^{\Gamma}$ dimensional subspace which is
spanned by members of a vector $\mathbf{U}$:
\begin{eqnarray}
\textrm{span}(h^{[2]}_{j_2k}\mathbf{V}) &\subset&
\textrm{span}(\mathbf{U}) ~~~\forall j_2=1,\cdots,
J_2~~k=1,2,\ldots, M
\end{eqnarray}
From Lemma \ref{lemma:ia}, we construct $\mathbf{V}$ and
$\mathbf{U}$ as follows:
\begin{eqnarray}
\mathbf{V}&=&\big\{\prod(h_{j_2k})^{\alpha_{j_2k}}:
\forall \alpha_{j_2k} \in \{1,2,\ldots,n\}, ~ j_2=1,\ldots, J_2, ~ k=1,2,\ldots, M\big\}\\
\mathbf{U}&=&\big\{\prod(h_{j_2k})^{\alpha_{j_2k}}: \forall
\alpha_{j_2k} \in \{1,2,\ldots,n+1\}, ~ j_2=1,\ldots, J_2, ~
k=1,2,\ldots, M\big\}
\end{eqnarray}
After interference alignment, the effective received signal is
\begin{eqnarray}
y^{[2]}_{j_2}=A(h^{[2]'}_{j_2}\mathbf{G}\mathbf{X}^{[2]}+\mathbf{U}\mathbf{\bar{X}}^{[1]})+z^{[2]}_{j_2}
\end{eqnarray}
where each element of the column vector $\mathbf{\bar{X}}^{[1]}$ is
the sum of all interference along the same direction and is an
integer. Since members of $\mathbf{G}$ are generated independently
with $\mathbf{U}$, all members of $h^{[2]'}_{j_2}\mathbf{G}$ and
$\mathbf{U}$ are distinct, but none of them is equal to 1. Thus,
regardless of the state at user 2, it can achieve
$\frac{(M-1)n^{\Gamma}}{1+(n+1)^{\Gamma}+(M-1)n^{\Gamma}}$ DoF.  As
$n\rightarrow \infty$, $\frac{M-1}{M}$ DoF can be achieved.

Now consider the received signal at user 1 under state $j_1$:
\begin{eqnarray}
y^{[1]}_{j_1} &=& A\mathbf{h}^{[1]}_{j_1}\mathbf{x}+z^{[1]}_{j_1}\notag\\
          &=& A\mathbf{h}^{[1]}_{j_1}\mathbf{X}^{[1]}+z^{[1]}_{j_1}\notag\\
          &=&A(h^{[1]}_{j_11}x^{[1]}_1+\cdots+h^{[1]}_{j_1M}x^{[1]}_M)+z^{[1]}_{j_1}\notag\\
          &=&A(h^{[1]}_{j_11}\mathbf{V}\mathbf{X}^{[1]}_1+\cdots+h^{[1]}_{j_1M}\mathbf{V}\mathbf{X}^{[1]}_M)+z^{[1]}_{j_1}~~j_1=1,\ldots,
          J_1.
\end{eqnarray}
where $\mathbf{h}^{[1]}_{j_1}=[h^{[1]}_{j_11} \cdots
h^{[1]}_{j_1M}]$. It can be easily seen that elements of
$h^{[1]}_{j_11}\mathbf{V},h^{[1]}_{j_12}\mathbf{V},\ldots
h^{[1]}_{j_1M}\mathbf{V}$ are all distinct since members of
$\mathbf{V}$ do not contain $h^{[1]}_{j_11},\ldots, h^{[1]}_{j_1M}$.
In addition, none of them is equal to 1. Thus, regardless of the
channel realization at receiver 1, it can achieve
$\frac{Mn^{\Gamma}}{1+(M-1)n^{\Gamma}+(n+1)^{\Gamma}}$ DoF almost
surely. As $n\rightarrow \infty$, 1 DoF can be achieved. \hfill\QED

\section{Proof for Theorem \ref{thm:compoundX}}
\proof The message from transmitter $i, \forall i=1,\ldots, M$ to
receiver $j, \forall j=1,\ldots, N$ denoted as $W^{[ji]}$ is encoded
into $n^{\Gamma_j}$ independent data streams where $\Gamma_j=M(J_1+
J_2+ \cdots+ J_{j-1}+ J_{j+1}+ \cdots+ J_N)$. Let $X^{[ji]}_k$
denote the symbol of $k$th data stream from transmitter $i$ to
receiver $j$. For any $\epsilon>0$, let $\mathcal{C}=\{x:
x\in\mathbb{Z}\cap [- P^{\frac{1-\epsilon}{2(m_n+\epsilon)}},
P^{\frac{1-\epsilon}{2(m_n+\epsilon)}}]\}$ where $m_n=1+
\max_j\big(Mn^{\Gamma_j}+(N-1)(n+1)^{\Gamma_j}  \big)$. In other
words, $\mathcal{C}$ denotes a set of all integers in the interval
$[- P^{\frac{1-\epsilon}{2(m_n+\epsilon)}},
P^{\frac{1-\epsilon}{2(m_n+\epsilon)}}]$. Each symbol in the data
stream is obtained by uniformly i.i.d. sampling $\mathcal{C}$.

At transmitter $i$,  the transmitted signal to receiver $j$ is
obtained by multiplexing different data streams $X^{[ji]}_{b_j},
\forall b_j=1,\ldots, n^{\Gamma_j}$ using a $1\times n^{\Gamma_j}$
vector $\mathbf{V}^{[j]}$. After scaling with a factor $A$, the
transmitted signal at transmitter $i$ is
\begin{eqnarray}
x^{[i]} &=& A\sum_{j=1}^{N}\sum_{b_j=1}^{n^{\Gamma_j}}V^{[j]}_{b_j}X^{[ji]}_{b_j}\\
&=&
A(\mathbf{V}^{[1]}\mathbf{X}^{[1i]}+\cdots+\mathbf{V}^{[N]}\mathbf{X}^{[Ni]})
~~~ i=1,2,\ldots,M
\end{eqnarray}
where $\mathbf{X}^{[ji]}=[X^{[ji]}_1 X^{[ji]}_2 \cdots
X^{[ji]}_{n^{\Gamma_j}}]^T$  and
$\mathbf{V}^{[j]}=[V^{[j]}_1~V^{[j]}_2 \cdots
V^{[j]}_{n^{\Gamma_j}}] ~\forall j= 1, \ldots, N, ~i=1,\ldots,M$.
$A$ is a scalar which is designed such that the power constraints
are satisfied, i.e.,
\begin{eqnarray}
E((x^{[i]})^2)\leq P ~~\forall i=1,2,\cdots, M
\end{eqnarray}
which can be bounded as
\begin{eqnarray}
E((x^{[i]})^2)\leq A^2 P^{\frac{1-\epsilon}{m_n+\epsilon}}
\sum_{k=1}^N \|\mathbf{V}^{[k]}\|^2\leq P
\end{eqnarray}
Let $\lambda^2= \sum_{k=1}^N \|\mathbf{V}^{[k]}\|^2$ which is a
constant, then
\begin{eqnarray}
A^2P^{\frac{1-\epsilon}{m_n+\epsilon}} \lambda^2 &\leq& P\\
\Rightarrow A &=& \frac{1}{\lambda
}P^{\frac{m_n-1+2\epsilon}{2(m_n+\epsilon)}}
\end{eqnarray}

The received signal at receiver $j$  under state $k_j=1,\ldots, J_j$
is given by:
\begin{eqnarray}
y^{[j]}_{k_j} &=&\sum_{i=1}^{M}h^{[ji]}_{k_j}x_i+z^{[j]}_{k_j}\notag\\
          &=&A( \underbrace{\sum_{i=1}^{M}h^{[ji]}_{k_j}\mathbf{V}^{[j]}\mathbf{X}^{[ji]}}_{\textrm{desired
          signal}}+\underbrace{\sum_{l\neq
          j}h_{k_j}^{[j1]}\mathbf{V}^{[l]}\mathbf{X}^{[l1]}+\sum_{l\neq
          j}h_{k_j}^{[j2]}\mathbf{V}^{[l]}\mathbf{X}^{[l2]}+\cdots+\sum_{l\neq
          j}h_{k_j}^{[jM]}\mathbf{V}^{[l]}\mathbf{X}^{[lM]}}_{\textrm{interference}})+z^{[j]}_{k_j}\notag
\end{eqnarray}
In order to get $M n^{\Gamma_j}$ interference free dimensions in a
total of $1+ \max_j\big(Mn^{\Gamma_j}+(N-1)(n+1)^{\Gamma_j}  \big)$
dimensional space, we align all interference into a
$(N-1)(n+1)^{\Gamma_j}$ dimensional subspace which is spanned by
members of $\mathbf{U}^{[1]},\ldots,\mathbf{U}^{[j-1]},
\mathbf{U}^{[j-1]},\ldots, \mathbf{U}^{[N]} $. Thus, we choose
following alignment equations  at receiver $j$ under state $k_j$ :
$\forall i=1,\ldots, M$
\begin{eqnarray}
\left\{\begin{array}{ccc}
\textrm{span}(h^{[ji]}_{k_j}\mathbf{V}^{[1]}) &\subset& \textrm{span}(\mathbf{U}^{[1]})\\
&\vdots&\\
\textrm{span}(h^{[ji]}_{k_j} \mathbf{V}^{[j-1]}) &\subset& \textrm{span}(\mathbf{U}^{[j-1]})\\
\textrm{span}(h^{[ji]}_{k_j} \mathbf{V}^{[j+1]}) &\subset& \textrm{span}(\mathbf{U}^{[j+1]})\\
&\vdots&\\
\textrm{span}(h^{[ji]}_{k_j}\mathbf{V}^{[N]}) &\subset&
\textrm{span}(\mathbf{U}^{[N]})
\end{array}\right.
\end{eqnarray}
This alignment is illustrated in Figure \ref{fig:compoundxia}. As we
can see,  $h^{[ji]}_{k_j}\mathbf{V}^{[1]},\ldots, h^{[ji]}_{k_j}
\mathbf{V}^{[j-1]}, h^{[ji]}_{k_j} \mathbf{V}^{[j+1]},\ldots,
h^{[ji]}_{k_j}\mathbf{V}^{[N]}$ corresponding to the $i$th row in
Figure \ref{fig:compoundxia} are interference from transmitter $i$
at receiver $j$ under state $k_j$. From another perspective,
corresponding to each column in Figure \ref{fig:compoundxia}, all
interference along with $\mathbf{V}^{[r]}$ is aligned with
$\mathbf{U}^{[r]}$ where $r=1,\ldots, j-1, j+1, \ldots, N$. We can
rewrite all above interference alignment conditions as,
\begin{eqnarray}
 \textrm{span}(h^{[ri]}_{k_r} \mathbf{V}^{[j]}) &\subset&
 \textrm{span}(\mathbf{U}^{[j]})
~~  r, j=1,\ldots, N,~ ~ r\neq j,~~  i=1,\ldots, M, ~~ k_r=1,\ldots,
J_r.
\end{eqnarray}
From Lemma \ref{lemma:ia}, we can construct $\mathbf{V}^{[j]}$ and
$\mathbf{U}^{[j]}$, $\forall j=1,\ldots, N$ as follows:
\begin{eqnarray*}
\mathbf{V}^{[j]}&=&\big\{\prod(h^{[ri]}_{k_r})^{\alpha^{[ri]}_{k_r}}:
\forall \alpha^{[ri]}_{k_r} \in \{1,2,\ldots,n\},  r=1,\ldots, N,~ r\neq j,~ i=1,\ldots, M, ~ k_r=1,\ldots, J_r\big\}\\
\mathbf{U}^{[j]}&=&\big\{\prod(h^{[ri]}_{k_r})^{\alpha^{[ri]}_{k_r}}:
\forall \alpha^{[ri]}_{k_r} \in \{1,2,\ldots,n+1\}, r=1,\ldots, N,~
r\neq j ,~ i=1,\ldots, M, ~ k_r=1,\ldots, J_r\big\}
\end{eqnarray*}
Note that $\mathbf{V}^{[j]}$ and  $\mathbf{U}^{[j]}$ have
$n^{\Gamma_j}$ and $(n+1)^{\Gamma_j}$ elements, respectively, where
$\Gamma_j=M(J_1+ J_2+ \cdots+ J_{j-1}+ J_{j+1}+ \cdots+ J_N)$.

\begin{figure}[!t]
\includegraphics[width=6.5in, trim=0 220 0 100]{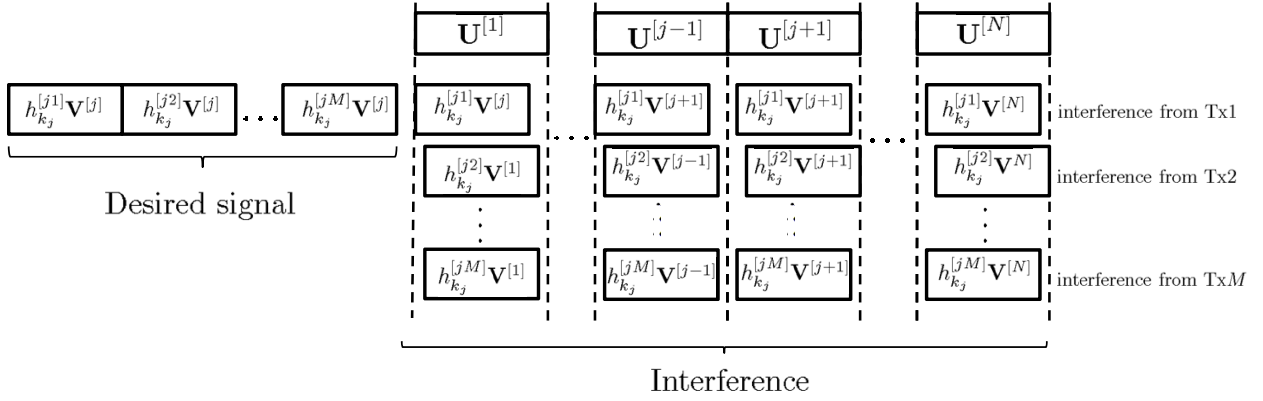}
\caption{Interference alignment at receiver $j$ under state $k_j$
}\label{fig:compoundxia}
\end{figure}
After aligning interference, the equivalent received signal is
\begin{eqnarray}
y^{[j]}_{k_j}=A(\sum_{i=1}^{M}h^{[ji]}_{k_j}
\mathbf{V}^{[j]}\mathbf{X}^{[ji]} +\sum_{l\neq
j}\mathbf{U}^{[l]}\mathbf{X}^{[l]})+z^{[j]}_{k_j}
\end{eqnarray}
where each elements of column vector $\mathbf{X}^{[l]}$ is the sum
of all interference along the same direction.

First note that members of $\mathbf{V}^{[1]},\ldots,
\mathbf{V}^{[N]}$ are distinct. To show that each data stream can
achieve $\frac{1}{m_n}$ DoF, we need to check if all elements of
$h^{[j1]}_{k_j} \mathbf{V}^{[j]}, h^{[j2]}_{k_j}
\mathbf{V}^{[j]},\ldots, h^{[jM]}_{k_j} \mathbf{V}^{[j]}$ and
$\mathbf{U}^{[1]},\ldots,$ $\mathbf{U}^{[j-1]}, \mathbf{U}^{[j+1]},
\ldots, \mathbf{U}^{[N]} $ are distinct. It can be seen that
elements of $h^{[ji]}_{k_j} \mathbf{V}^{[j]}~ \forall i=1,\ldots, M$
are distinct, since $h^{[ji]}_{k_j} $ is not contained in members of
$\mathbf{V}^{[j]}$. In addition, $\mathbf{U}^{[l]}~ \forall l\neq j$
does not have $h^{[li]}_{k_l}, i=1,2,\ldots, M$ while it is
contained in $\mathbf{V}^{[j]}$. Therefore, they are all distinct.
Since none of them is equal to 1, the total number of degrees of
freedom is
\begin{eqnarray}
d&=&\frac{M\sum_{j=1}^N n^{\Gamma_j}}{1+
\max_j\big(Mn^{\Gamma_j}+(N-1)(n+1)^{\Gamma_j}\big)}
\end{eqnarray}
almost surely. As $n \rightarrow \infty$, $d=\frac{MN}{M+N-1}$.
\hfill\QED

\section{Proof of Theorem \ref{thm:compoundic}}
\proof The message from transmitter $i$ to receiver $i$ denoted as
$W^{[i]}$ is encoded into $n^{\Gamma}$ independent data streams
where $\Gamma=(K-1)(J_1+\ldots+J_K)$. Let $X^{[i]}_k$ denote the
symbol of the $k$th data stream from transmitter $i$. For any
$\epsilon>0$, let $\mathcal{C}=\{x: x\in\mathbb{Z}\cap [-
P^{\frac{1-\epsilon}{2(m_n+\epsilon)}},
P^{\frac{1-\epsilon}{2(m_n+\epsilon)}}]\}$ where
$m_n=1+(n+1)^{\Gamma}+n^{\Gamma}$. In other words, $\mathcal{C}$
denotes a set of all integers in the interval $[-
P^{\frac{1-\epsilon}{2(m_n+\epsilon)}},
P^{\frac{1-\epsilon}{2(m_n+\epsilon)}}]$. Each symbol in the data
stream is obtained by uniformly i.i.d. sampling $\mathcal{C}$.

For transmitter $i=1,\ldots, K$, the transmitted signal is obtained
by multiplexing different data streams $X^{[i]}_k, \forall
k=1,\ldots, n^{\Gamma}$ using the same $1\times n^{\Gamma}$ vector
$\mathbf{V}$. Note that all elements of $\mathbf{V}$ are functions
of channel coefficients which will be designed later. Then, the
transmitted signal is
\begin{eqnarray}
x^{[i]} &=& A\sum_{k=1}^{n^{\Gamma}}V_k X^{[i]}_k\\
&=& A\mathbf{V}\mathbf{X}^{[i]} ~~~ i=1,2,\ldots,K
\end{eqnarray}
where $\mathbf{X}^{[i]}=[X^{[i]}_1 X^{[i]}_2 \cdots
X^{[i]}_{n^\Gamma}]^T$ and $\mathbf{V}=[V_1~V_2 \cdots
V_{n^{\Gamma}}]$. $A$ is a scalar which is designed such that the
power constraints are satisfied, i.e.,
\begin{eqnarray}
E(x_i^2)\leq P ~~\forall i=1,2,\cdots, M.
\end{eqnarray}
Since
\begin{eqnarray}
E(x_i^2)\leq A^2 P^{\frac{1-\epsilon}{m_n+\epsilon}}
\|\mathbf{V}\|^2\leq P,
\end{eqnarray}
we have
\begin{eqnarray}
A =
\frac{1}{\|\mathbf{V}\|}P^{\frac{m_n-1+2\epsilon}{2(m_n+\epsilon)}}
\end{eqnarray}

The received signal at receiver $j$ under state $k_j$ is given by:
\begin{eqnarray}
y^{[j]}_{k_j} &=&A(\sum_{i=1}^{K}h^{[ji]}_{k_j}x^{[i]})+z^{[j]}_{k_j}\notag\\
              &=&A( \underbrace{h^{[jj]}_{k_j}\mathbf{V}\mathbf{X}^{[j]}}_{\textrm{desired signal}}+\underbrace{\sum_{i\neq j}h^{[ji]}_{k_j}\mathbf{V}\mathbf{X}^{[i]}}_{\textrm{interference}})+z^{[j]}_{k_j}\
\end{eqnarray}
In order to get $n^{\Gamma}$ interference free dimensions for the
desired signal in a total of $1+n^{\Gamma}+(n+1)^{\Gamma}$
dimension, we align all interference into a $(n+1)^{\Gamma}$
dimensional subspace spanned by members of a  $1\times
(n+1)^{\Gamma}$ vector $\mathbf{U}$:
\begin{eqnarray}
\left\{\begin{array}{ccc}
 \textrm{span}(h^{[j1]}_{k_j}\mathbf{V}) &\subset& \textrm{span}(\mathbf{U})\\
&\vdots& \\
\textrm{span}(h^{[j(j-1)]}_{k_j} \mathbf{V}) &\subset& \textrm{span}(\mathbf{U})\\
\textrm{span}(h^{[j(j+1)]}_{k_j} \mathbf{V}) &\subset& \textrm{span}(\mathbf{U})\\
&\vdots& \\
\textrm{span}(h^{[jK]}_{k_j} \mathbf{V})&\subset&
\textrm{span}(\mathbf{U})
\end{array}\right.
\end{eqnarray}
Equivalently, the above alignment equations can be rewritten as
\begin{eqnarray}
\textrm{span}(h^{[ji]}_{k_j} \mathbf{V}) &\subset& \textrm{span}(
\mathbf{U})~~ i, j=1,\ldots, K,~i\neq j,~k_j=1,\ldots,J_j
\end{eqnarray}
From Lemma \ref{lemma:ia}, we can construct $\mathbf{V}$ and
$\mathbf{U}$ as follows:
\begin{eqnarray}
\mathbf{V}&=&\big\{\prod(h^{[ji]}_{k_j})^{\alpha^{[ji]}_{k_j}}:
\forall \alpha^{[ji]}_{k_j}\in \{1,2,\ldots,n\}, ~ i, j=1,\ldots,
K,~i\neq
j,~k_j=1,\ldots,J_j\big\}\\
\mathbf{U}&=&\big\{\prod(h^{[ji]}_{k_j})^{\alpha^{[ji]}_{k_j}}:
\forall \alpha^{[ji]}_{k_j} \in \{1,2,\ldots,n+1\},~ i, j=1,\ldots,
K,~i\neq j,~k_j=1,\ldots,J_j\big\}
\end{eqnarray}
Note that $\mathbf{V}$ and  $\mathbf{U}$ have $n^{\Gamma}$ and
$(n+1)^{\Gamma}$ elements, respectively, where
$\Gamma=(K-1)(J_1+\ldots+J_K)$. Now after interference alignment,
the received signal is equivalent to
\begin{eqnarray}
y^{[j]}_{k_j}=A(h^{[jj]}_{k_j}\mathbf{V}\mathbf{X}^{[j]}+\mathbf{U}\bar{\mathbf{X}})+z^{[j]}_{k_j}
\end{eqnarray}
where $\bar{\mathbf{X}}$ is a $(n+1)^{\Gamma} \times 1$ vector and
each element of $\bar{\mathbf{X}}$ is the sum of interference along
the same direction. Note that elements of $\mathbf{U}$ do not
contain $h^{[jj]}_{k_j}$ while elements of
$h^{[jj]}_{k_j}\mathbf{V}$ have. Therefore, all elements of
$h^{[jj]}_{k_j}\mathbf{V}$ and $\mathbf{U}$ are distinct. In
addition, none of members of $h^{[jj]}_{k_j}\mathbf{V}$ and
$\mathbf{U}$ is equal to 1. Thus, each user can achieve
$\frac{n^{\Gamma}}{1+n^{\Gamma}+(n+1)^{\Gamma}}$ DoF regardless of
channel realizations almost surely. As $n \rightarrow \infty$, each
user can achieve $\frac{1}{2}$ degrees of freedom almost surely.
 \hfill \QED

\end{document}